\newcommand{\R}{{\mathord{\mathbb R}}}
\newcommand{\N}{{\mathord{\mathbb N}}}
\newcommand{\C}{{\mathord{\mathbb C}}}
\def\eps {\varepsilon}
\def\ph  {\varphi}
\newcommand{\chib}{\overline{\chi}}
\newcommand{\BB}{\mathcal{B}}
\newcommand{\FF}{\mathcal{F}}
\newcommand{\HH}{\mathcal{H}}
\newcommand{\LL}{\mathcal{L}}
\newcommand{\RR}{\mathcal{R}}
\newcommand{\WW}{\mathcal{W}}
\newcommand{\UU}{\mathcal{U}}
\renewcommand{\ker}{{\rm Ker}}
\newcommand{\ran}{{\rm Ran}}
\newcommand{\sprod}[2]{\mbox{$\langle #1,#2 \rangle$}}   
\newcommand{\restricted}{|\grave{}\,}
\def\bchi {\boldsymbol{\chi}}
\def\obchi {\overline{\bchi}}
\def\bk {\mathbf{k}}
\newcommand{\ben}{\begin{displaymath}}
\newcommand{\een}{\end{displaymath}}
\newcommand{\beqn}{\begin{equation}}
\newcommand{\eeqn}{\end{equation}}
\newcommand{\beqna}{\begin{eqnarray*}}
\newcommand{\eeqna}{\end{eqnarray*}}
\def\ran{{\rm Ran} \, }
\def\rea{{\rm Re} \, }
\def\Hat{H_{\rm at}}   
\def\Eat{E_{\rm at}}   
\def\Pat{P_{\rm at}}
\def\Patb{\overline{P}_{\rm at}}
\def\inf{{\rm inf}\,}
\def\const{{\rm const}\,}
\def\supp{\operatorname{supp}}
\newcommand{\um}{{\underline{m}}}
\newcommand{\un}{{\underline{n}}}
\newtheorem{lemma}{Lemma}
\newtheorem{theorem}[lemma]{Theorem}
\newtheorem{prop}[lemma]{Proposition}
\newtheorem{cor}[lemma]{Corollary}
\newtheorem{hypothesis}{Hypothesis}
\author{M.~Griesemer and D.~Hasler}
\title{Analytic Perturbation Theory and Renormalization Analysis of Matter
  Coupled to Quantized Radiation}
\author{\vspace{5pt} M. Griesemer$^1$\footnote{
E-mail: marcel@mathematik.uni-stuttgart.de.} and D.
Hasler$^2$\footnote{E-mail: dh8ud@virginia.edu.} \\
\vspace{-4pt} \small{$1.$ Fachbereich Mathematik,
Universit\"at Stuttgart,} \\ \small{D-70569 Stuttgart, Germany}\\
\vspace{-4pt}
\small{$2.$ Department of Mathematics, University of Virginia,} \\
\small{Charlottesville, VA 22904-4137, USA}\\}
\date{}
\begin{document}
\maketitle


\begin{abstract}
For a large class of quantum mechanical models of matter and
radiation we develop an analytic perturbation theory for
non-degenerate ground states. This theory is applicable, for
example, to models of matter with static nuclei and non-relativistic
electrons that are coupled to the UV-cutoff quantized radiation
field in the dipole approximation. If the lowest point of the
energy spectrum is a non-degenerate eigenvalue of the Hamiltonian, we show
that this eigenvalue is an analytic function of the nuclear
coordinates and of $\alpha^{3/2}$, $\alpha$ being the fine structure constant.
A suitably chosen ground state vector depends
analytically on $\alpha^{3/2}$ and it is twice continuously
differentiable with respect to the nuclear coordinates.
\end{abstract}


\section{Introduction}
When a neutral atom or molecule made from static nuclei and
non-relativistic electrons is coupled to the (UV-cutoff) quantized
radiation field, the least point of the energy spectrum becomes
embedded in the continuous spectrum due to the absence of a photon
mass, but it remains an eigenvalue \cite{GLL, LiebLoss2003}. This
\emph{ground state energy} $E$ depends on the parameters of the
system, such as the fine-structure constant, the positions of
static nuclei, or, in the center of mass frame of a translation
invariant model, the total momentum. The regularity of $E$ as a
function of these parameters is of fundamental importance. For
example, the accuracy of the Born-Oppenheimer approximation, a pillar
of quantum chemistry, depends on the regularity of $E$ and on the
regularity of the ground state projection as functions of the
nuclear coordinates. If $E$ were an isolated eigenvalue, like it
is in quantum mechanical description of molecules without
radiation, then analyticity of $E$ with respect to any of the
aforementioned parameters would follow from regular perturbation
theory. But in QED the energy $E$ is not isolated and the analysis
of its regularity is a difficult mathematical problem.

In the present paper we study the problem of regularity, described
above, in a large class of models of matter and radiation where
the Hamiltonian $H(s)$ depends analytically on complex parameters
$s=(s_1,\ldots,s_{\nu})\in \C^{\nu}$ and is defined for values of
$s$ from a complex neighborhood of a compact set
$K\subset\R^{\nu}$. Important properties of $H(s)$ are, that
$H(\bar{s})=H(s)^*$ and that, for $s\in K$, the lowest point,
$E(s)$, of the spectrum of $H(s)$ is a non-degenerate eigenvalue.
Under further assumptions, described below, we show that $E(s)$ and
the projection operator associated with the eigenspace of
$E(s)$ are \emph{real-analytic} functions of $s$ in a neighborhood
of $K$. In particular, they are of class $C^{\infty}$ in this
neighborhood. We apply this result to the  Hamiltonian of a
molecule with static nuclei and non-relativistic electrons that
are coupled to the quantized radiation field in dipole
approximation. By our choice of atomic units, this Hamiltonian
depends on the fine-structure constant $\alpha$ only though a
factor of $\alpha^{3/2}$ in front of the dipole interaction
operator. Hence the role of the parameter $s$ may be played by
$\alpha^{3/2}$ or, after a well-known unitary deformation argument
\cite{Hunziker1986}, by the nuclear coordinates. The general
theorem described above implies that the ground state energy, if
it is a non-degenerate eigenvalue, depends analytically on
$\alpha^{3/2}$ and the nuclear coordinates. The ground state
projection is analytic in $\alpha^{3/2}$, and twice continuously
differentiable with respect to the nuclear coordinates. We remark that the dipole approximation seems necessary for the
analyticity with respect to a power of $\alpha$ \cite{BFP2}.

A further consequence of our main result concerns the accuracy of
the adiabatic approximation to the time evolution $U_{\tau}$
generated by the Schr\"odinger equation
$$
    i\frac{d}{dt}\ph_t = H(t/\tau)\ph_t,\qquad t\in [0,\tau],
$$
in the limit $\tau\to\infty$. If $H(s)$ satisfies the assumptions
of our result mentioned above with $K=[0,1]$, then the ground
state projection $P(s)$ is of class $C^{\infty}([0,1])$ and hence
the adiabatic theorem without gap assumption implies that
$\sup_{t\in[0,\tau]}\|(1-P(t))U_{\tau}(t)P(0)\|=o(1)$ as
$\tau\to\infty$ \cite{Teufel2001, Avron-Elgart1999}. Previously,
in all applications of the adiabatic theorem without gap
assumption the differentiability of $P(s)$ was enforced or
provided by the special form $H(s)=U(s)HU(s)^{-1}$ of $H(s)$ where
$U(s)$ is a unitary and (strongly) differentiable operator
\cite{Avron-Elgart1998, Avron-Elgart1999, Teufel2002b}.

We now describe our main result in detail. We consider a class of
Hamiltonians $H_g(s):D\subset\HH\to\HH$ depending on a parameter
$s\in V$, where $V=\overline{V}$ is a complex neighborhood of some point
$s_0\in\R^{\nu}$. For each $s\in V$,
\begin{equation*}\label{intro1}
   H_g(s) = \Hat(s)\otimes 1 + 1\otimes H_f + g W(s),
\end{equation*}
with respect to $\HH=\HH_{\rm at}\otimes \FF$, where $\HH_{\rm at}$ is an arbitrary
complex Hilbert space and $\FF$ denotes the symmetric Fock space over
$L^{2}(\R^3\times\{1,2\})$. We assume that $\Hat(\bar{s})=\Hat(s)^{*}$ for
all $s\in V$ and that $(\Hat(s))_{s\in V}$ is an analytic family of type
(A). This means that the domain $D$ of $\Hat(s)$ is independent of $s\in V$ and
that $s\mapsto\Hat(s)\ph$ is analytic for all $\ph\in D$. We assume,
moreover, that $\Eat(s_0)=\inf\sigma(\Hat(s_0))$ is a simple and isolated eigenvalue of $\Hat(s_0)$.
The operator $H_f$ describes the energy of the bosons, and $gW(s)$
the interaction of the particle system described by $\Hat(s)$ and the bosons. In
terms of creation and annihilation operators
$$
   H_f = \sum_{\lambda=1,2}\int |\bk| a^{*}_{\lambda}(\bk)a_{\lambda}(\bk)d^3\bk
$$
and
$$
   W(s) = \sum_{\lambda=1,2}\int G_{\bar{s}}(\bk,\lambda)^{*}\otimes a(\bk,\lambda) + G_s(\bk,\lambda)\otimes a^{*}(\bk,\lambda) d^3\bk,
$$
where, $(\bk,\lambda)\mapsto G_s(\bk,\lambda)$, for each $s\in V$ is an element of
$L^2(\R^3\times\{1,2\},\LL(\Hat))$. We assume that $s\mapsto G_s$ is a bounded analytic
function on $V$ and that
\begin{equation}\label{intro4}
     \sup_{s\in V}\sum_{\lambda=1,2}\int\|G_s(\bk,\lambda)\|^2 \frac{1}{|\bk|^{2+2\mu}}d^3\bk <\infty
\end{equation}
for some $\mu>0$. Based on these assumptions we show that a neighborhood $V_0\subset
V$ of $s_0$ and a positive constant $g_0$ exist such that for $s\in
V_0$ and all $g\in [0,g_0)$ the operator $H_g(s)$ has a
non-degenerate eigenvalue $E_{g}(s)$ and a corresponding
eigenvector $\psi_g(s)$ that are both \emph{analytic functions} of
$s\in V_0$. Moreover $E_g(s) = \inf\sigma(H_g(s))$ for $s\in\R\cap
V$. Before commenting on the proof of this result we briefly
review the literature.

In \cite{BFP2} the dependence on $\alpha$ of the ground state and
the ground state energy, $E$, is studied for non-relativistic atoms that
are minimally coupled to the quantized radiation field. It is
shown that $E$ and a suitably chosen ground state vector have
expansions in asymptotic series of powers of $\alpha$ with
$\alpha$-dependent coefficients that may diverge logarithmically
as $\alpha\to 0$. Smoothness is not expected and hence the dipole
approximation seems necessary for our analyticity result. 
Much earlier, in \cite{Froe2}, Fr\"ohlich obtained results on the regularity of
the ground state energy with respect to the total momentum $P$ for the system
of a single quantum particle coupled linearly to a quantized field of massless
scalar bosons. Let $H(P)$ denote the Hamiltonian describing this system at
fixed total momentum $P\in\R^3$. The spectrum of $H(P)$ is of
the form $[E(P),\infty)$ but $E(P)$ is not an eigenvalue for
$P\neq 0$ \cite{Froe2} (see \cite{HasAbs} for similar results on positive
ions). For a non-relativistic particle of mass $M$, Fr\"ohlich shows that 
$P\mapsto E(P)$ is differentiable a.e. in $\{|P|<(\sqrt{3}-1)M\}$, and that $\nabla
E(P)$ is locally Lipschitz \cite[Lemma 3.1]{Froe2}.
This work was recently and independently continued by Alessandro Pizzo and Thomas Chen
for systems with a fixed ultraviolet cutoff \cite{Pizzo2003, Chen2001}. After a unitary, $P$-dependent transformation of
the Hamiltonian $H(P)$, Pizzo obtains a ground state vector
$\phi^{\sigma}(P)$ that is H\"older continuous with respect to $P$
uniformly in an infrared cutoff $\sigma>0$. Chen studied the regularity of
$E(P,\sigma)$ for a non-relativistic particle coupled minimally to the
quantized radiation field \cite{Chen2001}. He estimates $|\partial^{\beta}_{|p|}(E(|p|,\sigma)-p^2/2)|$
uniformly in $\sigma>0$ for $\beta\leq 2$.
He also asserts that $E(p,\sigma)$ is of class $C^2$ even for $\sigma=0$. In
\cite{Hunziker1986} Hunziker proves analyticity with respect to the nuclear
coordinates for non-relativistic molecules \emph{without} radiation. The ground state energy is isolated but the
Hamiltonian is \emph{not analytic} with respect the nuclear coordinates. It only
becomes analytic after a suitable unitary deformation (see
Section~\ref{sec:dipole}) introduced by Hunziker.

The results of the present paper are derived using the
renormalization technique of Bach et al.~\cite{bacchefrosig:smo,
bacfrosig:ren}, in a new version that we take from \cite{FGSig}.  
Like the authors of \cite{FGSig} we use a simplified renormalization map that
consists of a Feshbach-Schur map and a scaling transformation only.
In the corresponding spectral analysis
the Hamiltonian is diagonalized, with
respect to $H_f$, in a infinite sequence of renormalization steps.
In each step the off-diagonal part becomes smaller,
thanks to \eqref{intro4}, and the spectral parameter is adjusted
to enforce convergence of the diagonal part. This method provides
a fairly explicit construction of an eigenvector of $H(s)$, even
for complex $s$, where $H(s)$ is not self-adjoint. We show first,
that the parameters of the renormalization analysis can be chosen
independent of $s$ and $g$ in neighborhoods of $s=s_0$ and $g=0$,
second, that all steps of the renormalization analysis preserve
analyticity, and third, that all limits taken are uniform in $s$,
which implies analyticity of the limiting functions. On a
technical level, these three points are the main achievements
of this paper.

It seems unlikely that another approach, not based on a
renormalization analysis would yield a result similar to ours. The
proof of analyticity requires the construction of an eigenvector
for complex values of $s$ where $H(s)$ is not self-adjoint and
hence, variational methods, for example, are not applicable. There is, of
course, the tempting alternative approach to first introduce a
positive photon mass $\sigma$ to separate the least energy from
the rest of the spectrum. But the neighborhood of analyticity
obtained in this way depends on the size of $\sigma$ and vanishes
in the limit $\sigma\to 0$.

We conclude this introduction with a description of the
organization of this paper along with the strategy of our proof.

In Section \ref{sec:model} we introduce the class of Hamiltonians
$(H(s))_{s\in V}$, we formulate all hypotheses, and state the main
results. In Section~\ref{sec:dipole} they are applied to
non-relativistic QED in dipole approximation to prove our results
mentioned above on regularity with respect to $\alpha^{3/2}$ and
the nuclear coordinates.

Section~\ref{sec:fesh} describes the smoothed Feshbach transform
$F_{\chi}(H)$ of an operator $H$ and the isomorphism $Q(H)$
between the kernels of $F_{\chi}(H)$ and $H$ (isospectrality of the Feshbach
transform). The transform $H\mapsto F_{\chi}(H)$ was discovered in \cite{bacchefrosig:smo}, and generalized to the
form needed here in \cite{grihas:smo}.

In Section~\ref{sec:effect} we perform a first Feshbach
transformation on $H(s)-z$ to obtain an effective Hamiltonian
$H^{(0)}[s,z]$ on $\HH_{\rm red}=P_{[0,1]}(H_f)\FF$, $\FF$ being the
Fock-space. We show that $H^{(0)}(s,z)$ is analytic in $s$ and
$z$. By the isospectrality of the Feshbach transform, the eigenvalue problem
for $H(s)$ is now reduced to finding a value $z(s)\in\C$ such that $H^{(0)}(s,z(s))$
has a nontrivial kernel.

In Section \ref{sec:renorm} we introduce a Banach space $\WW_{\xi}$ and a
linear mapping $H:\WW_{\xi}\to\LL(\HH_{\rm red})$. The renormalization transformation
$\RR_{\rho}$ is defined on a polydisc $\BB(\rho/2,\rho/8,\rho/8)\subset
H(\WW_{\xi})$ as the composition of a Feshbach transform and a rescaling
$\bk\mapsto\rho\bk$ of the photon momenta $\bk$. $\rho\in (0,1)$ is the factor
by which the energy scale is reduced in each renormalization step. $\RR_{\rho}$ takes
values in $\LL(\HH_{\rm red})$ and, like the Feshbach transform, it is
isospectral.

In Section \ref{sec:analyt-fesh} it is shown that the analyticity
of a family of Hamiltonians is preserved under the renormalization
transformation. This is one of the key properties on which our strategy is based.

Sections~\ref{sec:iterate} and \ref{sec:ev} are devoted to the solution of our
spectral problem for $H^{(0)}(s,z)$ by iterating the Renormalization
map. Since this procedure is pointwise in $s$ with estimates that hold
uniformly on $V$, we drop the parameter $s$ for notational simplicity.

In Sections~\ref{sec:iterate} we define $H^{(n)}[s,z]=\RR^n H^{(0)}[s,z]$ for values of the spectral parameter $z$ from non-empty sets
$U_n(s)$. These sets are nested, $U_n(s)\supset U_{n+1}(s)$, and they shrink to a
point, $\cap_{n} U_n(s)=\{z_{\infty}(s)\}$. Since $H^{(n)}(z_{\infty})\to
\const H_f$ as $n\to\infty$ in the norm of $\LL(\HH_{\rm red})$ and since the vacuum
$\Omega$ is an eigenvector $H_f$ with eigenvalue zero, it follows, by the isospectrality of $\RR$, that zero is an eigenvalue
of $H^{(n)}(z_{\infty})$ for all $n$.

In Section~\ref{sec:ev} a vector $\ph_n$ in the kernel of
$H^{(n)}(z_{\infty})$ is computed by compositions of scaling
transformations and mappings $Q(H^{(k)}(z_{\infty}))$, $k\geq n$, applied to $\Omega$. $\ph_{gs}=Q(H^{(0)}(z_{\infty}))\ph_0$
is an eigenvector of $H$ with eigenvalue $z_{\infty}$.

In Section~\ref{sec:analyt-ev} we show that $s\mapsto
z_{\infty}(s)$ is analytic and that $Q(H^{(n)}(z_{\infty}(s)))$ maps analytic
vectors to analytic vectors. Since the vacuum $\Omega$ is trivially analytic
in $s$, it follows that $\ph_{gs}(s)$ is analytic in $s$.

In the Appendices~\ref{sec:neighbor} and \ref{sec:aux} we collect technical auxiliaries and for
completeness we give a proof of $H^{(n)}(z_{\infty})\to
\const H_f$ as $n\to\infty$, although this property is not used explicitly.
\\

\noindent \emph{Acknowledgment.} We thank J\"urg Fr\"ohlich and
Israel Michael Sigal for numerous discussions on the
renormalization technique.  
M.G. also thanks Volker Bach for explaining the results of \cite{BFP2}, and
Ira Herbst for the hospitality at the University of
Virginia, were large parts of this work were done.


\newpage
\section{Assumptions and Main Results}
\label{sec:model}

We consider families of (unbounded) operators
$H(s):D(H(s))\subset\HH\to\HH$, $s\in V$, where $V\subset\C^{\nu}$ is open,
symmetric with respect to complex conjugation and $V\cap\R^{\nu}\neq\emptyset$.
The Hilbert space $\HH$ is a tensor product
$$
    \HH = \HH_{\rm at}\otimes\FF,\qquad
    \FF = \bigoplus_{n=0}^\infty S_n(\otimes^n\mathfrak{h}),
$$
of an arbitrary, separable, complex Hilbert space $\HH_{\rm at}$
and the symmetric Fock space $\FF$ over the Hilbert space $\mathfrak{h}:= L^2(\R^3\times\{1,2\};\C)$
with norm given by
$$
  \|h\|^2 :=\sum_{\lambda=1,2}\int|h(\mathbf{k},\lambda)|^2
    d^3\mathbf{k},\qquad h\in \mathfrak{h}.
$$
Here $S_0(\otimes^0\mathfrak{h}):=\C$ and for $n\geq 1$,
$S_n\in\LL(\otimes^n\mathfrak{h})$ denotes the
orthogonal projection onto the subspace left invariant by all
permutation of the $n$ factors of $\mathfrak{h}$. To simplify our notation we
set
$$
  k:=(\mathbf{k},\lambda),\qquad\int dk :=
  \sum_{\lambda=1,2}\int\,d^3\mathbf{k},\qquad |k|:=|\mathbf{k}|,
$$
throughout the rest of this paper.

For each $s\in V$, the operators $H(s)$ is a sum
\begin{equation}\label{ham}
   H_g(s) = \Hat(s)\otimes 1 + 1\otimes H_f + g W(s),
\end{equation}
of a closed operator $\Hat(s)$ in $\HH_{\rm at}$, the second
quantization, $H_f$, of the operator $\omega$ on $L^2(\R^3\times\{1,2\})$ of
multiplication with
$$
     \omega(k) = |k|,
$$
and an interaction operator $gW(s)$, $g\geq 0$ being a coupling
constant. The operator $W(s)$ is the sum
$$
    W(s) = a(G_{\bar{s}}) + a^{*}(G_s)
$$
of an annihilation operator, $a(G_{\bar{s}})$, and a creation
operator, $a^{*}(G_s)$, associated with an operator $G_s \in
\LL(\HH_{\rm at},\HH_{\rm at}\otimes\mathfrak{h})$. The creation operator,
$a^{*}(G_{s})$, as usual, is defined as the closure of the linear
operator in $\HH$ given by
$$
a^*(G_s)(\varphi \otimes \psi ):=\sqrt{n}S_{n}(G_s \varphi \otimes
\psi),
$$
if $\varphi\in \HH_{\rm at}$ and $\psi \in
S_{n-1}(\otimes^{n-1}\mathfrak{h})$. The annihilation operator $a(G_{s})$ is the
adjoint of $a^{*}(G_{s})$.

Hypotheses~\ref{hyp:G} below will imply that $H_g(s)$ is well defined on
$D(\Hat(s))\otimes D(H_f)$ and closable. To formulate it,
some preliminary remarks are necessary. Let $L^2(\R^3,\LL(\HH_{\rm
at}))$ be the Banach space of (weakly) measurable and square integrable
functions from $\R^3$ to $\LL(\HH_{\rm
  at})$. Every element $T$ of this space defines a linear operator $T:\HH_{\rm
  at}\to L^2(\R^3,\HH_{\rm at})$ by
$$
   (T\ph)(k) := T(k)\ph.
$$
This operator is bounded and $\|T\|\leq \|T\|_2$. Since $L^2(\R^3,\HH_{\rm
  at})\simeq \HH_{\rm at}\otimes\mathfrak{h}$, we may consider $T$ as an element
of $\LL(\HH_{\rm at},\HH_{\rm at}\otimes\mathfrak{h})$ and hence
$L^2(\R^3,\LL(\HH_{\rm at}))$ as a subspace embedded in
$\LL(\HH_{\rm at},\HH_{\rm at}\otimes\mathfrak{h})$.
\begin{hypothesis} \label{hyp:G}
\emph{The mapping $s \mapsto G_s$ is an bounded analytic function on $V$ with
values in $L^2(\R^3,\LL(\HH_{\rm at}))$, and there exists a $\mu>0$, such that
\begin{equation*}
    \sup_{s \in V} \int\frac{1}{|k|^{2+2\mu}} \|G_s(k)\|^2\, dk <\infty.
\end{equation*}}
\end{hypothesis}

By Lemma~\ref{lm:A1}, $\|a^{\#}(G_s)(H_f+1)^{-1/2}\|\leq
\|G_s\|_{\omega}$, where
$$
     \|G_s\|_{\omega}^2:= \int_{\R^3}\left\|G_s(k) \right\|^2 (|k|^{-1}+1)\,
     dk.
$$
Hence Hypothesis~\ref{hyp:G} implies that $W(s)$ and $W(s)^*$
are well defined on $\HH_{\rm at}\otimes D(H_f)$. It follows that
$H_g(s)$ is defined on $D(\Hat(s))\otimes D(H_f)$ and that the
adjoint of this operator has a domain which contains
$D(\Hat(s))^{*}\otimes D(H_f)$. This subspace is dense because
$\Hat(s)$ is closed. That is, $H_g(s):D(\Hat(s))\otimes
D(H_f)\subset\HH\to \HH$ has a densely defined adjoint, and hence it
is closable.

\begin{hypothesis}\label{hyp:H}
\emph{\begin{itemize}
\item[(i)] $\Hat(s)$ is an analytic family of operators in the
sense of Kato and $\Hat(s)^{*}=\Hat(\bar{s})$ for all $s\in V$. In particular, $\Hat(s)$
is self-adjoint for $s\in \R^{\nu}\cap V$.
\item[(ii)] There exists a point $s_0\in V\cap\R^{\nu}$ such that
$\Eat(s_0):=\inf\sigma(\Hat(s_0))$ is an isolated, non-degenerate
eigenvalue of $\Hat(s_0)$.
\end{itemize}}
\end{hypothesis}

For the notion of an analytic family of operators in the sense of Kato we
refer to \cite{reesim:ana}. The definition given there readily generalizes to
several complex variables. We recall that a function of several complex
variables is called analytic if it is analytic in each variable separately.

By Hypothesis~\ref{hyp:H}, (ii), and the Kato-Rellich theorem of analytic
perturbation theory \cite{reesim:ana}, there is exactly one point $E_{\rm at}(s)$ of
$\sigma(\Hat(s))$ near $E_{\rm at}(s_0)$, for $s$ near $s_0$, and this point is
a non-degenerate eigenvalue of $\Hat(s)$. Moreover, for $s$ near  $s_0$, there
is an analytic projection onto the eigenvector of  $E_{\rm at}(s)$, which is given by
$$
P_{\rm at}(s) := \frac{1}{2\pi i} \int_{|E_{\rm
at}(s)-z|=\epsilon} \big(z-H_{\rm at}(s)\big)^{-1}\, dz,
$$
for $\epsilon >0$ sufficiently small. We set $ \overline{P}_{\rm at}(s) = 1 -
P_{\rm at}(s)$.
\begin{hypothesis} \label{hyp:R}
\emph{Hypothesis \ref{hyp:H} holds and there exists a neighborhood
$\UU \subset V\times\C$ of $(s_0,\Eat(s_0))$ such that for all
$(s,z) \in \UU$, $|E_{\rm at}(s) - z | < 1/2$ and
$$
   \sup_{(s,z)\in \UU} \sup_{q \geq 0}
   \left\|\frac{q+1}{\Hat(s) - z + q} \overline{P}_{\rm at}(s)\right\| <
\infty.
$$}
\end{hypothesis}

\noindent \emph{Remarks.}
\begin{enumerate}
\item Hypothesis \ref{hyp:R} is satisfied, e.g., if Hypothesis~\ref{hyp:H}
 holds and $\Hat(s)$ is an analytic family of type~(A), see Corollary~\ref{cor:main2}.
\item  The condition  $|E_{\rm at}(s) - z | <
1/2$ is needed in the proof of Theorem~\ref{thm:fesh} and related to
the constant $3/4$ in the construction of $\chi$. Since
$s\mapsto\Eat(s)$ is continuous, it can always be met by choosing
$\UU$ sufficiently small. However, the smaller we choose $\UU$ the
smaller we will have to choose the coupling constant $g$. Optimal
bounds on $g$ could possibly be obtained by scaling the operator
such that the gap in  $H_{\rm at}$ is comparable to one.
\end{enumerate}

We are now ready to state the main results.

\begin{theorem}\label{thm:main}
Suppose Hypotheses \ref{hyp:G}, \ref{hyp:H} and \ref{hyp:R} hold.
Then there exists a neighborhood $V_0\subset V$ of $s_0$ and  a
positive constant $g_0$ such that for all $s\in V_0$ and all $g<g_0$
the operator $H_g(s)$ has an eigenvalue $E_g(s)$ and
a corresponding eigenvector $\psi_g(s)$ that are both analytic
functions of $s\in V_0$ such that
$$
     E(s)=\inf\sigma(H(s))
$$
for $s\in V_0\cap\R^{\nu}$.
\end{theorem}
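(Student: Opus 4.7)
The plan is to execute the renormalization-group program outlined in the introduction (Sections~\ref{sec:fesh}--\ref{sec:analyt-ev}) and to upgrade its pointwise construction to an analytic one by showing that every choice can be made uniformly in $s$ near $s_0$. The first step is a smoothed Feshbach transform applied to $H_g(s)-z$ with projection built from $\Pat(s)\otimes\chi(H_f\leq 1)$: Hypothesis~\ref{hyp:R} inverts $(\Hat(s)-z+q)\Patb(s)$ uniformly in $q\geq 0$, while Hypothesis~\ref{hyp:G} together with the bound $\|a^{\#}(G_s)(H_f+1)^{-1/2}\|\leq\|G_s\|_\omega$ of Lemma~\ref{lm:A1} ensures that the defining Neumann series converges uniformly on a neighborhood $\UU_0$ of $(s_0,\Eat(s_0))$, provided $g<g_0$ for some $g_0>0$. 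This produces the effective Hamiltonian $H^{(0)}[s,z]$ on $\HH_{\rm red}=P_{[0,1]}(H_f)\FF$, analytic in $(s,z)$; by isospectrality (Theorem~\ref{thm:fesh}), $z$ is an eigenvalue of $H_g(s)$ precisely when $\ker H^{(0)}[s,z]\neq\{0\}$, and an eigenvector can then be recovered via the isomorphism $Q(H_g(s)-z)$.

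I would then iterate the renormalization map $\RR_\rho$ on $H^{(0)}[s,z]$. Writing it as a Wick-ordered symbol in $\WW_\xi$, the infrared exponent $\mu>0$ from Hypothesis~\ref{hyp:G} places $H^{(0)}[s,z]$ in the polydisc $\BB(\rho/2,\rho/8,\rho/8)$, and both the radii and the contraction factor $\rho^\mu$ can be chosen independent of $s$ and $g$ for $s$ near $s_0$ and $g$ small. Each step shrinks the off-diagonal part by $\rho^\mu$, while the spectral component of the Wick symbol must be readjusted to keep the iterate in the polydisc. Following Section~\ref{sec:iterate}, this readjustment is encoded in nested open sets $U_0(s)\supset U_1(s)\supset\cdots$ of admissible $z$ that shrink at a uniform geometric rate and intersect at a single point $z_\infty(s)$. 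Because each $U_n(s)$ is cut out in terms of the analytic iterate $H^{(n)}[s,\cdot]$, $z_\infty(s)$ is a uniform limit of analytic functions, hence analytic in $s$. Since $H^{(n)}[s,z_\infty(s)]\to c(s)H_f$ in $\LL(\HH_{\rm red})$ and $\Omega$ is a non-degenerate eigenvector of $H_f$ with eigenvalue $0$, a standard perturbation argument produces a non-zero vector in $\ker H^{(n)}[s,z_\infty(s)]$ for every sufficiently large $n$, and by isospectrality $z_\infty(s)$ is then the sought eigenvalue of $H_g(s)$.

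An eigenvector is reconstructed level-by-level as in Section~\ref{sec:ev}: alternating compositions of the photon-rescaling $\Gamma_\rho^{*}$ and the Feshbach inverses $Q(H^{(k)}[s,z_\infty(s)])$, applied to $\Omega$, produce a vector $\varphi_0(s)\in\ker H^{(0)}[s,z_\infty(s)]$, and $\psi_g(s)$ is then obtained by one last application of $Q(H_g(s)-z_\infty(s))$. Analyticity of $\psi_g(s)$ follows because each factor $Q(H^{(k)}[s,z_\infty(s)])$ is given by an operator-valued Neumann series that converges uniformly in $s\in V_0$, and the finite truncations of the infinite product converge in norm at a uniform geometric rate controlled by $\rho^\mu$; this is exactly the content of Section~\ref{sec:analyt-ev}. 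The identity $E_g(s)=\inf\sigma(H_g(s))$ for $s\in V_0\cap\R^\nu$ follows by a perturbation-continuity argument: $H_g(s)$ is self-adjoint for real $s$, so $E_g(s)\in\R$; at $g=0$ the construction reproduces $\Eat(s)=\inf\sigma(H_0(s))$; and for small $g$, $\inf\sigma(H_g(s))$ stays close to $\Eat(s)$ by standard relative-bound estimates, while the renormalization produces a unique eigenvalue of $H_g(s)$ in a neighborhood of $\Eat(s)$, forcing the two to coincide.

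The main obstacle, and the technical heart of the paper, is the uniformity in $s$ on which everything rests: the polydisc radii, the contraction factor, the domains $U_n(s)$, and the norm bounds on the $Q$-maps must all be chosen once for all $s\in V_0$, and each renormalization step must converge uniformly on $V_0$ so that analyticity survives the limit. This uniformity is precisely what Hypotheses~\ref{hyp:G} and~\ref{hyp:R} are formulated to deliver — the suprema $\sup_{s\in V}$ and $\sup_{q\geq 0}$ in their statements — and propagating it through every level, in particular through the statement that $Q$-maps send analytic vectors to analytic vectors (Section~\ref{sec:analyt-ev}), is where the novelty of the argument lies compared to the pointwise treatment in \cite{FGSig}.
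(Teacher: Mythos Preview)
Your outline tracks the paper's proof closely through the Feshbach reduction, the iteration of $\RR_\rho$, the construction of $z_\infty(s)$ as a uniform limit of analytic functions, and the assembly of the eigenvector via composed $Q$-maps. One minor redundancy: you invoke the convergence $H^{(n)}[s,z_\infty(s)]\to c(s)H_f$ together with a perturbation argument to produce a kernel vector, and then separately describe the explicit $Q$-map construction of Section~\ref{sec:ev}. The paper uses only the latter; the limit $H^{(n)}\to c\,H_f$ is recorded in the appendix but, as noted at the end of the introduction, is not used in the proof.

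There is, however, a genuine gap in your final step. Your argument that $z_\infty(s)=\inf\sigma(H_g(s))$ for real $s$ runs: $\inf\sigma(H_g(s))$ stays close to $\Eat(s)$; the renormalization produces a unique eigenvalue near $\Eat(s)$; hence the two coincide. This does not work. For real $s$ the spectrum of $H_g(s)$ is the half-line $[E(s),\infty)$, and nothing in your argument excludes the possibility that $E(s)<z_\infty(s)$ with the interval $[E(s),z_\infty(s))$ consisting entirely of continuous spectrum and $z_\infty(s)$ an embedded eigenvalue slightly above the bottom. Uniqueness of the \emph{eigenvalue} near $\Eat(s)$ says nothing about continuous spectrum there.

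The paper closes this gap with Proposition~\ref{spectralgap}: using the monotonicity established in Corollary~\ref{cor:derivative} (which relies on $H^{(0)}[s,\bar z]=H^{(0)}[s,z]^*$ for real $s$), one shows that for $x$ in an interval $(a,z_\infty(s))$ the operator $H^{(n)}[s,x]$ is bounded below by a positive constant, hence invertible, and by Feshbach isospectrality so is $H_g(s)-x$. Thus $(a,z_\infty(s))\cap\sigma(H_g(s))=\emptyset$, which is incompatible with $z_\infty(s)>E(s)$ since the spectrum is a half-line. You should replace your continuity argument with this spectral-gap step.
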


\noindent \emph{Remark.} For $s\in V_0\cap\R^{\nu}$ and $g$ sufficiently small, the
eigenvalue $E_g(s)$ is non-degenerate by
Hypothesis~\ref{hyp:H}~(ii) and a simple overlap estimate
\cite{BFS1}.

\begin{cor}\label{cor:main1}
Assume Hypotheses \ref{hyp:G} and \ref{hyp:H} are satisfied and
that there exists a $C$ such that for all $s\in V$
\begin{equation}\label{semi-bound}
   \rea\sprod{\ph}{\Hat(s)\ph} \geq -C\sprod{\ph}{\ph},\qquad
   \text{for}\ \ph\in D(\Hat(s)).
\end{equation}
Then the conclusions of Theorem~\ref{thm:main} hold.
\end{cor}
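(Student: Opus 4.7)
The plan is to deduce Corollary~\ref{cor:main1} from Theorem~\ref{thm:main} by verifying that the semi-boundedness condition \eqref{semi-bound}, together with Hypotheses~\ref{hyp:G} and \ref{hyp:H}, automatically implies Hypothesis~\ref{hyp:R}. The argument would proceed by controlling the resolvent factor $(q+1)(\Hat(s)-z+q)^{-1}\Patb(s)$ separately in two complementary regimes of $q\geq 0$, split at a threshold $q_\ast$ to be chosen.

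For large $q$, I would combine \eqref{semi-bound} with the identity $\Hat(\bar s)=\Hat(s)^\ast$ and the conjugation-symmetry of $V$ (so that $\bar s\in V$ as well) to obtain the real-part bound $\rea\sprod{\ph}{(\Hat(s)-z+q)\ph}\geq (q-C-\rea z)\|\ph\|^2$ on $D(\Hat(s))$, and the analogous bound for the adjoint on $D(\Hat(\bar s))$. After shrinking the neighborhood $\UU$ so that $|\rea z|$ is uniformly bounded, these accretivity estimates yield, for all $q\geq q_\ast$ with $q_\ast$ depending only on $C$ and $\UU$, that $\Hat(s)-z+q$ is invertible on all of $\HH_{\rm at}$ with $\|(\Hat(s)-z+q)^{-1}\|\leq (q-C-\rea z)^{-1}$. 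This directly produces a uniform bound on $(q+1)\|(\Hat(s)-z+q)^{-1}\Patb(s)\|$ for $q\geq q_\ast$, independently of the projection.

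For the bounded regime $q\in[0,q_\ast]$, the prefactor $q+1$ is itself bounded, so it suffices to control $\|(\Hat(s)-z+q)^{-1}\Patb(s)\|$ uniformly on $\UU\times[0,q_\ast]$. Here I would use Hypothesis~\ref{hyp:H}(ii) together with Kato-Rellich perturbation theory for Kato analytic families to secure, on a sufficiently small complex neighborhood $V_1\subset V$ of $s_0$, the analyticity of $\Eat(s)$ and of the Riesz projection $P_{\rm at}(s)$, the commutation $[\Hat(s),P_{\rm at}(s)]=0$, and the isolation of $\Eat(s_0)$ from the rest of $\sigma(\Hat(s_0))$ by some gap $\delta_0>0$. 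By a standard compactness argument for analytic families, the spectrum of the reduced operator $\Hat(s)|_{\ran \Patb(s)}$ then stays uniformly at distance at least $\delta_0/2$ from any fixed compact subset of the resolvent set of $\Hat(s_0)|_{\ran \Patb(s_0)}$. Choosing $\UU$ so that the compact set $\{z-q:(s,z)\in\UU,\ q\in[0,q_\ast]\}$ lies in such a resolvent-set region then produces, via joint continuity of the reduced resolvent, the required uniform bound.

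The main technical point I expect is establishing the spectral separation in the complex-$s$ case. For real $s_0$, self-adjointness and the spectral gap give the isolation at zero cost; but for complex $s$ the reduced operator $\Hat(s)|_{\ran\Patb(s)}$ is no longer self-adjoint and its spectrum may move into the complex plane as $s$ is perturbed. Propagating the spectral separation from $s_0$ to a complex neighborhood uses the Kato analyticity essentially — continuity of the spectrum alone would not suffice — and likely requires passing to a fixed Hilbert space via a Kato transformation $U(s)$ intertwining $\Patb(s)$ with $\Patb(s_0)$, after which the standard analytic-family spectral-stability argument goes through and completes the verification of Hypothesis~\ref{hyp:R}.
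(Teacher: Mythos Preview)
Your proposal is correct and follows essentially the same route as the paper: reduce to verifying Hypothesis~\ref{hyp:R}, split $q\geq 0$ at a threshold $q_\ast$, use the accretivity coming from \eqref{semi-bound} and $\Hat(s)^*=\Hat(\bar s)$ for $q\geq q_\ast$, and use analytic perturbation theory plus compactness for $q\in[0,q_\ast]$.

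The only difference worth noting is in the small-$q$ regime. You anticipate needing an explicit Kato transformation $U(s)$ to pass to a fixed range and then propagate spectral separation by hand. The paper avoids this by working directly with the set $\Gamma=\{(s,z): z\in\rho(\Hat(s)\!\upharpoonright\!\ran\Patb(s))\}$, which is open with $(s,z)\mapsto (\Hat(s)-z)^{-1}\Patb(s)$ analytic on it (a standard fact for analytic families, cf.\ Reed--Simon~IV). Since the compact segment $\gamma=\{(s_0,\Eat(s_0)-q):0\leq q\leq q_\ast\}$ lies in $\Gamma$, it has positive distance to the complement, and continuity of the analytic reduced resolvent on a compact neighborhood of $\gamma$ gives the uniform bound at once. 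This is slightly cleaner than your outline, but the content is the same; your concern about the complex-$s$ case is precisely what the openness-of-$\Gamma$ argument handles.
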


\begin{proof}
It suffices to verify Hypothesis~\ref{hyp:R}, then the corollary will follow
from Theorem~\ref{thm:main}.
For all $s\in V$, $z\in \C$ with $|z-\Eat(s_0)|\leq 1$,
$q\geq q^{*}:=C+|\Eat(s_0)|+2$, and $\ph\in D(\Hat(s))$ with
$\|\ph\|=1$,
\begin{eqnarray*}
\|(\Hat(s)-z+q)\ph\| &\geq & \rea\sprod{\ph}{(\Hat(s)-z+q)\ph}\\
  &\geq & q-C-|\Eat(s_0)| -1\ \geq\ 1.
\end{eqnarray*}
Since $\Hat(s)^*=\Hat(\bar{s})$ an analog estimate holds for
$\Hat(s)^*$. This proves that $B_1(\Eat(s_0))\subset \rho(\Hat(s)+q)$
for $s\in V$, $q\geq q^*$, and that
\begin{equation}\label{cor-main1}
  \sup_{s\in V, |z-\Eat(s_0)|\leq 1, q\geq q^*}
  \left\|\frac{q+1}{\Hat(s)-z+q}\right\|\leq \frac{q^*}{q^{*}-C+|\Eat(s_0)|+1}.
\end{equation}

We now turn to the case where $0\leq q\leq q^*$. The set
$$
  \Gamma:=\{(s,z)\in \C^{\nu}\times \C|
  z\in \rho(H(s)\upharpoonright\overline{P}_{\rm at}(s)\HH)\}
$$
is open and $(\Hat(s)-z)^{-1}\overline{P}_{\rm at}(s)$ is analytic
on $\Gamma$ \cite{reesim:ana}. On the other
 hand
$$
   \gamma:=\{(s_0,\Eat(s_0)-q)| 0\leq q\leq q^*\}
$$
is a compact subset of $\Gamma$. It follows that the distance
between $\gamma$ and the complement of $\Gamma$ is positive. Thus
if $g$ and $\delta>0$ are small enough, then
$$
   \{(s,z-q): |s-s_0|\leq \delta,\ |z-\Eat(s_0)|\leq \delta,\ 0\leq q\leq q^*\}
$$
is a compact subset of $\Gamma$ on which
$(H(s)-z)^{-1}\overline{P}_{\rm at}(s)$ is uniformly bounded.
Comparing with \eqref{cor-main1} we conclude that for $\delta<1$
so small that $B_{\delta}(s_0)\subset V$ the Hypothesis
\ref{hyp:R} holds with $U=B_{\delta}(s_0)\times
B_{\delta}(E(s_0))$
\end{proof}

The following corollaries prove the assertions in the introduction.

\begin{cor}\label{cor:main2}
Suppose Hypothesis~\ref{hyp:G} holds and let $\Hat(s)$ be an
analytic family of type (A) with $H_{\rm at}(s)^* = H_{\rm
at}(\bar{s})$ for all $s \in V$. If $E(s_0)=\inf\sigma(H(s_0))$ is a
non-degenerate isolated eigenvalue of $\Hat(s_0)$, then the
conclusions of Theorem~\ref{thm:main} hold.
\end{cor}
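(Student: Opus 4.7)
The plan is to verify Hypothesis~\ref{hyp:R} and then invoke Theorem~\ref{thm:main}. Hypothesis~\ref{hyp:H} is automatic: an analytic family of type~(A) is Kato-analytic, the adjoint condition is hypothesized, and part~(ii) is the stated assumption on $E(s_0)$. The work therefore lies in establishing a uniform resolvent estimate on a neighborhood $\UU$ of $(s_0,\Eat(s_0))$, and I would follow the two-regime split in $q$ used in the proof of Corollary~\ref{cor:main1}.

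Let $\delta_0>0$ be the spectral gap of $\Hat(s_0)$ at $\Eat(s_0)$. Self-adjointness of $\Hat(s_0)$ and the spectral theorem give the base estimate
\[
\|(\Hat(s_0)-z+q)^{-1}\overline{P}_{\rm at}(s_0)\|\leq \frac{1}{\delta_0/2+q},
\]
valid for $|z-\Eat(s_0)|\leq\delta_0/2$ and $q\geq 0$. For the small-$q$ regime $0\leq q\leq q^{*}$ (with $q^{*}$ to be chosen), I would repeat the compactness argument from the proof of Corollary~\ref{cor:main1}. The set $\Gamma=\{(s,w)\in V\times\C:w\in\rho(\Hat(s)\restricted\overline{P}_{\rm at}(s)\HH)\}$ is open and $(\Hat(s)-w)^{-1}\overline{P}_{\rm at}(s)$ is jointly analytic there (Kato--Rellich), while the compact segment $\{(s_0,\Eat(s_0)-q):0\leq q\leq q^{*}\}$ lies in $\Gamma$ by the base estimate. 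A neighborhood of this segment is therefore contained in $\Gamma$ with a uniform bound.

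For the large-$q$ regime $q\geq q^{*}$, where the semi-boundedness used in Corollary~\ref{cor:main1} is not available, the type~(A) property is the decisive substitute. A Cauchy-type estimate for type~(A) families provides, for every $\varepsilon>0$, a neighborhood of $s_0$ on which
\[
\|(\Hat(s)-\Hat(s_0))\ph\|\leq\varepsilon(\|\ph\|+\|\Hat(s_0)\ph\|),\qquad \ph\in D.
\]
Since $\Hat(s_0)$ is self-adjoint and bounded below, elementary spectral-theorem bounds give $\|(\Hat(s_0)-z+q)^{-1}\|\leq C_1/q$ and $\|\Hat(s_0)(\Hat(s_0)-z+q)^{-1}\|\leq C_2$ on all of $\HH$, uniformly for $q\geq q^{*}$ (with $q^{*}$ large enough) and $|z-\Eat(s_0)|\leq\delta_0/2$. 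Choosing $\varepsilon$ sufficiently small then forces $\|(\Hat(s)-\Hat(s_0))(\Hat(s_0)-z+q)^{-1}\|\leq 1/2$, so Neumann series inverts $\Hat(s)-z+q$ on all of $\HH$ with $(q+1)\|(\Hat(s)-z+q)^{-1}\|$ bounded uniformly in $q$. Multiplication by the bounded projection $\overline{P}_{\rm at}(s)$ produces the required estimate.

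Combining the two regimes verifies Hypothesis~\ref{hyp:R} on some $\UU=B_\eta(s_0)\times B_\eta(\Eat(s_0))$, so Theorem~\ref{thm:main} applies. The main obstacle is precisely the large-$q$ bound: without any semi-boundedness or numerical-range statement for $\Hat(s)$ at complex $s$, the vanishing-relative-bound feature of type~(A) families is exactly what is needed for the Neumann-series perturbation off the self-adjoint reference $\Hat(s_0)$ to converge uniformly in $q\geq q^{*}$.
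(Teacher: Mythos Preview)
Your argument is correct but takes a genuinely different route from the paper. The paper does not verify Hypothesis~\ref{hyp:R} directly; instead it shows that the type~(A) assumption implies the uniform semi-boundedness condition~\eqref{semi-bound} of Corollary~\ref{cor:main1}, and then invokes that corollary. The key step there is an interpolation estimate: with $R=(T(s_0)+1)^{-1}$ and $T(s)=\Hat(s)-\Eat(s_0)$, analyticity of $T(s)R$ and $RT(s)$ forces $R^{1/2}(T(s)-T(s_0))R^{1/2}\to 0$, whence $\rea\sprod{\ph}{T(s)\ph}\geq -\tfrac12\|\ph\|^2$ for $s$ near $s_0$.

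Your approach bypasses semi-boundedness entirely: you use the type~(A) property as a small relative bound $\|(\Hat(s)-\Hat(s_0))(\Hat(s_0)-z+q)^{-1}\|\leq \tfrac12$ to run a Neumann series off the self-adjoint reference $\Hat(s_0)$, uniformly in $q\geq q^{*}$. This is arguably more elementary (no interpolation needed) and makes the mechanism transparent. The paper's detour through Corollary~\ref{cor:main1} buys modularity---once semi-boundedness is established, the resolvent estimate follows from an already-proved result---whereas your argument is self-contained. Both are sound; your ``Cauchy-type estimate'' is really just norm-continuity of the bounded analytic map $s\mapsto(\Hat(s)-\Hat(s_0))(\Hat(s_0)+i)^{-1}$ at $s_0$, which is exactly what the paper also uses (in the form $(T(s)-T(s_0))R\to 0$) before interpolating.
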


\begin{proof}
By Corollary~\ref{cor:main1} it suffices to show that
\eqref{semi-bound} holds. To this end we set
$$
   T(s):= H_{\rm at}(s) - E_{\rm at}(s_0)
$$
and $R:=(T(s_0)+1)^{-1}$. Since $T(s)$ is an analytic family of
type (A), the operators $T(s)R$ and $RT(s)$ are bounded and weakly
analytic, hence strongly analytic \cite{kat:per}. It follows that
$(T(s)-T(s_0))R\to 0$ and $R(T(s)-T(s_0))\to 0$ as $s\to 0$. By
abstract interpolation theory
$$
     R^{1/2}(T(s)-T(s_0))R^{1/2}\to 0\qquad s\to s_0.
$$
We choose $\eps>0$ so that $B_{\eps}(s_0)\subset V$ and
$$
    \sup_{|s|<\eps}\|R^{1/2}(T(s)-T(s_0))R^{1/2}\| \leq 1/2.
$$
It follows that $|\sprod{\ph}{[T(s)-T(s_0)]\ph}|\leq
(1/2)\sprod{\ph}{(T(s_0)+1)\ph}$ and hence that
\begin{eqnarray*}
   \rea\sprod{\ph}{T(s)\ph} &\geq & \sprod{\ph}{T(s_0)\ph} -
   |\sprod{\ph}{[T(s)-T(s_0)]\ph}|\\
   &\geq & \frac{1}{2}\sprod{\ph}{T(s_0)\ph}
   -\frac{1}{2}\sprod{\ph}{\ph}\ \geq\
   -\frac{1}{2}\sprod{\ph}{\ph}
\end{eqnarray*}
which proves \eqref{semi-bound}.
\end{proof}

\emph{ Remark.} Embedded eigenvalues generically disappear under perturbations.
However, if a non-degenerate eigenvalue persists, one
might conjecture that the eigenvalue and a suitable eigenvector are analytic
functions of the perturbation parameter, provided the Hamiltonian is analytic
in this parameter. This conjecture is \emph{wrong}, as the following
example shows. Let $\HH=L^2(\R)\oplus\C$, and let
$$
H(s) = \left( - \frac{d^2}{dx^2} + s V \right)\oplus 0  ,
$$
where $V$ denotes the characteristic function of the interval $[-1,1]$.
Then $H(s)$, $s\in\C$, is an analytic family of type ($A$) and
for $s\in\R$, $E(s)={\rm inf} \sigma(H(s))$ is a non-degenerate eigenvalue.
But $E(s)$ is not analytic because
\begin{align*}
& E(s) < 0 , \qquad s < 0 \\
& E(s) = 0 , \qquad s \geq 0  .
\end{align*}
The corresponding eigenvector is not even continuous at $s=0$.

\begin{cor}\label{cor:main3}
Suppose the assumptions of Theorem~\ref{thm:main}, Corollary~\ref{cor:main1}
or Corollary~\ref{cor:main2} are satisfied for all $s_0$ of a compact set
$K\subset V\cap\R^{\nu}$. Then there exists a neighborhood $V_0\subset V$
of $K$ and a positive $g_0$ such that for all $s\in V_0$ and all $g<g_0$ there
is an analytic complex-valued function $E_g$ and an analytic projection-valued
functions $P_g$ on $V_0$, such that
$$
  H_g(s)P_g(s) = E_g(s)P_g(s), \qquad \text{for}\ s\in V_0,
$$
and $E_g(s)=\inf\sigma(H_g(s))$ for $s\in V_0\cap \R^{\nu}$.
\end{cor}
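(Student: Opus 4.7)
The approach is to use the local nature of Theorem~\ref{thm:main} (and its corollaries) to produce, for each point of $K$, an open complex ball on which an analytic eigenpair exists, then to extract a finite subcover by compactness of $K$, and finally to reconcile the pieces on overlaps using the identity principle for holomorphic functions of several complex variables.

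For every $s_0 \in K$ the hypotheses yield, via Theorem~\ref{thm:main} (or Corollary~\ref{cor:main1}, \ref{cor:main2}), an open neighborhood $W(s_0) \subset V$ of $s_0$ and a constant $g(s_0) > 0$ such that for all $s \in W(s_0)$ and all $g < g(s_0)$ the operator $H_g(s)$ has a simple eigenvalue $E_g^{s_0}(s)$ with an analytic eigenvector $\psi_g^{s_0}(s)$, and $E_g^{s_0}(s) = \inf \sigma(H_g(s))$ for $s \in W(s_0) \cap \R^{\nu}$. I shrink each $W(s_0)$ to an open Euclidean ball centered at the real point $s_0$, so that $W(s_0)$ is convex and symmetric under complex conjugation. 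By compactness of $K$ I extract a finite subcover $W(s_1), \ldots, W(s_N)$, set $g_0 := \min_{i \leq N} g(s_i) > 0$, and let $V_0$ be any open neighborhood of $K$ contained in $\bigcup_i W(s_i)$.

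To manufacture an analytic projection from the analytic eigenvector I exploit the symmetry $H_g(s)^{*} = H_g(\bar s)$. Applied at $\bar s$, the local theorem also produces an eigenvector $\psi_g^{s_i}(\bar s)$ of $H_g(s)^{*}$ corresponding to the eigenvalue $\overline{E_g^{s_i}(s)}$, and the map $s \mapsto \sprod{\psi_g^{s_i}(\bar s)}{\,\cdot\,}$ is holomorphic, because the antilinearity of the inner product in its first slot compensates the antilinearity of $s \mapsto \bar s$. I then define the rank-one operator
\begin{equation*}
  P_g^{s_i}(s)\,\ph := \frac{\sprod{\psi_g^{s_i}(\bar s)}{\ph}}{\sprod{\psi_g^{s_i}(\bar s)}{\psi_g^{s_i}(s)}}\, \psi_g^{s_i}(s), \qquad \ph \in \HH .
\end{equation*}
At $s = s_i \in \R^{\nu}$ the denominator reduces to $\|\psi_g^{s_i}(s_i)\|^{2} \neq 0$, so after a further shrinking of $W(s_i)$ the family $P_g^{s_i}$ is a well-defined analytic projection onto $\ker(H_g(s) - E_g^{s_i}(s))$.

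Gluing is now a matter of invoking uniqueness on the real axis. On each real overlap $W(s_i) \cap W(s_j) \cap \R^{\nu}$ both $E_g^{s_i}$ and $E_g^{s_j}$ coincide with $\inf \sigma(H_g(s))$, and both $P_g^{s_i}$ and $P_g^{s_j}$ are the orthogonal projection onto the unique one-dimensional eigenspace attached to that simple eigenvalue; thus the two local data agree on the real overlap. Because each $W(s_i)$ is a ball centered on $\R^{\nu}$, the intersection $W(s_i) \cap W(s_j)$ is convex, hence connected, and contains an open segment of real points. The identity principle for holomorphic functions on a connected open subset of $\C^{\nu}$---any such function vanishing on a totally real submanifold of full real dimension vanishes identically---then yields agreement of the local definitions throughout $W(s_i) \cap W(s_j)$. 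The pieces therefore patch to analytic functions $E_g$ and $P_g$ on $V_0$ satisfying $H_g(s) P_g(s) = E_g(s) P_g(s)$ and $E_g(s) = \inf \sigma(H_g(s))$ on $V_0 \cap \R^{\nu}$. The main obstacle, which the plan addresses by choosing convex balls centered on $\R^{\nu}$, is exactly this need for pairwise overlaps that are connected and that intersect the real locus, so that the identity principle can propagate equality of the local data off the real axis.
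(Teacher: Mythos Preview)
Your proof is correct and follows essentially the same approach as the paper: a finite subcover by compactness, the rank-one projection built from $\psi(\bar s)$ and $\psi(s)$, agreement on the real overlap by non-degeneracy, and extension by analyticity. You are in fact more careful than the paper on one point: by shrinking each local neighborhood to a ball centered at its real base point, you guarantee that pairwise overlaps are connected and meet $\R^{\nu}$ in a nonempty open set, so the identity principle in several variables applies cleanly---the paper's proof simply asserts ``by analyticity'' without arranging this. One small wording issue: the overlap $W(s_i)\cap W(s_j)\cap\R^{\nu}$ is an open subset of $\R^{\nu}$, not merely ``an open segment''; your invocation of the identity principle for totally real submanifolds of full real dimension is correct, but the phrasing should match.
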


\begin{proof}
By the compactness of $K$ there exist open sets $V_1\dots,
V_N\subset\C^{\nu}$ and positive numbers $g_1,\dots,g_N$ provided
by Theorem~\ref{thm:main}, such that
$$
   K\subset \bigcup_{j=1}^{N} V_{j}.
$$
Let $E_j(s)$ and $\psi_j(s)$ be the
corresponding eigenvalues and eigenvectors defined for $s\in V_j$ and $g<g_j$.
We may assume that $\sprod{\psi_{j}(\bar{s})}{\psi_j(s)}\neq 0$
for all $s\in V_j$ and all $j$. Then the operators $P_j(s):\HH\to\HH$ defined
by
$$
   P_j(s)\ph = \frac{\sprod{\psi_{j}(\bar{s})}{\ph}}{\sprod{\psi_{j}(\bar{s})}{\psi_j(s)}}\psi_j(s)
$$
are analytic functions of $s\in V_j$, $P_j(s)^2=P_j(s)$, and $H_g(s)P_j(s) =
E_j(s)P_j(s)$. We choose $g_0\leq \min\{g_1,\dots,g_N\}$ so small, that all
eigenvalues $E_1(s),\dots,E_N(s)$ are non-degenerate for real $s$ and $g\in[0,g_0)$, and we
define $V_0:=\cup_{j=1}^N V_j$. Then for $g<g_0$ and $s\in V_i\cap V_j\cap\R^{\nu}$,
\begin{equation}\label{eq:ViVj}
    E_i(s)=E_j(s),\qquad P_i(s)=P_j(s),
\end{equation}
and hence, by analyticity, \eqref{eq:ViVj} must hold for all  $s\in V_i\cap
V_j$. This proves that $E_g(s)$ and $P_g(s)$ are well-defined on $V_0$ by
$E_g(s):=E_j(s)$ and $P_g(s):=P_j(s)$ for $s\in V_j$, and have the desired
properties.
\end{proof}


\section{Non-Relativistic QED in Dipole-Approximation}
\label{sec:dipole}

In this section we apply Theorem~\ref{thm:main} to the Hamiltonians describing
molecules made from static nuclei and non-relativistic electrons coupled to
the UV-cutoff quantized radiation field in dipole approximation.
For justifications of this model see \cite{Gri2006, Cohen1}.

A (pure) state of system of $N$ spinless electrons and transversal photons is
described by a vector in the Hilbert space $\HH=L_a(\R^{3N};\C)\otimes \FF$, where
$L_a^2(\R^{3N};\C)$ denotes the space of square integrable,
antisymmetric functions of $(x_1,\ldots,x_N)\in\R^{3N}$, and $\FF$
is the symmetric Fock space over $L^2(\R^3\times\{1,2\})$. We choose
units where $\hbar$, $c$, and four times the Rydberg energy are
equal to one, and we express all positions in multiples of one half
of the Bohr-radius, which, in our units, agrees with the
fine-structure constant $\alpha$. In these units the Hamiltonian
reads
\begin{eqnarray*}
H(X,\alpha) &=& H_{\rm at}(X)  + \alpha^{3/2} \sum_{j=1}^N g(x_j)
x_j \cdot E(0) + H_f \; ,
\end{eqnarray*}
with
\begin{eqnarray*} H_{\rm at}(X)
&:=&\sum_{j=1}^N (-\Delta_j) + \sum_{j<l} \frac{1}{|x_l - x_j|} -
\sum_{j,k} \frac{ Z_j}{|x_j - X_k|}
\end{eqnarray*}
where $Z_1,\ldots,Z_{K}\in \N$ denote atomic numbers, and
$$
E(0) = \sum_{\lambda=1,2} \int_{|\bk| < \Lambda} d\bk  \sqrt{|\bk|}  i
\varepsilon(\bk,\lambda) ( a^{*}(\bk,\lambda) - a(\bk,\lambda)) \; ,
$$
is the quantized electric field evaluated at the origin $0\in\R^3$.
The ultraviolet cutoff $\Lambda>0$ is an arbitrary but finite constant, the
polarization vectors $\eps(\bk,1)$ and $\eps(\bk,2)$ are unit vectors in
$\R^3$ that are orthogonal to each other and to $\bk$, and
$g \in C_0^\infty(\R^3)$ is a space-cutoff with $g\equiv 1$ on an open ball $B \subset \R^3$ containing the positions
$X_1,\ldots,X_K$ of the nuclei. The following theorem is a consequence of Corollary
\ref{cor:main2}.

\begin{theorem} Suppose ${\rm inf} \, \sigma( H_{\rm at}(X))$ is a
non-degenerate and isolated eigenvalue of $H_{\rm at}(X)$. Then in a
neighborhood of $\alpha=0$ the ground state energy and a suitably chosen ground
state vector are real-analytic functions of $\alpha^{3/2}$.
\end{theorem}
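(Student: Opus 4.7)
The plan is to cast $H(X,\alpha)$ into the abstract form \eqref{ham} and invoke Corollary~\ref{cor:main2} with a single complex parameter $s:=\alpha^{3/2}$ expanded around $s_0:=0$. I take $\HH_{\rm at}:=L_a^2(\R^{3N};\C)$ and set the atomic operator $\Hat(s):=\Hat(X)$ independent of $s$. A constant family is trivially an analytic family of type~(A), it satisfies $\Hat(s)^*=\Hat(\bar s)$ because $\Hat(X)$ is self-adjoint, and by assumption $\Eat(s_0)=\inf\sigma(\Hat(X))$ is a non-degenerate isolated eigenvalue; this verifies Hypothesis~\ref{hyp:H}.

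Next I rewrite the dipole interaction in the form $W(s)=a(G_{\bar s})+a^*(G_s)$. Substituting the explicit formula for $E(0)$ suggests
\[
    G_s(\bk,\lambda)\;:=\;is\sqrt{|\bk|}\,\chi_{|\bk|<\Lambda}\sum_{j=1}^{N}g(x_j)\,x_j\cdot\varepsilon(\bk,\lambda),
\]
interpreted as a bounded multiplication operator on $\HH_{\rm at}$ (bounded because $g\in C_0^\infty(\R^3)$ makes $g(x_j)x_j$ bounded, and $\sum_j g(x_j)x_j$ is symmetric in the electron labels, hence preserves antisymmetry). Pointwise anti-self-adjointness of $\phi:=G_s/s$ together with the anti-linearity of $a(\cdot)$ in its argument makes a short computation yield $a(G_{\bar s})+a^*(G_s)=s\sum_j g(x_j)x_j\cdot E(0)$; thus \eqref{ham} with coupling $g:=1$ reproduces $H(X,\alpha)$. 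The map $s\mapsto G_s$ is linear and hence analytic on $V:=\{s\in\C:|s|<1\}$, and for any $\mu\in(0,1)$,
\[
    \sup_{s\in V}\int\frac{\|G_s(k)\|^2}{|k|^{2+2\mu}}\,dk\;\leq\;C\int_0^{\Lambda}r^{1-2\mu}\,dr\;<\;\infty,
\]
so Hypothesis~\ref{hyp:G} holds. The essential ingredient is the factor $\sqrt{|\bk|}$ from $E(0)$, which makes $\|G_s(k)\|^2\sim|\bk|$ near $\bk=0$ and therefore accommodates some $\mu>0$; this is the only genuinely nontrivial step, and the reason the statement is restricted to the dipole approximation, as noted in the introduction.

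Corollary~\ref{cor:main2} then furnishes a neighborhood $V_0\ni 0$ and a constant $g_0>0$ such that for every $s\in V_0$ and $g\in[0,g_0)$ the operator $H_g(s)$ has an eigenvalue $E_g(s)$ and eigenvector $\psi_g(s)$ analytic in $s$, coinciding at real $s$ with $\inf\sigma(H_g(s))$ and a corresponding ground state. Because $G_s$ is linear in $s$ and $a(\cdot)$ is anti-linear, $gW(s)$ depends on $(g,s)$ only through the product $gs$. Fixing any $g\in(0,g_0)$, the analytic functions $s\mapsto E_g(s)$ and $s\mapsto\psi_g(s)$ on $V_0$ can therefore be rewritten as analytic functions $F(\sigma),\Psi(\sigma)$ of $\sigma:=gs$ defined on a neighborhood of $\sigma=0$. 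Setting $\sigma=\alpha^{3/2}$ recovers $H(X,\alpha)$, and $F(\alpha^{3/2}),\Psi(\alpha^{3/2})$ provide the claimed real-analytic ground state energy and vector in $\alpha^{3/2}$.
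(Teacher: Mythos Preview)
Your proof is correct and follows the same route as the paper: define $G_s$ linearly in $s$, observe that $\Hat(X)$ is trivially type~(A) and that Hypothesis~\ref{hyp:G} holds thanks to the $\sqrt{|\bk|}$ factor, then apply Corollary~\ref{cor:main2}. The only cosmetic difference is that the paper writes the identification as $H_g(\alpha^{3/2}/g)=H(X,\alpha)$ in one line, whereas you first set $g=1$ and then recover the needed smallness by the rescaling $gW(s)=W(gs)$; these amount to the same thing.
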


\begin{proof}
For $s\in V:= B_1(0)$ we define $H_g(s)$ by the operator \eqref{ham} with
$$
    G_s(\bk,\lambda) :=
    s\sum_{j=1}^N\sqrt{|\bk|}\chi(|\bk|\leq\Lambda)i\eps(\bk,\lambda)\cdot x_j g(x_j)
$$
so that $H_g(\alpha^{3/2}/g)=H(X,\alpha)$. The Hamiltonian $H_{\rm
at}(X)$ is trivially analytic of type ($A$) in $V$ and $G_s$ satisfies
Hypothesis~\ref{hyp:G} with, for example, $\mu=1/2$. Hence the
conclusions of Theorem~\ref{thm:main} holds by
Corollary~\ref{cor:main2}. This proves the theorem.
\end{proof}

The next theorem concerns the regularity of $\inf\sigma(H(X,\alpha))$ with
respect to the nuclear coordinates $X=(X_1,\dots,X_K)\in \R^{3K}$.

\begin{theorem} Suppose  ${\rm inf} \, \sigma(H_{\rm at}(X))$ is a non-degenerate
and isolated eigenvalue of $H_{\rm at}(X)$, where $X\in B^K$ and
$X_r \neq X_s$ for $r \neq s$. Then for $\alpha$ sufficiently small,
there exists a neighborhood $U$ of $0\in\R^{3K}$ such that:
\begin{itemize}
\item[(a)] For each $\xi\in U$, $E(\xi)={\rm inf} \sigma (H(X+\xi))$ is an  eigenvalue
  of $H(X+\xi)$ and a real-analytic function of $\xi$.
\item[(b)] There is an eigenvector belonging to $E(\xi)$, which is of class
  $C^2$ with respect to $\xi$.
\end{itemize}
\end{theorem}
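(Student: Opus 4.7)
The plan is to render the $\xi$-dependence of $H(X+\xi,\alpha)$ analytic by Hunziker's unitary deformation of the nuclear coordinates~\cite{Hunziker1986} and then invoke Corollary~\ref{cor:main3}. Choose $\phi\in C_c^\infty(\R^3)$ equal to $1$ in a neighborhood of the origin, with the translates $X_k+\supp\phi$ pairwise disjoint and contained in $B$. For $\xi=(\xi_1,\ldots,\xi_K)$ of small norm in $\C^{3K}$, set
$$
   f_\xi(x) := x + \sum_{k=1}^K \xi_k\,\phi(x-X_k),
$$
and define $U(\xi)$ on $\HH$ acting only on the electronic coordinates by
$$
   (U(\xi)\psi)(x_1,\ldots,x_N) := \prod_{j=1}^N \bigl(\det Df_\xi(x_j)\bigr)^{1/2}\psi\bigl(f_\xi(x_1),\ldots,f_\xi(x_N)\bigr),
$$
with the analytic branch of the square root tending to $1$ at $\xi=0$. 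For real $\xi$ of small norm $U(\xi)$ is unitary; for complex $\xi$ in a small neighborhood $V$ of $0$, $f_\xi$ extends to a biholomorphism of a complex neighborhood of $\R^3$, $U(\xi)$ is boundedly invertible, and
$$
   \tilde H(\xi) := U(\xi)^{-1} H(X+\xi,\alpha)\,U(\xi)
$$
is defined on $D(H_{\rm at}(X))\otimes D(H_f)$ and depends analytically on $\xi\in V$.

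Next I would verify that $\tilde H(\xi)$ fits the framework of Section~\ref{sec:model} with $s=\xi$. The atomic part $\tilde H_{\rm at}(\xi)=U(\xi)^{-1}H_{\rm at}(X+\xi)U(\xi)$ is an analytic family of type~(A): the Coulomb singularities $|x_j-X_k-\xi_k|^{-1}$ are pulled back to the $\xi$-independent points $x_j=X_k$, because $f_\xi(x_j)-X_k-\xi_k=x_j-X_k$ whenever $x_j\in X_k+\supp\phi$, while outside these supports the denominators remain bounded away from zero for $\xi$ small. The transformed form factor $\tilde G_\xi$ is obtained from the coupling of the preceding theorem by replacing $g(x_j)x_j$ with the compactly supported multiplication operator $g(f_\xi^{-1}(x_j))f_\xi^{-1}(x_j)$; since $f_\xi^{-1}$ is analytic on $V$, $\xi\mapsto\tilde G_\xi$ is a bounded analytic map into $L^2(\R^3,\LL(\HH_{\rm at}))$ satisfying Hypothesis~\ref{hyp:G} (with, say, $\mu=1/2$) uniformly on $V$. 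The isolated non-degenerate eigenvalue $E_{\rm at}(X)$ is preserved by unitary conjugation at $\xi=0$ and persists for $\xi$ in a neighborhood of any compact $K\subset V\cap\R^{3K}$ by the Kato-Rellich theorem. Corollary~\ref{cor:main2} therefore applies at each point of such a $K$, and Corollary~\ref{cor:main3} then produces a real neighborhood $U$ of $0$ and an $\alpha_0>0$ such that for $\alpha<\alpha_0$ the operator $\tilde H(\xi)$ admits an analytic eigenvalue $E(\xi)$ and an analytic eigenvector $\tilde\psi(\xi)$ on $U$, with $E(\xi)=\inf\sigma(\tilde H(\xi))$ for real $\xi$.

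Undoing the deformation, $\psi(\xi):=U(\xi)\tilde\psi(\xi)$ is an eigenvector of $H(X+\xi,\alpha)$ with the same eigenvalue $E(\xi)$. For real $\xi\in U$ the unitarity of $U(\xi)$ yields $E(\xi)=\inf\sigma(H(X+\xi,\alpha))$, proving~(a). For~(b) observe that $\xi\mapsto\tilde\psi(\xi)$ is analytic into $\HH$ and takes values in the domain of $\tilde H(\xi)$; combined with the form boundedness of the Coulomb potentials by the electronic Laplacian, this provides locally uniform control in $L^2$ on the first two electronic derivatives of $\tilde\psi(\xi)$. The operator $U(\xi)$ acts on such vectors by composition with $f_\xi$ (times a smooth Jacobian factor), so each of the first two $\xi$-derivatives of $U(\xi)\tilde\psi(\xi)$ can be written as $U(\xi)$ applied to at most second-order $x$-derivatives of $\tilde\psi(\xi)$ multiplied by smooth functions of $\xi$ and $x$; this yields a $C^2$ map $U\to\HH$ and establishes~(b).

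The main obstacle is verifying the analytic family of type~(A) claim together with Hypothesis~\ref{hyp:G} uniformly for $\tilde H$ on a complex neighborhood of the real compact set $K$, that is, checking that Hunziker's deformation genuinely preserves the structure required by the abstract theorem when applied to the full matter-radiation Hamiltonian. The passage from analyticity of $\tilde\psi(\xi)$ to only $C^2$-regularity of $\psi(\xi)$ in~(b) is a secondary but unavoidable feature, reflecting that $U(\xi)$ itself is differentiable only to the order permitted by the available electronic regularity of $\tilde\psi(\xi)$.
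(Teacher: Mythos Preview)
Your strategy---Hunziker's deformation followed by Corollary~\ref{cor:main2}---is exactly the paper's. Two points need repair.

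First, the claim that $U(\xi)$ extends to a bounded invertible operator for complex $\xi$ is wrong as stated: for complex $\xi$ the map $f_\xi$ sends $\R^3$ into $\C^3$, so $\psi\circ f_\xi$ makes no sense for $\psi\in L^2(\R^{3N})$, and the compactly supported cutoff $\phi$ has no holomorphic extension. What the paper (and Hunziker) actually does is compute $\widetilde H(\xi)=U_\xi H(X+\xi)U_\xi^{-1}$ explicitly for \emph{real} $\xi$, obtaining concrete expressions $T_\xi$, $V_\xi(x,y)=|x-y+v_\xi(x)-v_\xi(y)|^{-1}$, and $W(\xi)$, and then observes that these \emph{formulas} are polynomial (hence analytic) in $\xi$ with coefficients that are relatively bounded perturbations of the Laplacian. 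That is what yields an analytic family of type~(A); the operator $U(\xi)$ itself is never analytically continued. Incidentally, your definition of $\widetilde H(\xi)$ has the conjugation backwards relative to your own computation of the pulled-back Coulomb singularity.

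Second, your sketch for (b) identifies the right mechanism but skips the essential estimate. Writing $\partial_\beta(U_\xi^{-1}\varphi_\xi)$ and $\partial_\alpha\partial_\beta(U_\xi^{-1}\varphi_\xi)$ indeed produces terms with up to two electronic derivatives hitting $\varphi_\xi$, but one must check that $\varphi_\xi\in D(p^2)$ and that $\xi\mapsto p_\mu p_\nu\varphi_\xi$ is continuous (in fact analytic). The paper obtains this from the eigenvalue equation and the relative bound of the Coulomb potential: $\|p^2\varphi_\xi\|\leq\|(p^2+H_f)\varphi_\xi\|\leq C(\|\widetilde H(\xi)\varphi_\xi\|+\|\varphi_\xi\|)=C(|E(\xi)|+1)\|\varphi_\xi\|$, which is locally bounded in $\xi$; analyticity of $p_\mu p_\nu\varphi_\xi$ then follows from weak analyticity plus this bound. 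Without this step the $C^2$ claim is unjustified.
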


\noindent\emph{Remark.} The operators $H_{\rm at}(X)$ do not form an analytic family in
the sense of Kato and hence Theorem~\ref{thm:main} is not immediately
applicable. This problem is circumvented by a well-known deformation argument \cite{Hunziker1986}.

\begin{proof}
By assumption on $X_1,\dots,X_K$, we can find functions $f_1,\ldots,f_K \in
C_0^\infty(\R^3)$, with $\supp(f_r)\subset B$ and $f_r(X_s) = \delta_{rs}$. For each $\xi =
(\xi_1,\ldots,\xi_K) \in \R^{3K}$ we define a vector field $v_\xi$ on $\R^3$, by
$$
v_\xi(x) = \sum_{r=1}^K \xi_r f_r(x) \; ,
$$
It is not hard to see that for small $\xi$ the map
$$\phi_\xi : (x_1,...,x_N) \mapsto (x_1 +
v_\xi(x_1),...,x_N+v_\xi(x_N))$$ is a diffeomorphism of $\R^{3N}$
\cite{Hunziker1986}. Moreover,
$$
U_{\xi} \psi  := |D\phi_\xi|^{1/2} ( \psi \circ \phi_\xi ) \; ,
$$
defines a unitary transformation $U_{\xi}$ on $\HH$. A
straightforward calculation shows that, for real and small $\xi$,
$$
 \widetilde{H}(\xi) := U_{\xi} H(X + \xi) U_{\xi}^{-1} = H_{\rm at}(X;\xi) + W(\xi) + H_f \;
,
$$
with
\begin{align} \label{eq:wxi}
W(\xi) &= \alpha^{3/2} \sum_{j=1}^N g(x_j) ( x_j + v_\xi(x_j)) \cdot
E(0) \\
H_{\rm at}(X;\xi) &= T_\xi - \sum_{r,j} Z_k V_{\xi}(x_j , X_r) +
 \sum_{j<l} V_\xi(x_j,x_l) \nonumber
 \end{align}
where
 \begin{align*}
 T_\xi &= U_\xi \sum_{j=1}^N (-\Delta_j) U_\xi^{-1} \\
V_\xi(x,y) &= | x - y + v_\xi(x) - v_\xi(y) |^{-1}
 \; .
\end{align*}
In \eqref{eq:wxi} we used that $g(x+ v_\xi(x))
=g(x)$, by the smallness of $\xi$. In \cite{Hunziker1986} it is proven using standard estimates that $H_{\rm
at}(X,\xi)$ has an extension to $\xi\in\C^{3K}$ and this
extension is an analytic family of type (A) for $\xi$ in a
neighborhood of zero. One easily verifies that $W(\xi)$ satisfies
Hypothesis I. It follows that Corollary \ref{cor:main2} is
applicable and thus, for small $\alpha$, $\widetilde{H}(\xi)$ has an
eigenvalue $E(\xi)$ with eigenvector $\varphi_\xi$, both analytic in
$\xi$, and $\varphi_\xi$ is a ground state for real $\xi$.  Since,
for small and real $\xi$, $H(\xi + X)$ is unitarily equivalent to
$\widetilde{H}(\xi)$, (a) follows.

To prove (b) we show that $\xi \mapsto U_\xi^{-1} \varphi_\xi$ is a
$C^2$ function in a neighborhood of zero. Let $S_{\xi} :=
U_\xi^{-1}$. Throughout the proof, with the exception of Step~3, we
assume that $\xi$ is real. Using dominated convergence, one sees
that $\xi \mapsto S_{\xi} \varphi$ is continuous for $\varphi\in \HH
\cap (C_0^\infty(\R^{3N})\otimes\FF)$. Since such functions constitute
a dense subset of $\HH$   and $S_{\xi}$ is uniformly bounded, it
follows that $\xi \mapsto S_{\xi}$ is strongly continuous.
 We shall adopt the following conventions in this
proof: the labels  $\alpha, \beta$ run over the set
$\{1,\ldots,K\}\times\{1,2,3\}$, and $\partial_\beta = \partial /
\partial \xi_\beta$ with  $\xi_\beta = (\xi_j)_s$ for $\beta=(j,s)$; the labels $\mu, \nu$ run
over the set $\{1,\ldots,N\}\times\{1,2,3\}$ and  $p_\mu =
(p_l)_s$ for $\mu=(l,s)$. \\

\noindent \underline{Step 1}: If $\psi \in D(|p|)$, then $\xi
\mapsto S_{\xi} \psi$ is  $C^1$ and for all $\beta$, $\partial_\beta
S_{\xi} \psi = S_{\xi} A_\beta(\xi) \psi$ with
\begin{eqnarray}
 [A_\beta(\xi) \psi](x) &=&   | D \phi_\xi(x)|^{1/2} \frac{d}{ds}|
D\phi_{\xi + se_\beta}^{-1} (\phi_\xi(x)) |^{1/2} \Big|_{s=0} \psi(x)  \nonumber \\
&& + \sum_{j=1}^N  \frac{d}{ds}  {\phi_{\xi + se_\beta }^{-1}}
(\phi_\xi(x))_j \Big|_{s=0} \cdot (\nabla_j \psi)(x) \; ,
\label{eq:infgen}
\end{eqnarray}
where  $e_\beta \in \R^{3K}$ denotes the unit vector with components
$(e_\beta)_\gamma = \delta_{\beta,\gamma}$.\\

For $h_1, h_2 \in C_0^\infty(\R^{3N}) \otimes \FF$,
we calculate the partial derivative using $S_{\xi} = U_\xi^{-1}$,
\begin{eqnarray}
\partial_\beta \sprod{ h_1 }{ S_{\xi} h_2 } = \partial_\beta \sprod{ h_1 }{ |D
\phi_\xi^{-1} |^{1/2} ( h_2 \circ \phi_\xi^{-1} ) }  =  \sprod{ h_1
}{ S_{\xi} A_\beta(\xi) h_2 } \; , \label{eq:weakder555}
\end{eqnarray}
where in the second equality we used the product rule of
differentiation and the identity $|D\phi_\xi^{-1}(x) | | D \phi_\xi
(\phi_\xi^{-1}(x) ) | = 1$. Integrating  \eqref{eq:weakder555}, we
find
\begin{eqnarray}
\sprod{ h_1 }{ V_{\xi + t e_\beta} h_2 } &=&
  \sprod{ h_1}{ S_{\xi} h_2} + \int_0^t \sprod{ h_1 }{ V_{\xi + se_\beta} A_\beta(\xi + se_\beta) h_2 } ds \;
  . \label{eq:intrel}
\end{eqnarray}
By an approximation argument using that  $|p|$ is a closed operator
and that $\HH \cap ( C_0^\infty(\R^{3N}) \otimes \FF )$ is a core
for $|p|$, we conclude that
 \eqref{eq:intrel} holds  for all $h_2 \in D(|p|)$.
For $h_2 \in D(|p|)$, $\xi \mapsto S_{\xi} A_\beta(\xi) h_2$ is
continuous and therefore \eqref{eq:intrel}  holds in fact in the
strong sense, i.e.,
$$
V_{\xi + t e_\beta} h_2 = S_{\xi} h_2 + \int_0^t V_{\xi + se_\beta}
A_\beta(\xi + se_\beta) h_2 ds \quad , \quad \forall h_2 \in D(|p|).
$$
This implies that for all $h_2 \in D(|p|)$,  $t \mapsto V_{\xi +
te_\beta} h_2$ is $C^1$ with derivative $ \partial_\beta S_{\xi} h_2
= S_{\xi} A_\beta(\xi) h_2$.\\

\noindent \underline{Step 2}: Suppose $\xi \mapsto \psi(\xi)$ is a
$C^1$ function such that  $\psi(\xi) \in D(|p|)$ and $\xi \mapsto
A_\beta(\xi) \psi(\xi)$ is continuous for all $\beta$. Then $\xi
\mapsto S_{\xi} \psi(\xi)$ is in $C^1$ and for all $\beta$,
\begin{eqnarray} \label{app:part}
\partial_\beta S_{\xi} \psi(\xi) =  S_{\xi} A_\beta(\xi) \psi(\xi) + S_{\xi}
\partial_\beta \psi(\xi)   \; .
\end{eqnarray}

Using  the differentiability of $\xi \mapsto \psi(\xi)$, $\psi(\xi)
\in D(|p|)$, and Step 1,
we see by the product rule of differentiation that $\xi \mapsto
S_{\xi} \psi(\xi)$ is differentiable with partial derivative
\eqref{app:part}. \eqref{app:part}  depends
continuously on $\xi$, because $\xi \mapsto S_{\xi}$ is strongly
continuous and, by assumption, both, $\xi \mapsto
\partial_\beta \psi_\xi$  and $\xi \mapsto A_\beta(\xi)
\psi_\xi$ are continuous.\\

\noindent \underline{Step 3}: For $\xi$ in a
neighborhood of zero:
\begin{itemize}
\item[(a)] $\varphi_\xi \in D(p^2)$, and the functions
$ \xi \mapsto p_\mu \varphi_\xi $ and $ \xi \mapsto p_\mu p_\nu
\varphi_\xi $ are analytic for all $\mu,\nu$.
\item[(b)] For all $\beta$, $\partial_\beta \varphi_\xi \in D(|p|)$
and $\xi \mapsto p_\mu \partial_\beta \varphi_\xi$ is analytic for
all $\mu$.
\end{itemize}

(a) For $h$ from a dense subset of $\HH$,
$\langle h, p_\nu p_\mu\varphi_\xi \rangle=\langle p_\nu p_\mu h , \varphi_\xi \rangle$,
which is analytic in $\xi$. Since, by \eqref{eq:uniformbound} below, $\|p_{\mu}p_{\nu}\varphi_\xi\|$ is locally
bounded, the analyticity of $\xi\mapsto p_{\mu}p_{\nu}\varphi_\xi$ follows by an approximation argument (Remark
III-1.38 in \cite{kat:per}). To prove the bound \eqref{eq:uniformbound},
we  use that, for small $\xi$, $\widetilde{H}(\xi)$ is an analytic
family of type (A) and the Coulomb potential is
infinitesimally Laplacian bounded, i.e.,
\begin{align}
\| p^2 \varphi_\xi \| &\leq  \| ( p^2 + H_f ) \varphi_\xi \|\nonumber \\
&\leq  {\rm const.} ( \| \widetilde{H}(\xi) \varphi_{\xi} \| + \|
\varphi_\xi \| )  = {\rm const.} ( \| E(\xi) \varphi_\xi \| + \|
\varphi_\xi \| ) \; . \label{eq:uniformbound}
\end{align}
 The
analyticity of  $\xi \mapsto p_\mu  \varphi_\xi$ follows using the
same arguments as above and the bound $\|p_\mu  \varphi_\xi \|^2
\leq \|\varphi_\xi \| \|p^2 \varphi_\xi\|$.

(b)  Since for all $\mu$, the operator $p_\mu $ is  closed and $\xi
\mapsto p_\mu  \varphi_\xi$ is analytic, we have $\partial_\beta
\varphi_\xi \in D(|p|)$ and $\partial_\beta p_\mu  \varphi_\xi =
p_\mu
\partial_\beta \varphi_\xi$.\\

\noindent \underline{Step 4}: For all $\alpha$, the functions $\xi
\mapsto\varphi_\xi$, $\xi \mapsto
\partial_\alpha \varphi_\xi$,  and $\xi \mapsto A_\alpha(\xi) \varphi_\xi$  satisfy the assumptions of Step 2.
In particular, $ \xi \mapsto S_{\xi} \varphi_\xi$ is of class $C^2$.\\

An iteration of Step~2 shows that the statement of the first
sentence implies the statement of the second sentence.  To prove the former,
we recall that analytic functions are of class $C^{\infty}$ \cite{Krantz}, and we begin with
the following observation. If $\xi \mapsto\psi(\xi) \in D(|p|)$  and $\xi \mapsto p_\mu \psi(\xi)$ are  in
$C^1$ for all $\mu$, then $\xi \mapsto A_\beta(\xi) \psi(\xi)$ is in
$C^1$ for all $\beta$, which follows using expression
\eqref{eq:infgen}. If, moreover, $\psi(\xi) \in D(p^2)$ and $\xi
\mapsto p_\mu p_\nu \psi(\xi)$ is in $C^1$ for all $\mu,\nu$, then
$A_\alpha(\xi) \psi(\xi) \in D(|p|)$ and $\xi \mapsto A_\beta(\xi)
A_\alpha(\xi) \psi(\xi)$ is in $C^1$ for all $\alpha,\beta$.
Applying, these properties to $\varphi_\xi$ and using Step~3~(a), we
see that $\xi \mapsto \varphi_\xi$ and $\xi \mapsto A_\alpha(\xi)
\varphi_\xi$ satisfy the assumptions of Step 2. Similarly, using
Step~3~(b), we see that $\xi \mapsto\partial_\alpha \varphi_\xi$
satisfies the assumptions of Step~2.
\end{proof}


\section{The Smooth Feshbach Map}
\label{sec:fesh}

In this section we describe the smooth Feshbach
transform of Bach et al. \cite{bacchefrosig:smo} in a slightly generalized
form that allows for non self-adjoint smoothed projections. There are further small differences
between our presentation here and the one of \cite{bacchefrosig:smo}, which are
explained in \cite{grihas:smo}.

Let $\chi$ and $\overline{\chi}$ be commuting, nonzero bounded
operators, acting on a separable Hilbert space $\HH$ and
satisfying $\chi^2 + \overline{\chi}^2 = 1$. By a \emph{Feshbach
pair} $(H,T)$ for $\chi$ we mean a pair of closed operators with
same domain
$$
    H,T: D(H)=D(T)\subset \HH\to\HH
$$
such that $H,T,W:=H-T$, and the operators
\begin{align*}
W_\chi &:= \chi W \chi,& W_{\chib} &:= \chib W \chib, \\
H_\chi &:= T + W_\chi,& H_{\chib} &:= T +  W_{\chib},
\end{align*}
defined on $D(T)$ satisfy the following assumptions:
\begin{itemize}
\item[(a)] $\chi T\subset T\chi$ and $\chib T\subset T\chib$,
\item[(b)] $T, H_{\chib}:D(T)\cap\ran\chib\to \ran\chib$ are bijections with bounded inverse.
\item[(c)] $\chib H_{\chib}^{-1} \chib W \chi: D(T)\subset\HH\to\HH$ is a bounded operator.
\end{itemize}

Henceforth we will call an operator $A: D(A) \subset \HH \to \HH$
\emph{bounded invertible in a subspace} $Y \subset \HH$ ($Y$ not
necessarily closed), if $A : D(A) \cap Y \to Y$ is a bijection
with bounded inverse.

\noindent\emph{Remarks.}
\begin{enumerate}
\item To verify (a), it suffices to show that $T\chi=\chi T$ and $T\chib=\chib
T$ on a \emph{core} of $T$.

\item If $T$ is bounded invertible in $\ran\chib$, $\|T^{-1}\chib W\chib\|<1$, 
$\|\chib WT^{-1}\chib\|<1$ and $T^{-1}\chib W\chi$ is bounded, then the bounded
invertibility of $H_{\chib}$ and condition (c) follow. See
Lemma~\ref{lm:F-cond} below.

\item Note that $\ran\chi$ and $\ran\chib$ need not be closed and
are not closed in the application to QED. One can however, replace
$\ran\chib$ by $\overline{\ran\chib}$ both in condition (b) and in
the statement of Theorem~\ref{sfm}, below. Then this theorem continues to
hold and the proof remains unchanged.
\end{enumerate}

Given a Feshbach pair $(H,T)$ for $\chi$, the operator
\begin{equation}\label{eq:fesh}
    F_\chi(H,T) := H_{\chi} - \chi W \chib H_{\chib}^{-1} \chib W \chi
\end{equation}
on $D(T)$ is called \emph{Feshbach map of} $H$. The mapping
$(H,T)\mapsto F_\chi(H,T)$ is called \emph{Feshbach map}. The
auxiliary operators
\begin{eqnarray*}
\begin{array}{ll}
Q_\chi := \chi - \chib  H_{\chib}^{-1} \chib W \chi  \\
Q_\chi^\# := \chi - \chi W \chib H_{\chib}^{-1} \chib,
\end{array}
\end{eqnarray*}
play an important role in the analysis of $F_\chi(H,T)$. By
conditions (a), (c), and the explanation above, they are bounded, and
$Q_{\chi}$ leaves $D(T)$ invariant. The Feshbach map is
isospectral in the sense of the following Theorem, which generalizes
Theorem 2.1 in \cite{bacchefrosig:smo} to non-selfadjoint $\chi$ and
$\chib$.

\begin{theorem}\label{sfm} Let $(H,T)$ be
a Feshbach pair for $\chi$ on a separable Hilbert space $\HH$.
Then the following holds:
\begin{enumerate}
\item[(i)] Let $Y$ be a subspace with  $\ran \chi \subset Y
\subset \HH$,
\begin{equation} \label{eq:feshsubspace} T: D(T) \cap Y \to Y,\qquad{\rm  and}\qquad\chib T^{-1} \chib Y\subset Y \; .
\end{equation}
Then $H:D(H)\subset\HH \to \HH$ is bounded invertible if and only
if $F_\chi(H,T): D(T) \cap Y \to Y$ is bounded invertible in $Y$.
Moreover,
\begin{eqnarray*}
   H^{-1} &=& Q_{\chi}F_{\chi}(H,T)^{-1} Q_{\chi}^{\#} + \chib H_{\chib}^{-1}\chib,\\
    F_\chi(H,T)^{-1} &=& \chi H^{-1}\chi + \chib T^{-1}\chib.
\end{eqnarray*}

\item[(ii)] $\chi \ker H
\subset \ker F_{\chi}(H,T)$ and $Q_{\chi}\ker F_{\chi}(H,T)\subset\ker H$.
The mappings
\begin{align}
\chi :&  \  \ker H \to \ker F_\chi(H,T)  \label{eq:keriso1}, \\
Q_\chi :&  \ \ker F_\chi(H,T) \to \ker H  \label{eq:keriso2},
\end{align}
are linear isomorphisms and inverse to each other.
\end{enumerate}
\end{theorem}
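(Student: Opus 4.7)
My strategy rests on two intertwining identities, valid on $D(T)$:
\begin{equation*}
H Q_\chi = \chi F_\chi(H,T), \qquad Q_\chi^\# H = F_\chi(H,T)\,\chi.
\end{equation*}
I first plan to derive these by direct computation, using $\chi^2 + \chib^2 = 1$, the commutations $\chi T\subset T\chi$ and $\chib T\subset T\chib$ from hypothesis~(a), and the relation $\chib W\chib = H_\chib - T$. Inserting $\chi^2 + \chib^2$ and pushing $T$ past $\chi$ or $\chib$ yields $H\chi = \chi H_\chi + \chib(\chib W\chi)$ and $H\chib = \chib H_\chib + \chi(\chi W\chib)$ on $D(T)$. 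The cross terms in $H[\chi - \chib H_\chib^{-1}\chib W\chi]$ then cancel to give the first intertwining identity, and the second follows by an analogous calculation.

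For part~(i), I will verify the inversion formulas algebraically. Assuming $F_\chi(H,T)^{-1}$ exists and is bounded on $Y$, the first intertwining identity reduces $HQ_\chi F_\chi^{-1}Q_\chi^\#$ to $\chi Q_\chi^\#$, while $T = H_\chib - \chib W\chib$ gives $H\chib H_\chib^{-1}\chib = \chib^2 + \chi(\chi W\chib)H_\chib^{-1}\chib$. Expanding $\chi Q_\chi^\# = \chi^2 - \chi(\chi W\chib)H_\chib^{-1}\chib$, the sum collapses to $\chi^2 + \chib^2 = 1$, verifying $H\cdot[Q_\chi F_\chi^{-1}Q_\chi^\# + \chib H_\chib^{-1}\chib] = 1$. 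The companion product $[\cdots]\cdot H = 1$ on $D(H)$ is handled by the same pattern using $Q_\chi^\# H = F_\chi\chi$, and the converse formula $F_\chi^{-1} = \chi H^{-1}\chi + \chib T^{-1}\chib$ (assuming bounded $H^{-1}$) is verified analogously. The conditions in \eqref{eq:feshsubspace} are precisely what is needed to ensure that the operators appearing in these formulas respect the subspace $Y$.

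For part~(ii), the intertwining identities immediately give $\chi(\ker H)\subset \ker F_\chi(H,T)$ and $Q_\chi(\ker F_\chi(H,T))\subset\ker H$. The relation $Q_\chi\chi = \mathrm{id}$ on $\ker H$ is quick: for $\psi\in\ker H$ the equation $\chib H\psi = H_\chib\chib\psi + \chib W\chi\cdot\chi\psi = 0$ gives $\chib\psi = -H_\chib^{-1}\chib W\chi\cdot\chi\psi$, and then $Q_\chi\chi\psi = \chi^2\psi + \chib^2\psi = \psi$.

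The main obstacle is showing $\chi Q_\chi = \mathrm{id}$ on $\ker F_\chi(H,T)$. Setting $R\phi := \chib\phi + \chi H_\chib^{-1}\chib W\chi\phi$, one finds $\phi - \chi Q_\chi\phi = \chib R\phi$, so it suffices to prove $\chib R\phi = 0$. For $\phi\in\ker F_\chi$, rewrite $F_\chi\phi = 0$ as $T\phi = -\chi W Q_\chi\phi$ and use $\chib W\chib = H_\chib - T$ to produce the cancellation $\chib T\phi = -T\chi H_\chib^{-1}\chib W\chi\phi$; this yields $TR\phi = 0$. Since $R\phi\in D(T)$ and $T$ commutes with $\chib$, one deduces $T(\chib R\phi) = 0$ with $\chib R\phi\in D(T)\cap\ran\chib$, and hypothesis~(b) then forces $\chib R\phi = 0$ by injectivity of $T$ on that subspace. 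The difficulty lies in recognizing that the injectivity of $T$ on $D(T)\cap\ran\chib$ built into the definition of a Feshbach pair is precisely the tool that closes the kernel isomorphism.
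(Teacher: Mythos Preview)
Your proposal is correct and follows essentially the same architecture as the paper: the intertwining relations $HQ_\chi=\chi F_\chi$ and $Q_\chi^\# H=F_\chi\chi$ are exactly identity~(c) of Lemma~\ref{lm:F-basics}, and your verification of the inversion formulas in part~(i) amounts to deriving identity~(a) on the fly.

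The one place where you diverge is in showing $\chi Q_\chi=\mathrm{id}$ on $\ker F_\chi$. The paper isolates the algebraic identity $(\chib T^{-1}\chib)F_\chi=1-\chi Q_\chi$ on $D(T)$ (identity~(b) of Lemma~\ref{lm:F-basics}), from which $\chi Q_\chi\phi=\phi$ for $\phi\in\ker F_\chi$ is immediate. You instead introduce the auxiliary vector $R\phi$, compute $TR\phi=0$ directly, and then invoke injectivity of $T$ on $D(T)\cap\ran\chib$ to kill $\chib R\phi$. Both arguments rest on the same hypothesis~(b) of the Feshbach-pair definition; the paper's route packages the computation once into a reusable operator identity, while yours carries it out in place. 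Your version has the virtue of making explicit that only injectivity of $T$ (not of $H_{\chib}$) on $\ran\chib$ is needed for this half of the kernel isomorphism.
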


\noindent\emph{Remarks.}
\begin{enumerate}
\item The subspaces $Y = \ran \chi$ and $Y = \HH$ satisfy the
conditions stated in \eqref{eq:feshsubspace}.
\item From \cite{bacchefrosig:smo} it is known that $\chi$ and $Q_{\chi}$
are one-to-one on $\ker H$ and $\ker F_{\chi}(H,T)$ respectively. The
stronger result (ii) will be derived from the new algebraic
identities (a) and (b) of the following lemma.
\end{enumerate}

Theorem~\ref{sfm} easily follows from the next lemma, which is of
interest and importance in its own right.

\begin{lemma}\label{lm:F-basics}
Let $(H,T)$ be a Feshbach pair for $\chi$ and let
$F:=F_{\chi}(H,T)$, $Q := Q_\chi$, and $Q^{\#} := Q^{\#}_\chi$ for
simplicity. Then the following identities hold:
\begin{align*}
(a)&& (\chib H_{\chib}^{-1} \chib)H &= 1-Q\chi,\quad\text{on}\
D(T),&\qquad
H(\chib H_{\chib}^{-1} \chib) &= 1-\chi Q^{\#},\quad \text{on}\ \HH,\\
(b)&& (\chib T^{-1}\chib)F &= 1-\chi Q,\quad\text{on}\
D(T),&\qquad
F(\chib T^{-1}\chib)&=1-Q^{\#}\chi,\quad \text{on}\ \HH,\\
(c)&& HQ &=\chi F,\quad\text{on}\ D(T),&\qquad Q^{\#}H &= F\chi,
\quad\text{on}\ D(T).\\
\end{align*}
\end{lemma}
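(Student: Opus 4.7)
My plan is to verify the six identities by direct algebraic computation from the definitions. The essential ingredients will be the partition of unity $\chi^2 + \chib^2 = 1$; the commutations $\chi T \subset T\chi$ and $\chib T \subset T\chib$ (which imply that $\chi,\chib$ leave $D(T)$ invariant and, on $\ran\chib$, commute with $T^{-1}$ and with $H_{\chib}^{-1}$); and the trivial commutation $\chi\chib = \chib\chi$. Throughout I will need to track domains carefully, since $T^{-1}$ and $H_{\chib}^{-1}$ are only defined on $\ran\chib$; the boundedness assumption~(c) in the definition of a Feshbach pair is exactly what makes $Q_\chi$ and $Q_\chi^{\#}$ bounded operators on $\HH$, and what legitimizes the compositions below.

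The first step will be to establish four ``intertwining'' identities obtained by inserting $1 = \chi^2 + \chib^2$ into $W$ and using that $\chi,\chib$ commute with $T$:
\begin{align*}
\chib H &= H_{\chib}\chib + \chib W\chi^2, & H\chib &= \chib H_{\chib} + \chi^2 W\chib,\\
\chi H &= H_{\chi}\chi + \chi W\chib^2, & H\chi &= \chi H_{\chi} + \chib^2 W\chi.
\end{align*}
For example $\chib H = T\chib + \chib W(\chi^2+\chib^2) = (T+\chib W\chib)\chib + \chib W\chi^2 = H_{\chib}\chib + \chib W\chi^2$. Identity~(a), part~1, then follows by multiplying the first of these on the left by $\chib H_{\chib}^{-1}$: $\chib H_{\chib}^{-1}\chib H = \chib^2 + \chib H_{\chib}^{-1}\chib W\chi\cdot\chi = 1 - (\chi - \chib H_{\chib}^{-1}\chib W\chi)\chi = 1 - Q\chi$. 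Part~2 of~(a) comes from the mirror manipulation using $H\chib$ on the right. For~(c), I would write $HQ = H\chi - H\chib H_{\chib}^{-1}\chib W\chi$ and substitute the intertwining formulas for $H\chi$ and $H\chib$; the $\chib^2 W\chi$ contributions cancel, leaving $\chi H_{\chi} - \chi^2 W\chib H_{\chib}^{-1}\chib W\chi = \chi F$. The companion $Q^{\#}H = F\chi$ follows analogously from the intertwining identities for $\chi H$ and $\chib H$.

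The hardest part will be~(b), which ties $T^{-1}$ to $H_{\chib}^{-1}$. A direct expansion gives $\chib F = T\chib + \chi\chib W Q$, hence $\chib T^{-1}\chib F = \chib^2 + \chi\chib T^{-1}\chib W Q$. Comparing with $1 - \chi Q = \chib^2 + \chi\chib H_{\chib}^{-1}\chib W\chi$ reduces the task to proving the crucial identity $\chib T^{-1}\chib W Q = \chib H_{\chib}^{-1}\chib W\chi$. This is precisely where the second resolvent identity $H_{\chib}^{-1} = T^{-1} - T^{-1}\chib W\chib H_{\chib}^{-1}$ on $\ran\chib$ (obtained from $(T+\chib W\chib)H_{\chib}^{-1} = 1$) enters: applied to $\chib W\chi$, whose range lies in $\ran\chib$, it gives exactly the required equality. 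For~(b), part~2, I would either use the mirror resolvent identity $H_{\chib}^{-1} = T^{-1} - H_{\chib}^{-1}\chib W\chib T^{-1}$, or pass to the adjoint Feshbach pair $(H^*,T^*)$ for $\chi^*$, under which $F\leftrightarrow F^*$ and $Q\leftrightarrow Q^{\#\,*}$. The main obstacle is thus not conceptual but one of careful domain bookkeeping: I must check at each step that $T^{-1}$ and $H_{\chib}^{-1}$ act only on vectors in $\ran\chib$, which the presence of $\chib$ in $\chib W\chi$ and the Feshbach pair conditions guarantee.
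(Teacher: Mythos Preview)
Your proposal is correct. The paper itself does not give a proof of this lemma but refers to \cite{grihas:smo}; your direct algebraic verification---via the intertwining relations $\chib H = H_{\chib}\chib + \chib W\chi^2$ (and their mirrors) together with the second resolvent identity $H_{\chib}^{-1} = T^{-1} - T^{-1}\chib W\chib H_{\chib}^{-1}$ on $\ran\chib$---is exactly the standard route, and your domain bookkeeping (in particular, the observation that $\chi$ commutes with $T^{-1}$ on $\ran\chib$ because $\xi\in\ran\chib$ implies $\chi\xi=\chib\chi\zeta\in\ran\chib$) is sound.
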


For the proofs of Lemma~\ref{lm:F-basics} and Theorem~\ref{sfm} we refer to
\cite{grihas:smo}.

\begin{lemma}\label{lm:F-cond}
Conditions (a),(b), and (c) on Feshbach pairs are satisfied if
\begin{itemize}
\item[(a')] $\chi T\subset T\chi$ and $\chib T\subset T\chib$,
\item[(b')] $T$ is bounded invertible in $\ran\chib$,
\item[(c')] $\|T^{-1}\chib W\chib\|<1$, $\|\chib WT^{-1}\chib\|<1$ and
  $T^{-1}\chib W\chi$ is a bounded operator.
\end{itemize}
\end{lemma}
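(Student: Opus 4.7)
The plan is to verify the three conditions in the definition of a Feshbach pair starting from the stronger hypotheses (a'), (b'), (c'), with the Neumann series as the central tool. Condition (a) is literally (a'), so nothing is needed there, and the bounded invertibility of $T$ in (b) is exactly (b'). The real content is therefore to invert $H_{\chib} = T + \chib W \chib$ and to handle the boundedness claim in (c).

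Before anything else I would record, from (a') and (b'), the identity
\[
   \chib T^{-1} \;=\; T^{-1}\chib \qquad\text{on } \ran\chib.
\]
Indeed, since $\chib$ preserves $D(T)$ (by (a')), both $\chib T^{-1}\phi$ and $T^{-1}\chib\phi$ lie in $D(T)\cap\ran\chib$ for $\phi\in\ran\chib$ and are mapped to $\chib\phi$ by $T$, hence agree by the injectivity in (b'). With this in hand I would factor, on $D(T)\cap\ran\chib$,
\[
   H_{\chib} \;=\; T\bigl(1 + T^{-1}\chib W \chib\bigr),
\]
and use $\|T^{-1}\chib W\chib\|<1$ from (c') to invert the parenthesis by a Neumann series as a bounded operator on $\HH$. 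Composition with $T^{-1}$ on $\ran\chib$ gives the candidate $H_{\chib}^{-1} = (1 + T^{-1}\chib W\chib)^{-1}T^{-1}$, and $H_{\chib}H_{\chib}^{-1}=1$ and $H_{\chib}^{-1}H_{\chib}=1$ follow by direct multiplication through the factorization. The complementary bound $\|\chib W T^{-1}\chib\|<1$ powers the dual factorization $H_{\chib}=(1+\chib W\chib T^{-1})T$, which gives the opposite-sided Neumann inverse and is the natural tool for verifying the formulas in Lemma~\ref{lm:F-basics}.

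For (c) I would substitute the formula just obtained:
\[
  \chib H_{\chib}^{-1}\chib W\chi \;=\; \chib\bigl(1 + T^{-1}\chib W\chib\bigr)^{-1}\bigl(T^{-1}\chib W\chi\bigr).
\]
The parenthesis on the right is bounded by the last clause of (c'), and the other two factors are bounded, so the whole composition extends to a bounded operator on $\HH$.

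The main obstacle, although essentially bookkeeping, is the careful treatment of domains: $T^{-1}$ only lives on $\ran\chib$, so each operator of the form $T^{-1}\chib W\chib$ is a priori defined only on $D(T)$ and must be shown to extend boundedly to all of $\HH$ before the Neumann series can be invoked; similarly I have to check that the resulting inverse still maps into $D(T)\cap\ran\chib$, which is what bijectivity in (b) demands. Both issues reduce to the intertwining $\chib T^{-1}=T^{-1}\chib$ above, so the substantive use of (a') is confined to this single step.
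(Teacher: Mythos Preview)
Your proposal is correct and follows essentially the same Neumann-series argument as the paper. The only cosmetic difference is the order of factorizations: the paper first writes $H_{\chib}=(1+\chib WT^{-1}\chib)T$ to establish bounded invertibility on $\ran\chib$ (since the perturbation of the identity then manifestly preserves $\ran\chib$) and then uses your factorization $H_{\chib}=T(1+T^{-1}\chib W\chib)$ for condition (c), whereas you start with the latter and mention the former as dual; both bounds in (c') are used in the same places.
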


\begin{proof}
By assumptions (a') and (b'), on $D(T)\cap\ran\chib$,
$$
    H_{\chib} = (1+\chib WT^{-1}\chib)T,
$$
and $T:D(T)\cap\ran\chib\to\ran\chib$ is a bijection with bounded
inverse. From (c') it follows that
$$
   1+\chib WT^{-1}\chib : \ran\chib \to\ran\chib
$$
is a bijection with bounded inverse. In fact, $(1+\chib W
T^{-1}\chib)\ran\chib\subset\ran\chib$, the Neumann series
$$
     \sum_{n\geq 0} (-\chib WT^{-1}\chib)^n = 1-\chib W T^{-1}\chib\sum_{n\geq 0}
     (-\chib WT^{-1}\chib)^n
$$
converges and maps $\ran\chib$ to $\ran\chib$. Hence
$H_{\chib}\upharpoonright\ran\chib$ is bounded invertible.

Finally, from $H_{\chib}=T(1+T^{-1}W_{\chib})$ and (c') it
follows that
$$
    H_{\chib}^{-1}\chib W\chi = (1+T^{-1}W_{\chib})^{-1}T^{-1}\chib W\chi,
$$
which, by (c'), is bounded.
\end{proof}


\section{The Initial Hamiltonian on Fock Space}
\label{sec:effect}

As explained in the introduction, the first step in our
renormalization analysis of $H_{g}(s)$ is to use the Feshbach map
to define an isospectral operator $H^{(0)}[s,z]$ on
$$
 \HH_{\rm red} := P_{[0,1]}(H_f)\FF.
$$
Let $\chi, \chib \in C^\infty(\R ; [0,1] )$ with
$\chi(t) = 1$ if $t\leq 3/4$,  $\chi(t)=0$ if $t \geq 1$ and
$\chi^2 + \chib^2 = 1$. For $\rho>0$ we define operators $\chi_{\rho} := \chi(H_f/\rho),\
\chib_{\rho} := \chib(H_f/\rho)$, and
\begin{eqnarray*}
\boldsymbol{\chi}(s) &:=& P_{\rm at}(s) \otimes \chi_1, \\
\overline{\boldsymbol{\chi}}(s) &:=& \overline{P}_{\rm at}(s)
\otimes 1 + P_{\rm at}(s) \otimes \overline{\chi}_1 \; .
\end{eqnarray*}
By assumption on $\chi$ and $\chib$,
$$
\boldsymbol{\chi}(s)^2  + \overline{\boldsymbol{\chi}}(s)^2=1,
$$
but $\boldsymbol{\chi}(s)$ and
$\overline{\boldsymbol{\chi}}(s)$ will not be self-adjoint unless $s$ is real.
Since $ P_{\rm at}(s)$ is a bounded projection with one-dimensional range, any
linear operator $L$ in $\HH_{\mathrm at}\otimes\FF$ that is defined and bounded on
$\ran P_{\rm at}(s) \otimes \FF$, defines a unique bounded linear transformation $\langle L
\rangle_{{\rm at},s}$ on $\FF$, through the equation
\begin{equation} \label{eq:eff2}
( P_{\rm at}(s) \otimes  1 ) L ( P_{\rm at}(s) \otimes 1 ) =
P_{\rm at}(s) \otimes \langle L \rangle_{{\rm at},s}.
\end{equation}

We are no ready to define the effective Hamiltonian $H^{(0)}[s,z]$ on $\HH_{\rm red}$. To this end we
assume, for the moment, that $(H_g(s)-z, H_0(s) - z)$ is a
Feshbach pair for $\boldsymbol{\chi}(s)$. This assumption will be
justified by Theorem \ref{thm:fesh} below. From $1 = P_{\rm at}(s)
+ \overline{P}_{\rm at}(s)$ and the fact that $P_{\rm at}(s)$ is a
rank one operator, we find
\begin{equation}\label{Heff1}
  F_{\boldsymbol{\chi}(s)}(H_g(s)-z, H_0(s) - z) = (H_0(s) - z )
  \overline{P}_{\rm  at}(s) \otimes 1 + P_{\rm at}(s) \otimes
  \tilde{H}^{(0)}[s,z] \; ,
\end{equation}
with
\begin{equation} \label{eq:defh0}
\tilde{H}^{(0)}[s,z] =  E_{\rm at}(s) - z + H_f + W_{\rm at}[s,z]
\end{equation}
and $W_{\rm at}[s,z] \in \mathcal{L}(\mathcal{F})$ given by
\begin{eqnarray} \label{eq:eff22}
  W_{\rm at}[s,z]
 & =& g \langle  \chi_1   W(s)  \chi_1 \rangle_{{\rm at},s}  \\
 \nonumber
 && - g^2  \langle  \chi_1 W(s) \overline{\boldsymbol{\chi}}(s) (H_g(s) -
z)_{\overline{\boldsymbol{\chi}}(s)}^{-1}
\overline{\boldsymbol{\chi}}(s) W(s)  \chi_1 \rangle_{{\rm at},s}\;
.
\end{eqnarray}
The operators $\tilde{H}^{(0)}[s,z]$ and $H_g(s)-z$ are isospectral
in the sense of Theorem~\ref{sfm}. More explicitly, the following
proposition holds true.

\begin{prop}\text{} \label{pro:initial} Let
 $(H_g(s)-z, H_0(s) - z)$ be a Feshbach pair for $\boldsymbol{\chi}(s)$.
 Then:
\begin{itemize}
\item[(i)]  $H_g(s)-z:D(H_0(s))\subset\HH \to \HH$ is bounded
invertible if and only if
  $\tilde{H}^{(0)}[s,z]$ is bounded invertible on  $\HH_{\rm red}$.
\item[(ii)] The following maps are linear isomorphisms and
inverses of each other:
\begin{eqnarray*}
\boldsymbol{\chi}(s)  : \ \ker (H_g(s) -
z) & \longrightarrow & P_{\rm at}(s) \HH_{\rm at} \otimes \ker \tilde{H}^{(0)}[s,z], \\
Q_{\boldsymbol{\chi}(s)} :  \ P_{\rm at}(s) \HH_{\rm at} \otimes
\ker \tilde{H}^{(0)}[s,z] & \longrightarrow & \ker (H_g(s) - z).
\end{eqnarray*}
\end{itemize}
\end{prop}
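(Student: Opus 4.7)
The plan is to read both conclusions directly off Theorem~\ref{sfm} applied to the Feshbach pair $(H_g(s)-z, H_0(s)-z)$ for $\boldsymbol{\chi}(s)$, once the structure of the Feshbach map is unpacked using \eqref{Heff1}. Set $F := F_{\boldsymbol{\chi}(s)}(H_g(s)-z, H_0(s)-z)$ and write $\HH = (P_{\rm at}(s)\otimes 1)\HH \oplus (\overline{P}_{\rm at}(s)\otimes 1)\HH$, a direct (though not in general orthogonal) sum. The identity \eqref{Heff1} says precisely that $F$ is block-diagonal with respect to this splitting: on the $\overline{P}_{\rm at}(s)$-summand it acts as $H_0(s)-z$, and on the $P_{\rm at}(s)$-summand it acts, through the rank-one factor $P_{\rm at}(s)$, as $\tilde H^{(0)}[s,z]$ on the Fock factor (this being precisely what \eqref{eq:eff2} encodes).

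Next I would dispose of the two ``auxiliary'' blocks. The $\overline{P}_{\rm at}(s)$-summand is contained in $\mathrm{Ran}\,\overline{\boldsymbol{\chi}}(s)$, so condition~(b) of the Feshbach pair makes $H_0(s)-z$ a bounded bijection there; thus this block contributes neither to $\ker F$ nor to any obstruction to invertibility. On the $P_{\rm at}(s)$-summand I would use that in \eqref{eq:eff22} the operator $W_{\rm at}[s,z]$ is flanked by $\chi_1 = \chi(H_f)$ on both sides and that $\chi(t)=0$ for $t\geq 1$: hence $W_{\rm at}[s,z]$ annihilates $\HH_{\rm red}^\perp$ and takes values in $\HH_{\rm red}$, so $\tilde H^{(0)}[s,z]$ preserves the splitting $\FF = \HH_{\rm red}\oplus\HH_{\rm red}^\perp$ and on $\HH_{\rm red}^\perp$ reduces to $E_{\rm at}(s)-z+H_f$, which is bounded invertible there because $H_f\geq 1$ on $\HH_{\rm red}^\perp$ and $|E_{\rm at}(s)-z|<1/2$ by Hypothesis~\ref{hyp:R}. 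Bounded invertibility of $F$ on $\HH$ therefore reduces to bounded invertibility of $\tilde H^{(0)}[s,z]$ on $\HH_{\rm red}$, and $\ker F = P_{\rm at}(s)\HH_{\rm at}\otimes \ker \tilde H^{(0)}[s,z]$ with $\ker \tilde H^{(0)}[s,z]\subset\HH_{\rm red}$.

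To close the argument I apply Theorem~\ref{sfm} with $Y=\HH$, which is permitted by Remark~1 following that theorem. Part~(i) of Theorem~\ref{sfm} then says $H_g(s)-z$ is bounded invertible iff $F$ is, which, combined with the previous paragraph, yields~(i). Part~(ii) of Theorem~\ref{sfm} provides mutually inverse linear isomorphisms $\boldsymbol{\chi}(s):\ker(H_g(s)-z)\to\ker F$ and $Q_{\boldsymbol{\chi}(s)}:\ker F\to \ker(H_g(s)-z)$; substituting the identification $\ker F = P_{\rm at}(s)\HH_{\rm at}\otimes \ker \tilde H^{(0)}[s,z]$ gives~(ii). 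I do not foresee any substantive obstacle here: once \eqref{Heff1} together with the rank-one structure of $P_{\rm at}(s)$ is used to exhibit the block-diagonal form of $F$, the proposition is essentially a translation of Theorem~\ref{sfm} into the language of the tensor decomposition $\HH = \HH_{\rm at}\otimes\FF$.
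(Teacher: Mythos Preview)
Your argument is correct; the difference from the paper lies only in how part~(i) is organized. You take $Y=\HH$ in Theorem~\ref{sfm} and then peel off the two ``auxiliary'' blocks---the $\overline{P}_{\rm at}(s)$--summand and the $\HH_{\rm red}^\perp$--part---by checking directly that each is boundedly invertible. The paper instead exploits the freedom in Theorem~\ref{sfm}(i) to choose $Y=\ran\big(P_{\rm at}(s)\otimes P_{[0,1]}(H_f)\big)=P_{\rm at}(s)\HH_{\rm at}\otimes\HH_{\rm red}$; since $\ran\boldsymbol{\chi}(s)\subset Y$ and the two hypotheses on $Y$ in \eqref{eq:feshsubspace} are easy to verify, one obtains in one stroke that $H_g(s)-z$ is boundedly invertible iff $F$ is boundedly invertible on $Y$, and on $Y$ equation~\eqref{Heff1} reduces $F$ to $P_{\rm at}(s)\otimes\tilde H^{(0)}[s,z]$. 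This is shorter because the auxiliary blocks never appear. Your route has the mild advantage that it also explicitly exhibits the inclusion $\ker\tilde H^{(0)}[s,z]\subset\HH_{\rm red}$, which the paper records separately right after the proposition. For part~(ii) the two proofs coincide.
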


\begin{proof}
(i) We fix $(s,z)$ and for notational simplicity suppress the $s$
and $z$ dependence. Let  $Y = \ran(P_{\rm at} \otimes
P_{[0,1]}(H_f))$.
Then $H_0 :
D(H_0) \cap Y \to Y$ and $ \boldsymbol{\chib} (H_0 - z )^{-1}
\boldsymbol{\chib} Y \subset Y$. By Theorem \ref{sfm},
\begin{eqnarray*}
\lefteqn{H_g - z \ {\rm is \ bounded \ invertible \ in}  \ \HH} \\
 &\Leftrightarrow & F_{\boldsymbol{\chi}}(H_g-z,H_0-z) \ {\rm is \ bounded \
invertible \ in}\ Y\\
& \Leftrightarrow & \tilde{H}^{(0)} \ {\rm is \ bounded \ invertible
\ on } \ \HH_{\rm red},
\end{eqnarray*}
where the last equivalence follows from $F_{\boldsymbol{\chi}}(H_g-z,H_0-z)=1\otimes \tilde{H}^{(0)}$ on
$Y$.\\
Statement (ii) follows immediately from Theorem~\ref{sfm}, (ii).
\end{proof}

Since $\ran\boldsymbol{\chi}(s)\subset P_{\rm at}(s)\HH_{\rm at}\otimes\HH_{\rm
  red}$, Proposition~\ref{pro:initial} (ii) implies that
$$
\ker \tilde{H}^{(0)}[s,z] = \ker \tilde{H}^{(0)}[s,z] \cap \HH_{\rm
red} = \ker (\tilde{H}^{(0)}[s,z] \upharpoonright \HH_{\rm red}) \; .
$$
Therefore, and because of Proposition~\ref{pro:initial} (i), it is sufficient for
our purpose to study the restriction
$$
H^{(0)}[s,z] := \tilde{H}^{(0)}[s,z] \upharpoonright \HH_{\rm red}.
$$
In the remainder of this section we use Hypotheses~\ref{hyp:G}-\ref{hyp:R} to
verify, for small $g$, the assumption of Proposition~\ref{pro:initial} and to
show that $H^{(0)}[s,z]$ is analytic on $\UU$. To this end we need the
following lemmas.


\begin{lemma} \label{lem:eight}
Suppose that Hypotheses \ref{hyp:G}--\ref{hyp:R} hold. Then,
for all $(s,z) \in \UU$, $H_0(s)-z$ is bounded invertible on
$\ran\obchi(s)$ and
\begin{eqnarray} \label{eq:344}
\sup_{(s,z)\in\UU} \left\|(H_f+1)(H_0(s)-z)^{-1}\obchi(s)\right\|
&<& \infty \\
\nonumber\sup_{(s,z) \in \UU}
\left\|W(s)(H_0(s)-z)^{-1}\obchi(s)\right\| &<& \infty\\
\nonumber\sup_{(s,z) \in \UU} \left\|(H_0(s) - z )^{-1}\obchi(s)
W(s) \right\| &<& \infty .
\end{eqnarray}
\end{lemma}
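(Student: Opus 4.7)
The plan is to split $\obchi(s)=\overline{P}_{\rm at}(s)\otimes 1+P_{\rm at}(s)\otimes\overline{\chi}_1$ into its two commuting summands and verify the invertibility of $H_0(s)-z$ and the first bound on each piece separately. On $\ran(\overline{P}_{\rm at}(s)\otimes 1)$ the projection $\overline{P}_{\rm at}(s)$ commutes with $\Hat(s)$, so $H_0(s)-z$ acts there as $(\Hat(s)-z+H_f)\overline{P}_{\rm at}(s)\otimes 1$. Since $H_f$ is self-adjoint, its spectral resolution $H_f=\int q\,dE_q$ reduces the estimate to a pointwise-in-$q$ bound on the operator-valued function $q\mapsto(\Hat(s)-z+q)^{-1}\overline{P}_{\rm at}(s)$, and Hypothesis~\ref{hyp:R} provides
\begin{equation*}
\sup_{q\geq 0,\,(s,z)\in\UU}\left\|(q+1)(\Hat(s)-z+q)^{-1}\overline{P}_{\rm at}(s)\right\|<\infty,
\end{equation*}
which gives bounded invertibility on this summand together with a uniform bound on $\|(H_f+1)(H_0(s)-z)^{-1}\overline{P}_{\rm at}(s)\otimes 1\|$. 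On $\ran(P_{\rm at}(s)\otimes \overline{\chi}_1)$, the operator reduces to $\Eat(s)-z+H_f$; because $\overline{\chi}(t)=0$ for $t\leq 3/4$, the spectral support of $H_f$ on $\ran\overline{\chi}_1$ lies in $[3/4,\infty)$, and combining this with $|\Eat(s)-z|<1/2$ from Hypothesis~\ref{hyp:R} gives $|\Eat(s)-z+q|\geq 1/4$ on that range. Hence $(q+1)/|\Eat(s)-z+q|$ is uniformly bounded there. Summing the two contributions establishes the invertibility claim and the first supremum.

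For the second bound I would use the factorization
\begin{equation*}
W(s)(H_0(s)-z)^{-1}\obchi(s)=\bigl[W(s)(H_f+1)^{-1/2}\bigr]\bigl[(H_f+1)^{1/2}(H_0(s)-z)^{-1}\obchi(s)\bigr].
\end{equation*}
The first bracket is uniformly bounded by the combination of Lemma~\ref{lm:A1} (giving $\|a^\#(G_s)(H_f+1)^{-1/2}\|\leq\|G_s\|_\omega$) and Hypothesis~\ref{hyp:G}. For the second bracket, $(H_f+1)^{1/2}$ commutes with $\obchi(s)$ (since both summands of $\obchi(s)$ involve only functions of $H_f$ on the Fock factor) and with $(H_0(s)-z)^{-1}$ on the appropriate subspace (because $H_f$ and $\Hat(s)\otimes 1$ act on distinct tensor factors), so the bracket equals $(H_f+1)^{-1/2}\cdot(H_f+1)(H_0(s)-z)^{-1}\obchi(s)$, which is uniformly bounded by the first supremum. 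The third bound follows from the analogous factorization
\begin{equation*}
(H_0(s)-z)^{-1}\obchi(s)W(s)=\bigl[(H_0(s)-z)^{-1}\obchi(s)(H_f+1)^{1/2}\bigr]\bigl[(H_f+1)^{-1/2}W(s)\bigr],
\end{equation*}
where the first bracket is bounded by the same commutation argument and the first supremum, while boundedness of the second bracket follows from Lemma~\ref{lm:A1} applied to the adjoint.

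The main obstacle will be ensuring rigour in the spectral calculus for $H_0(s)$: because $\Hat(s)$ is not self-adjoint for complex $s$, the identification
$(H_0(s)-z)^{-1}\overline{P}_{\rm at}(s)\otimes 1=\int(\Hat(s)-z+q)^{-1}\overline{P}_{\rm at}(s)\otimes dE_q$
must be justified through the commutation of $\Hat(s)\otimes 1$ with the self-adjoint $1\otimes H_f$, and the uniform resolvent estimate of Hypothesis~\ref{hyp:R} must be invoked to make the integrand norm-bounded. Once the pointwise-in-$q$ estimates and the commutation relations between $(H_f+1)^{1/2}$, $\obchi(s)$, and $(H_0(s)-z)^{-1}$ are pinned down, the remaining work is routine bounded-operator arithmetic uniform in $(s,z)\in\UU$.
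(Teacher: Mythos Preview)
Your proposal is correct and follows essentially the same route as the paper: the same decomposition $\obchi(s)=\overline{P}_{\rm at}(s)\otimes 1+P_{\rm at}(s)\otimes\overline{\chi}_1$, the same pointwise-in-$q$ reduction via the spectral representation of $H_f$ together with Hypothesis~\ref{hyp:R}, and the same factorization through $(H_f+1)$ combined with Lemma~\ref{lm:A1} for the two $W(s)$ bounds. The only cosmetic difference is that the paper factors through $(H_f+1)^{-1}$ rather than $(H_f+1)^{-1/2}$, which is immaterial since the first supremum controls the full power $(H_f+1)$.
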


\begin{proof}
The parameter $s$ is suppressed in this proof to make long
expressions more readable. Recall that $\obchi H_0\subset
H_0\obchi$. Hence $H_0-z$ maps $D(H_0)\cap\ran\obchi$ into
$\ran\obchi$. Moreover,
$$
    \ran \obchi = \ran(\Patb\otimes 1)\oplus\ran(\Pat\otimes \chib_1)
$$
where
\begin{eqnarray}\label{eq:eight1}
    H_0 - z &:& D(H_0)\cap \ran(\overline{P}_{\rm at}\otimes
    1)\to \ran(\overline{P}_{\rm at}\otimes 1),\\
    H_0 - z &:& D(H_0)\cap \ran(P_{\rm at}\otimes
    \chib_1)\to \ran(P_{\rm at}\otimes \chib_1).\label{eq:eight2}
\end{eqnarray}
Working in a spectral representation where $H_f$ is multiplication
by $q\geq 0$, it is easily seen from Hypothesis~\ref{hyp:R} that
\eqref{eq:eight1} and \eqref{eq:eight2} are bounded invertible for $(s,z)\in\UU$,
and hence that $(H_0-z):D(H_0)\cap\ran\obchi\to \ran\obchi$ is a
bijection. The inverses of \eqref{eq:eight1} and \eqref{eq:eight2}
are bounded by
\begin{eqnarray}\label{eq:eight3}
   \|(H_0-z)^{-1}\Patb\otimes 1\| &\leq
   &\sup_{(s,z)\in\UU}\sup_{q\geq 0}\|(\Hat-z+q)^{-1}\Patb\|,\\
\|(H_0-z)^{-1}\Pat\otimes \chi(H_f\geq 3/4)\| &\leq
&\sup_{(s,z)\in\UU}\sup_{q\geq
3/4}\left|\frac{1}{\Eat-z+q}\right|\|\Pat\|.\label{eq:eight4}
\end{eqnarray}
Since $\obchi = [\Patb\otimes 1]\obchi+ [\Pat\otimes\chi(H_f\geq
3/4)]\obchi$ it follows from \eqref{eq:eight3}, \eqref{eq:eight4}
and Hypothesis~\ref{hyp:R} that
$$
  \sup_{(s,z)\in\UU}\|(H_0-z)^{-1}\upharpoonright\ran\obchi\|<\infty.
$$

Bound \eqref{eq:344} is proved in a similar way, using
\begin{eqnarray}\label{eq:eight5}
   \|(H_f+1)(H_0-z)^{-1}\Patb\otimes 1\| &\leq
   &\sup_{(s,z)\in\UU}\sup_{q\geq 0}\left\|\frac{q+1}{\Hat-z+q}\Patb\right\|,\\
\|(H_f+1)(H_0-z)^{-1}\Pat\otimes \chi(H_f\geq 3/4)\| &\leq
&\sup_{(s,z)\in\UU}\sup_{q\geq
3/4}\left|\frac{q+1}{\Eat-z+q}\right|\|\Pat\|,\label{eq:eight6}
\end{eqnarray}
instead of \eqref{eq:eight3} and \eqref{eq:eight4}. The right
sides of \eqref{eq:eight5} and \eqref{eq:eight6} are finite by
Hypothesis~\ref{hyp:R}.

The remaining inequalities of Lemma~\ref{lem:eight} follow from
\eqref{eq:344} and
\begin{eqnarray*}
\sup_{s}\|W(s) (H_f+1)^{-1}\| &\leq & \sup_{s}\|G_s\|_{\omega},\\
\sup_{s}\|(H_f+1)^{-1}W(s)\| &\leq & \sup_{s}\|G_s\|_{\omega},
\end{eqnarray*}
where $\sup_{s}\|G_s\|_{\omega}<\infty$ by Hypothesis~\ref{hyp:G}.
\end{proof}


\begin{lemma}\label{lem:analyticgs}
The mapping $s \mapsto W(s)(H_f + 1 )^{-1/2} \in \mathcal{L}(\HH)$ is
analytic  on $V$.
\end{lemma}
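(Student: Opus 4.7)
The plan is to decompose $W(s)(H_f+1)^{-1/2} = a^{*}(G_s)(H_f+1)^{-1/2} + a(G_{\bar s})(H_f+1)^{-1/2}$ and to establish analyticity of each summand separately. Lemma~\ref{lm:A1} already supplies the basic bound $\|a^{\#}(G_s)(H_f+1)^{-1/2}\|\leq \|G_s\|_{\omega}$, so by Hypothesis~\ref{hyp:G} both summands are bounded operators with norms uniformly controlled in $s\in V$.

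For the creation summand, I would observe that the map $G\mapsto a^{*}(G)(H_f+1)^{-1/2}$ is complex-linear from $L^{2}(\R^{3},\LL(\HH_{\rm at}))$ into $\LL(\HH)$ and is bounded, by Lemma~\ref{lm:A1}. Hypothesis~\ref{hyp:G} asserts that $s\mapsto G_s$ is analytic into this Banach space, and composition of a Banach-space-valued analytic function with a bounded linear map preserves analyticity, so $s\mapsto a^{*}(G_s)(H_f+1)^{-1/2}$ is analytic on $V$.

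For the annihilation summand the only real point is the $\bar s$, and my plan is a local power-series argument that turns this potential obstacle into a clean cancellation. Fix $s_0\in V$ and expand
$$
G_s \;=\; \sum_{n\geq 0} \frac{(s-s_0)^{n}}{n!}\, G_n,
$$
convergent in $L^{2}(\R^{3},\LL(\HH_{\rm at}))$ on a polydisc about $s_0$ (read as a multi-index expansion when $\nu>1$). Substituting $\bar s$ for $s$ and using the \emph{antilinearity} of $G\mapsto a(G)$ yields
$$
a(G_{\bar s}) \;=\; \sum_{n\geq 0} \frac{(s-s_0)^{n}}{n!}\, a(G_n),
$$
because the complex conjugation inherent in $s\mapsto\bar s$ is exactly undone by the antilinearity of $a$. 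Each coefficient $a(G_n)(H_f+1)^{-1/2}$ is bounded in $\LL(\HH)$ by $\|G_n\|_{\omega}$ via Lemma~\ref{lm:A1}, and these norms satisfy the Cauchy estimates inherited from the Banach-space analyticity of $s\mapsto G_s$. Hence the series converges in operator norm on a (possibly smaller) polydisc about $s_0$, establishing norm-analyticity there; since $s_0\in V$ was arbitrary, $s\mapsto a(G_{\bar s})(H_f+1)^{-1/2}$ is analytic on $V$.

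The only genuinely nontrivial point is the interplay in the annihilation term between the conjugation in $G_{\bar s}$ and the antilinearity of $a$, and the power-series approach above is essentially a quantitative way of making this cancellation explicit. A more abstract alternative would be to verify weak analyticity on a dense core of finite-particle vectors with smooth components—where the matrix elements are explicit integrals pairing with $G_s$ for $a^{*}$ and antilinearly with $G_{\bar s}$ for $a$—and then invoke the standard fact that a uniformly norm-bounded, weakly analytic $\LL(\HH)$-valued function is norm analytic. I expect either route to work, but the power-series argument is arguably the cleanest.
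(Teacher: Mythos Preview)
Your proposal is correct in spirit and yields a valid proof, but it follows a different path from the paper. The paper argues via the standard ``uniformly bounded plus weakly analytic on a dense set implies norm analytic'' principle (Kato, Thm.~III-3.12): from $\|W(s)(H_f+1)^{-1/2}\|\leq 2\|G_s\|_\omega$ one has uniform boundedness on $V$, and then one checks analyticity of the matrix elements $\langle\psi_1,W(s)(H_f+1)^{-1/2}\psi_2\rangle$ for $\psi_1,\psi_2$ in the dense span of finite tensor products $\varphi\otimes S_n(h_1\otimes\cdots\otimes h_n)$; these reduce to finite linear combinations of pairings $\langle\varphi_1\otimes h,G_s\varphi_2\rangle$, which are analytic by Hypothesis~\ref{hyp:G}. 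This is exactly the ``abstract alternative'' you sketch at the end. Your direct route---factoring through the bounded \emph{linear} map $G\mapsto a^*(G)(H_f+1)^{-1/2}$ for the creation part and a power-series argument exploiting the \emph{antilinearity} of $a$ for the annihilation part---is more structural and avoids computing matrix elements explicitly; the paper's route, on the other hand, treats both summands uniformly and needs only one soft argument.

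One small bookkeeping slip in your annihilation computation: expanding $G_s=\sum_n\frac{(s-s_0)^n}{n!}G_n$ about $s_0$ and then substituting $\bar s$ gives $G_{\bar s}=\sum_n\frac{(\bar s-s_0)^n}{n!}G_n$, so antilinearity of $a$ yields
\[
a(G_{\bar s})=\sum_n\frac{(s-\bar{s_0})^n}{n!}\,a(G_n),
\]
a power series centered at $\bar{s_0}$, not at $s_0$ as you wrote. The fix is immediate: either expand $G$ about $\bar{s_0}$ (which lies in $V$ since $V=\overline V$) to obtain a series in $(s-s_0)$, or simply observe that as $s_0$ ranges over $V$ so does $\bar{s_0}$, so analyticity at every $\bar{s_0}$ is analyticity on all of $V$. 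Either way the argument goes through.
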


\begin{proof}  
From
$$
\| W(s) (H_f + 1 )^{-1/2} \| \leq 2 \| G_s \|_\omega
$$
we see, by Hypothesis I,  that $s \mapsto W(s) (H_f + 1 )^{-1/2}$ is
uniformly bounded. By this uniform bound (see Theorem III-3.12 of
\cite{kat:per}) it is sufficient to show that the function
\begin{equation} \label{eq:weakanalyticity}
s \mapsto (\psi_1, W(s) (H_f + 1 )^{-1/2} \psi_2) \; ,
\end{equation}
is analytic on $V$, for all $\psi_1, \psi_2$ in the dense linear
subspace spanned by all vectors of the form $\varphi \otimes S_n (h_1 \otimes h_2\otimes \cdots \otimes h_n)$,
with $\varphi \in \HH_{\rm at}$ and $h_i \in \mathfrak{h}$,
$n\in\N$. For such vectors, \eqref{eq:weakanalyticity} is a linear
combination of terms of the form $(\varphi_1 \otimes h, G_s
\varphi_2)$, with $\varphi_1, \varphi_2 \in \HH_{\rm at}$ and $h \in
\mathfrak{h}$. They are analytic by Hypothesis I.
\end{proof}


\begin{theorem}  \label{thm:fesh}
Suppose Hypotheses \ref{hyp:G}--\ref{hyp:R} hold, and let $\UU\subset
V\times\C$ be given by Hypothesis \ref{hyp:R}. Then there exists a $g_0>0$ such that
for all $(s,z)\in\UU$ and for all $g\in [0,g_0)$, the pair $(H_g(s)-z, H_0(s) - z)$ is a Feshbach pair for $\boldsymbol{\chi}(s)$.
Moreover, $H_g^{(0)}[s,z]$ is analytic on $\UU$.
\end{theorem}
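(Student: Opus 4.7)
The plan is to verify the three sufficient conditions of Lemma~\ref{lm:F-cond} to obtain the Feshbach-pair property, and then to express $H_g^{(0)}[s,z]$ as a uniformly convergent Neumann series in $g$ with analytic coefficients.

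Setting $T := H_0(s) - z$, condition~(a') holds because $\Pat(s)$, being the Riesz projection onto an isolated eigenvalue of $\Hat(s)$, commutes with $\Hat(s)$ on $D(\Hat(s))$, while $\chi_1$ and $\chib_1$ are functions of $H_f$; hence $\bchi(s)$ and $\obchi(s)$ both commute with $T$ on $D(T)$. Condition~(b'), bounded invertibility of $T$ on $\ran\obchi(s)$ for $(s,z) \in \UU$, is exactly Lemma~\ref{lem:eight}. For~(c'), Lemma~\ref{lem:eight} yields finite uniform bounds $C_1, C_2$ on $\|W(s)(H_0(s)-z)^{-1}\obchi(s)\|$ and $\|(H_0(s)-z)^{-1}\obchi(s)W(s)\|$ over $\UU$, so any $g_0 > 0$ with $g_0 \max(C_1, C_2) < 1$ gives the required norm bounds for $g < g_0$. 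Boundedness of $T^{-1}\obchi(s) gW(s)\bchi(s)$ follows from the factorization $W(s)\bchi(s) = [W(s)(H_f+1)^{-1/2}]\cdot[(H_f+1)^{1/2}\bchi(s)]$, both factors being bounded (the first by Lemma~\ref{lm:A1} and Hypothesis~\ref{hyp:G}, the second because $\chi_1$ is supported in $\{H_f \leq 1\}$).

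For analyticity, from \eqref{eq:defh0}--\eqref{eq:eff22} I would write $H_g^{(0)}[s,z] = (\Eat(s) - z + H_f)\upharpoonright\HH_{\rm red} + W_{\rm at}[s,z]\upharpoonright\HH_{\rm red}$. The scalar term $\Eat(s)$ is analytic in $s$ by Kato--Rellich. The first-order piece $g\langle\chi_1 W(s)\chi_1\rangle_{{\rm at},s}$ of $W_{\rm at}[s,z]$ is analytic because $\Pat(s)$ is analytic and $W(s)(H_f+1)^{-1/2}$ is analytic in $s$ by Lemma~\ref{lem:analyticgs}. For the second-order piece, expand the restricted resolvent as the Neumann series
\begin{equation*}
   (H_g(s)-z)_{\obchi(s)}^{-1}\obchi(s) = \sum_{n=0}^{\infty}(-g)^n \bigl[(H_0(s)-z)^{-1}\obchi(s)W(s)\obchi(s)\bigr]^n (H_0(s)-z)^{-1}\obchi(s),
\end{equation*}
which converges in operator norm uniformly on $\UU$ for $g < g_0$ by~(c'). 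Each term is analytic in $(s,z)$, so the sum is analytic by uniform convergence, whence so is $H_g^{(0)}[s,z]$.

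The main technical obstacle is the analyticity of the ``free'' restricted resolvent $(s,z) \mapsto (H_0(s)-z)^{-1}\obchi(s)$. Split $\obchi(s) = \Patb(s)\otimes 1 + \Pat(s)\otimes\chib_1$. The piece on $\ran(\Pat(s)\otimes\chib_1)$ equals $\Pat(s)\otimes(\Eat(s)-z+H_f)^{-1}\chib_1$, manifestly analytic since $\Pat(s)$ and $\Eat(s)$ are analytic and the inverse is well-defined given $\supp(\chib) \subset [3/4,\infty)$ and $|\Eat(s)-z| < 1/2$. For the piece on $\ran(\Patb(s)\otimes 1)$, work in a spectral representation in which $H_f$ acts as multiplication by $q \geq 0$; the resolvent becomes a direct integral over $q$ of $(\Hat(s)-z+q)^{-1}\Patb(s)$. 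Joint analyticity of each integrand in $(s,z)\in\UU$ follows from Hypothesis~\ref{hyp:R}, which both guarantees that $z-q$ avoids the spectrum of $\Hat(s)\upharpoonright\Patb(s)\HH_{\rm at}$ and provides the uniform bound dominating the integrand; uniform convergence of the direct integral then transfers analyticity to the operator-valued map.
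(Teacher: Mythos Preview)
Your proposal is correct and follows essentially the same route as the paper: verify (a')--(c') of Lemma~\ref{lm:F-cond} via Lemma~\ref{lem:eight}, then expand the $\obchi$-resolvent in a uniformly convergent Neumann series and establish analyticity of the free restricted resolvent by splitting $\obchi(s)$ and appealing to Hypothesis~\ref{hyp:R} in a spectral representation of $H_f$. The only points you gloss over that the paper makes explicit are the preliminary check that $H_g(s)$ is closed on $D(H_0(s))$ (needed for the Feshbach-pair definition; it follows from infinitesimal $H_0$-boundedness of $W(s)$), and that to make each Neumann term a product of \emph{bounded} analytic factors you must carry an $(H_f+1)$-weight through the free resolvent and factor as $W(s)(H_f+1)^{-1}\cdot(H_f+1)(H_0(s)-z)^{-1}\obchi(s)$ --- your direct-integral argument with the $(q+1)$ bound of Hypothesis~\ref{hyp:R} already yields exactly this weighted analyticity.
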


\begin{proof}
To prove that $H_g(s)=H_0(s)+gW(s)$ is closed on $D(H_0(s))$ for all
$g\in\R$, we prove that $W(s)$ is infinitesimally bounded with
respect to $H_0(s)$. Suppose that $(s,z)\in \UU$ for some $z \in
\C$. By Hypothesis~\ref{hyp:G},
\begin{equation}\label{eq:fesh1}
    \|W(s)(H_f+1)^{-1/2}\| \leq 2\|G_s\|_{\omega} <\infty.
\end{equation}
On the other hand, by the reasoning in the proof of
\eqref{eq:344}, Hypothesis \ref{hyp:R} implies that $z-q \in
\rho(H_{\rm at}(s))$ for $q \geq 1$, that $w:=z-1 \in
\rho(H_0(s))$, and that
\begin{equation}\label{eq:fesh2}
  H_f (H_0(s) - w)^{-1} \in \LL(\HH).
\end{equation}
Combining \eqref{eq:fesh1} and \eqref{eq:fesh2} we see that, for
all $\ph\in D(\Hat(s))\otimes D(H_f)$,
\begin{eqnarray*}
  \|W(s)\ph\|^2 &\leq & C_0 \sprod{\ph}{(H_f+1)\ph}\\
  &=& C_0 \sprod{\ph}{(H_f+1)(H_{0}(s)-w)^{-1}(H_{0}(s)-w)\ph}\\
  &\leq & C_1 \|\ph\| \|H_0(s)\ph\| + C_2 \|\ph\|^2\\
  & \leq & C_1 \eps \|H_0(s)\ph\|^2 + \left(\frac{C_1}{\eps}+C_2\right)\|\ph\|^2
\end{eqnarray*}
with constants $C_0,C_1,C_2$.

Next we verify the criteria for Feshbach pairs from
Lemma~\ref{lm:F-cond}. Obviously, $\bchi(s) H_0(s) = H_0(s)\bchi(s)$ and
$\obchi(s) H_0(s) = H_0(s)\obchi(s)$ on $D(\Hat)\otimes D(H_f)$. By
the first remark of Section~\ref{sec:fesh}, this proves condition (a') of
Lemma~\ref{lm:F-cond}.

By Lemma \ref{lem:eight}, $H_0(s)-z$ is bounded invertible on
$\ran\obchi(s)$ and
\begin{align}\label{eq:fesh3}
&\sup_{(s,z) \in U} \left\| g \obchi(s)W(s)(H_0(s)-z)^{-1}\obchi(s)\right\|< 1,\\
& \sup_{(s,z) \in U} \left\|( H_0(s) - z)^{-1}\obchi(s)g
W(s)\obchi(s)\right\|
 < 1,\nonumber
\end{align}
for  $g$ sufficiently small. This proves (b') and (c') of
Lemma~\ref{lm:F-cond} and hence completes the proof that
$(H_g(s)-z, H_0(s) - z)$ is a Feshbach pair.


It remains to prove the analyticity of $H^{(0)}[s,z]\restricted
\HH_{\rm red}$. By \eqref{Heff1},  $H^{(0)}(s,z)$ is analytic if
$W_{\rm at}[s,z]$ is analytic. We will show that
\begin{eqnarray} \label{eq:75}
(s,z) \mapsto \chi_1\left( g W(s) -  g^2 W(s)
\overline{\boldsymbol{\chi}}(s) (H_g(s) - z)_{\obchi(s)}^{-1}
\obchi(s) W(s) \right) \chi_1
\end{eqnarray}
is analytic in $s$ and $z$. By Eqns.~\eqref{eq:eff2} and
\eqref{eq:eff22} this will imply the analyticity of $\langle
\alpha , W_{\rm at}[s,z] \beta \rangle$ for all $\alpha, \beta \in
\FF$, which, by Theorem 3.12 of Chapter III in \cite{kat:per},
proves that $W_{\rm at}[s,z]$ is analytic in $s$ and $z$.

Since $\chi_1 W(s)$ and $W(s)\chi_1$ are analytic the analyticity
of \eqref{eq:75} follows if we show that
\begin{equation}\label{eq:fesh4}
    (s,z) \mapsto (H_g(s) - z)_{\obchi(s)}^{-1} \obchi(s)
\end{equation}
is analytic. Assuming that $|g|$ is small enough for
\eqref{eq:fesh3} to hold, the Neumann series
\begin{eqnarray}\label{eq:fesh5}
(H_g - z)_{\obchi}^{-1} \vert_{\ran \obchi} = (H_0 - z )^{-1}
\sum_{n=0}^\infty \left(-\obchi g W ( H_0 - z)^{-1}\obchi\right)^n
\big\vert_{\ran \obchi}
\end{eqnarray}
converges uniformly for $(s,z)\in \UU$. Hence \eqref{eq:fesh4}
will be analytic if each term of the series \eqref{eq:fesh5} is
analytic. By Lemma~\ref{lem:analyticgs}, $W(s)(H_f + 1)^{-1}$ is
analytic. Hence it remains to prove analyticity of
$$
( H_f + 1 )(H_0(s) - z )^{-1}\obchi(s)\vert_{\ran \obchi(s)}.
$$
By the definition of $\obchi(s)$,
\begin{eqnarray*}
\lefteqn{ (H_f + 1)(H_0(s) - z )^{-1} \obchi(s)} \\
&& = (H_f + 1 )(H_0(s) - z)^{-1} ( \overline{P}_{\rm at}(s) \otimes
1 ) + (H_f + 1 )(E_{\rm at}(s) + H_f - z )^{-1}(P_{\rm at}(s)
\otimes \overline{\chi}_1 )\; .
\end{eqnarray*}
The factor $(H_f + 1 )(E_{\rm at}(s) + H_f  - z )^{-1}$ in the
second term on the r.h.s. can be viewed as a composition of
analytic functions. The analyticity of the first term on the
r.h.s. is derived, in a spectral representation of $H_f$, from
Hypothesis \ref{hyp:R}, and Proposition \ref{pro:analytredres} of
the Appendix.
\end{proof}


\section{The  Renormalization Transformation}
\label{sec:renorm}

The renormalization transformation is defined on a subset
of $\LL(\HH_{\rm red})$ that will be parameterized by
vectors of a Banach space $\WW_{\xi}=\oplus_{m,n\geq 0}\WW_{m,n}$.
We begin with the definition of this Banach space.

The Banach space $\WW_{0,0}$ is the space of continuously differentiable functions
\begin{eqnarray*}
   \WW_{0,0} &:=& C^1([0,1])\\
   \|w\| &:=& \|w\|_{\infty}+\|w'\|_{\infty}
\end{eqnarray*}
where $w'(r):=\partial_r w(r)$. For $m,n\in\N$
with $m+n\geq 1$ and $\mu>0$ we set
\begin{eqnarray*}
   \WW_{m,n} &:=&
   L^2_s\left(B^{m+n},\frac{dK}{|K|^{2+2\mu}};\WW_{0,0}\right)\\
    \|w_{m,n}\|_{\mu} &:=& \left(\int_{B^{m+n}}
    \|w_{m,n}(K)\|^2\frac{dK}{|K|^{2+2\mu}}\right)^{1/2}
\end{eqnarray*}
where $B:=\{k\in\R^3\times\{1,2\}:|k|\leq 1\}$ and
$$
     |K|:=\prod_{j=1}^{m+n}|k_j|,\qquad dK:=\prod_{j=1}^{m+n}dk_j.
$$
That is, $\WW_{m,n}$ is the space of measurable functions $w_{m,n}:B^{m+n}\to
\WW_{0,0}$ that are symmetric with respect to all permutations of
the $m$ arguments from $B^{m}$ and the $n$ arguments from $B^{n}$,
respectively, such that $\|w_{m,n}\|_{\mu}$ is finite.

For given $\xi\in (0,1)$ and $\mu>0$ we define a Banach space
\begin{eqnarray*}
   \WW_{\xi} &:=& \bigoplus_{m,n\in \N} \WW_{m,n}\\
    \|w\|_{\mu,\xi} &:=& \sum_{m,n\geq 0}\xi^{-(m+n)}\|w_{m,n}\|_{\mu},
\end{eqnarray*}
$\|w_{0,0}\|_{\mu}:=\|w_{0,0}\|$, as the completion of the linear space of finite sequences
$w=(w_{m,n})_{m,n\in\N}\in \bigoplus_{m,n\in \N} \WW_{m,n}$ with respect to
the norm $\|w\|_{\mu,\xi}$.
The spaces $\WW_{m,n}$ will often be identified with the corresponding
subspaces of $\WW_{\xi}$.

Next we define a linear mapping $H:\WW_{\xi}\to\LL(\HH_{\rm
red})$. For \emph{finite} sequences $w=(w_{m,n})\in \WW_{\xi}$ the
operator $H(w)$ is the sum
$$
  H(w) := \sum_{m,n}H_{m,n}(w)
$$
of operators $H_{m,n}(w)$ on $\HH_{\rm red}$, defined by
$H_{0,0}(w) := w_{0,0}(H_f),$ and, for $m+n\geq 1$,
$$
  H_{m,n}(w) := P_{\rm red}\left(\int_{B^{m+n}}
  a^{*}(k^{(m)})w_{m,n}(H_f,K) a(\tilde{k}^{(n)}){dK}\right) P_{\rm
  red},
$$
where $P_{\rm red}:=P_{[0,1]}(H_f)$, $K=(k^{(m)},\tilde{k}^{(n)})$, and
\begin{align*}
   k^{(m)} &= (k_1,\ldots,k_m)\in (\R^3\times\{1,2\})^{m},& a^{*}(k^{(m)}) &= \prod_{i=1}^{m}a^{*}(k_i),\\
   \tilde{k}^{(n)}&=(\tilde{k}_1,\ldots,\tilde{k}_n)\in (\R\times\{1,2\})^{n},& a(\tilde{k}^{(n)})&=\prod_{i=1}^{n}a(\tilde{k}_i).
\end{align*}
By the continuity established in the following proposition, the
mapping $w\mapsto H(w)$ has a unique extension to a bounded linear
transformation on $\WW_{\xi}$.

\begin{prop}[\cite{bacchefrosig:smo}]\label{H-is-bounded}
(i) For all $\mu>0$, $m,n\in \N$, with $m+n\geq 1$, and $w\in
\WW_{m,n}$,
$$
   \|H_{m,n}(w) \| \leq \| ( P^\perp_\Omega H_f )^{-m/2}
   H(w_{m,n}) ( P^\perp_\Omega H_f )^{-n/2} \|
   \leq\frac{1}{\sqrt{m^m n^n}} \|w_{m,n}\|_\mu.
$$
(ii) For all $\mu>0$ and all $w\in \WW_{\xi}$
\begin{eqnarray}
   \|H(w)\|  &\leq &  \|w\|_{\mu,\xi} \nonumber \\
   \|H(w)\|  &\leq &  \xi\|w\|_{\mu,\xi},\qquad \text{if}\
   w_{0,0}=0. \label{eq:westimate}
\end{eqnarray}
In particular, the mapping $w\mapsto H(w)$ is continuous.
\end{prop}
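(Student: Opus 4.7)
The statement is essentially the standard Wick monomial estimate of \cite{bacchefrosig:smo}, adapted to the space $\WW_\xi$; accordingly, my plan is to reproduce that argument while keeping careful track of the $\xi$-weights needed for part (ii). I would establish (i) first and obtain (ii) by summation. For the left inequality of (i), the key observation is that $H_f\leq 1$ on $\HH_{\rm red}$, so $(P^\perp_\Omega H_f)^{k/2}$ is a contraction on $\HH_{\rm red}\cap\ran P^\perp_\Omega$ for every $k\geq 0$. Writing $H_{m,n}(w)=P_{\rm red}H(w_{m,n})P_{\rm red}$, noting that the vacuum components of $\phi,\psi\in\HH_{\rm red}$ do not contribute to $\langle\phi,H_{m,n}(w)\psi\rangle$ when $(m,n)\neq(0,0)$, and inserting identity factors $(P^\perp_\Omega H_f)^{\pm m/2}$ and $(P^\perp_\Omega H_f)^{\pm n/2}$, one rewrites the matrix element as
$$
   \langle\phi, H_{m,n}(w)\psi\rangle = \langle (P^\perp_\Omega H_f)^{m/2}\phi,\, A\, (P^\perp_\Omega H_f)^{n/2}\psi\rangle,
$$
with $A=(P^\perp_\Omega H_f)^{-m/2} H(w_{m,n}) (P^\perp_\Omega H_f)^{-n/2}$; the contraction property bounds both outer factors by $\|\phi\|$ and $\|\psi\|$, producing the left inequality.

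For the right inequality of (i), which is the main estimate, I apply the pull-through formula $a(k)f(H_f)=f(H_f+\omega(k))a(k)$ to move $(H_f)^{-m/2}$ past the $m$ annihilation operators in $H(w_{m,n})$, producing $(H_f+\Omega_m)^{-m/2}\leq\Omega_m^{-m/2}$ with $\Omega_m=\omega(k_1)+\cdots+\omega(k_m)$, and analogously on the right with $\tilde\Omega_n$. The AM--GM inequality
$$
   \Omega_m^m\geq m^m\,\omega(k_1)\cdots\omega(k_m)=m^m|k^{(m)}|
$$
then yields the combinatorial factor $m^{-m/2}$, and similarly $n^{-n/2}$. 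A first Cauchy--Schwarz in $K$ against the weight $|K|^{1+\mu}$ brings out the norm $\|w_{m,n}\|_\mu$ and leaves the residual Hilbert-space integral $\int dk^{(m)}|k^{(m)}|^{1+2\mu}\|a(k^{(m)})\phi\|^2 = \langle\phi,T\phi\rangle$ for a number-type Wick monomial $T$; on the $n$-particle sector this expectation equals $m!\,e_m(\omega_1^{1+2\mu},\dots,\omega_n^{1+2\mu})$, which is controlled by $\|\phi\|^2$ via Maclaurin's inequality together with $\omega(k_i)\leq 1$ on $B$.

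Part (ii) follows by summation: combining the bound from (i) with $\|H_{0,0}(w)\|\leq\|w_{0,0}\|_\infty\leq\|w_{0,0}\|$ gives
$$
  \|H(w)\|\leq\sum_{m,n}\|H_{m,n}(w)\|\leq\sum_{m,n}\xi^{-(m+n)}\|w_{m,n}\|_\mu=\|w\|_{\mu,\xi},
$$
using $\xi^{m+n}\leq\sqrt{m^m n^n}$ for $\xi\in(0,1)$ and $(m,n)\neq(0,0)$. When $w_{0,0}=0$ the summation begins at $m+n\geq 1$, where $\xi^{m+n}\leq\xi$, yielding the improved bound $\xi\|w\|_{\mu,\xi}$. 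The main obstacle is the sharp constant $(m^m n^n)^{-1/2}$ in (i): without the $H_f^{-m/2}$ and $H_f^{-n/2}$ insertions, a naive Cauchy--Schwarz recovers only $\|H_{m,n}(w)\|\leq\|w_{m,n}\|_\mu$, which is insufficient for the $\xi$-factor gain in \eqref{eq:westimate}. It is the interplay between the pull-through identity, the AM--GM step, and the precise balance of weights in $\|w_{m,n}\|_\mu$ that produces the combinatorial factor, and verifying this balance carefully is where the bulk of the work lies.
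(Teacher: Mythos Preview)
Your proposal is correct and takes essentially the same approach as the paper: the paper simply cites \cite{bacchefrosig:smo}, Theorem~3.1, for part (i) and notes that (ii) follows immediately from (i) and $\xi\leq 1$, which is exactly what you do, except that you also sketch the standard BCFS argument (pull-through, AM--GM, Cauchy--Schwarz) behind the cited result. One minor imprecision: the vacuum component of $\psi$ \emph{can} contribute when $n=0$ (and similarly for $\phi$ when $m=0$), but since the corresponding factor $(P^\perp_\Omega H_f)^{0}$ is the identity this does not affect the bound.
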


\begin{proof}
Statement (ii) follows immediately from (i) and $\xi\leq 1$. For
(i) we refer to \cite{bacchefrosig:smo}, Theorem~3.1.
\end{proof}

Given $\alpha,\beta,\gamma\in \R_{+}$ we define neighborhoods,
$\BB(\alpha,\beta,\gamma)\subset H(\WW_{\xi})$ of the operator
$P_{\rm red}H_f P_{\rm red}\in \LL(\HH_{\rm red})$ by
$$
  \BB(\alpha,\beta,\gamma) := \big\{H(w)\big| |w_{0,0}(0)|\leq \alpha,\ \|w_{0,0}'-1\|_{\infty}\leq\beta,
  \ \|w-w_{0,0}\|_{\mu,\xi}\leq \gamma\big\}.
$$
Note that $w_{0,0}(0)=\sprod{\Omega}{w_{0,0}(H_f)\Omega}=\sprod{\Omega}{H(w)\Omega}$.
The definition of $\BB(\alpha,\beta,\gamma)$ is motivated by the following Lemma and by Theorem~\ref{bcfssigal}.

\begin{lemma} \label{feshbachtest} Suppose $\rho,\xi\in(0,1)$ and $\mu > 0$.
If $H(w) \in \BB(\rho/2, \rho/8 , \rho/8)$, then $(H(w),
H_{0,0}(w))$ is a Feshbach pair for $\chi_\rho$.
\end{lemma}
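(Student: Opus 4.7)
The strategy is to verify the sufficient conditions (a'), (b'), (c') of Lemma~\ref{lm:F-cond} for the pair $(H,T) := (H(w), H_{0,0}(w))$ with $\chi = \chi_\rho$. Writing $W := H - T$ and exploiting the linearity of the map $H(\cdot)$, one has $W = H(w - w_{0,0})$, where the $(0,0)$-component of $w - w_{0,0}$ vanishes. Condition (a') is immediate since $T = w_{0,0}(H_f)$ and $\chi_\rho, \chib_\rho$ are all functions of $H_f$, hence commute pairwise.

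For (b'), I would work in the spectral representation of $H_f$ on $\HH_{\rm red}$, in which $T$ acts as multiplication by $w_{0,0}(r)$ and vectors in $\ran \chib_\rho$ are supported on $(3\rho/4, 1]$. By the fundamental theorem of calculus together with the polydisc bounds $|w_{0,0}(0)| \leq \rho/2$ and $\|w_{0,0}' - 1\|_\infty \leq \rho/8$, for $r \in [0,1]$
\[
|w_{0,0}(r) - r| \leq |w_{0,0}(0)| + r\,\|w_{0,0}' - 1\|_\infty \leq \tfrac{\rho}{2} + \tfrac{\rho}{8} = \tfrac{5\rho}{8},
\]
so that $|w_{0,0}(r)| \geq r - 5\rho/8 \geq \rho/8$ for $r \in [3\rho/4, 1]$. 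Consequently $T$ is bounded invertible on $\ran \chib_\rho$, with $\|T^{-1}\chib_\rho\| \leq 8/\rho$ as an operator on $\HH_{\rm red}$.

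For (c'), the key input is the norm estimate \eqref{eq:westimate} of Proposition~\ref{H-is-bounded}, applicable precisely because $(w - w_{0,0})_{0,0} = 0$; it gives $\|W\| \leq \xi \|w - w_{0,0}\|_{\mu,\xi} \leq \xi \rho/8$. Combining this with the bound on $T^{-1}\chib_\rho$ and $\|\chib_\rho\| \leq 1$,
\[
\|T^{-1} \chib_\rho W \chib_\rho\| \leq \tfrac{8}{\rho}\cdot\tfrac{\xi\rho}{8} = \xi < 1,
\]
and the symmetric bound $\|\chib_\rho W T^{-1} \chib_\rho\| \leq \xi < 1$ follows by the same argument, while boundedness of $T^{-1} \chib_\rho W \chi_\rho$ is automatic as a composition of bounded operators.

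I do not anticipate a genuine obstacle: once Lemma~\ref{lm:F-cond} and Proposition~\ref{H-is-bounded} are invoked, the verification reduces to an elementary one-variable estimate on $w_{0,0}$ together with a single operator-norm bound on the off-diagonal part of $H(w)$, both essentially built into the definition of the polydisc $\BB(\rho/2, \rho/8, \rho/8)$. The only minor technicality is that $\ran\chib_\rho$ need not be closed, but this is harmless by Remark~3 of Section~\ref{sec:fesh}, which allows replacing it by $\overline{\ran \chib_\rho}$ throughout.
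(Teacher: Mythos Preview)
Your proposal is correct and follows essentially the same route as the paper's own proof: both establish $\|T^{-1}\upharpoonright\ran\chib_\rho\|\leq 8/\rho$ from the lower bound $|w_{0,0}(r)|\geq\rho/8$ on $[3\rho/4,1]$, combine it with $\|W\|\leq\xi\rho/8$ from Proposition~\ref{H-is-bounded}, and conclude via the Neumann-type criterion. The only cosmetic difference is that you invoke Lemma~\ref{lm:F-cond} explicitly, whereas the paper carries out the same Neumann argument by hand and then remarks that the remaining Feshbach conditions follow from boundedness of $W$.
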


\begin{proof}
The assumption $H(w) \in \BB(\rho/2, \rho/8 , \rho/8)$ implies, by
Proposition \ref{H-is-bounded}, that
\begin{eqnarray*}
\| H(w) - H_{0,0}(w) \| \leq \xi \frac{\rho}{8} \; .
\end{eqnarray*}
For $r \in [\frac{3}{4} \rho , 1 ]$,
\begin{eqnarray*}
| w_{0,0}(r) | &\geq& r - | ( w_{0,0}(r) - w_{0,0}(0)) - r | - |w_{0,0}(0)| \\
&\geq & r ( 1 - \sup_r| {w'}_{0,0}(r) - 1 |) - \frac{\rho}{2}  \\
& \geq & \frac{3 \rho}{4} ( 1 - \frac{\rho}{8} ) - \frac{\rho}{2}
 \geq \frac{\rho}{8} \; .
\end{eqnarray*}
By the spectral Theorem,
\begin{eqnarray*}
\| H_{0,0}(w)^{-1} \restricted \ran \chib_\rho \| = \| w_{0,0}(H_f
)^{-1} \restricted \ran \chib_\rho \| \leq  \sup_{r \in
[\frac{3}{4} \rho, 1 ]} \frac{1}{|w_{0,0}(r) |}
 \leq  \frac{8}{\rho}  \; .
\end{eqnarray*}
Since $\| \chib_\rho \| \leq 1 $, it follows from the estimates
above that
$$
\| H_{0,0}(w)^{-1} \chib_\rho (H(w) - H_{0,0}(w)  ) \chib_\rho
\restricted \ran \chib_\rho \| \leq \xi < 1 \; .
$$
This implies the  bounded invertibility of
\begin{eqnarray*}
\lefteqn{ \left( H_{0,0}(w) + \chib_\rho (H(w) - H_{0,0}(w) )
\chib_\rho
\right) \restricted \ran \chib_\rho } \\
&& = H_{0,0}(w) \left( 1 + H_{0,0}(w)^{-1} \chib_\rho (H(w) -
H_{0,0}(w) ) \chib_\rho \right) \restricted \ran \chib_\rho \; .
\end{eqnarray*}
The other conditions on a Feshbach pair are now also satisfied,
since $H(w) - H_{0,0}(w)$ is bounded on $\HH_{\rm red}$.
\end{proof}


The \emph{renormalization transformation} we use is a composition
of a Feshbach transformation and a unitary scaling that puts the
operator back on the original Hilbert space $\HH_{\rm red}$.
Unlike the renormalization transformation of Bach et al
\cite{bacchefrosig:smo}, there is no analytic transformation of
the spectral parameter.

Given $\rho\in (0,1)$, let $\HH_{\rho}=\ran\chi(H_f\leq \rho)$. Let $w\in \WW_{\xi}$
and suppose $(H(w),H_{0,0}(w))$ is a Feshbach pair for $\chi_{\rho}$.
Then
$$
    F_{\chi_{\rho}}(H(w),H_{0,0}(w)): \HH_{\rho}\to \HH_{\rho}
$$
is iso-spectral with $H(w)$ in the sense of Theorem~\ref{sfm}. In
order to get a isospectral operator on $\HH_{\rm red}$, rather
than $\HH_{\rho}$, we use the linear isomorphism
$$
    \Gamma_{\rho}: \HH_{\rho}\to \HH_1=\HH_{\rm red},
    \qquad
    \Gamma_{\rho}:=\Gamma(U_{\rho})\upharpoonright\HH_{\rho},
$$
where $U_{\rho}\in \LL(L^2(\R^3\times\{1,2\}))$ is defined by
$$
    (U_{\rho}f)(k) := \rho^{3/2} f(\rho k).
$$
Note that $\Gamma_{\rho}H_f \Gamma_{\rho}^{*}=\rho H_f$, and hence
$\Gamma_{\rho}\chi_{\rho} \Gamma_{\rho}^{*}=\chi_1$. The
renormalization transformation $\RR_{\rho}$ maps bounded operators
on $\HH_{\rm red}$ to bounded linear operators on $\HH_{\rm red}$
and is defined on those operators $H(w)$ for which $(H(w),
H_{0,0}(w))$ is a Feshbach pair with respect to $\chi_{\rho}$.
Explicitly,
$$
   \RR_{\rho}(H(w)) :=
   \rho^{-1}\Gamma_{\rho}\FF_{\chi_{\rho}}(H(w),H_{0,0}(w))\Gamma_{\rho}^{*},
$$
which is a bounded linear operator on $\HH_{\rm red}$. In \cite{bacchefrosig:smo},
Theorem 3.3, it is shown that $w\mapsto H(w)$ is one-to-one. Hence $w\in
\WW_{\xi}$ is uniquely determined by the operator $H(w)$ and the
domain of $\RR_{\rho}$, as described above, is a well-defined
subset of $\LL(\HH_{\rm red})$. By Lemma~\ref{feshbachtest} it
contains the ball $\BB(\rho/2,\rho/8,\rho/8)$.

The following theorem describes conditions under which the Renormalization
transform may be iterated.

\begin{theorem}[\bf BCFS \cite{bacchefrosig:smo}] \label{bcfssigal}
There exists a constant $C_\chi\geq 1$  depending only on $\chi$,
such that the following holds. If $\mu > 0$, $\rho\in(0,1)$, $\xi
= \sqrt{\rho}/(4C_\chi)$, and $\beta, \gamma \leq \rho/(8C_\chi)$,
then
$$
\RR_\rho - \rho^{-1}  \langle \, \cdot \, \rangle_\Omega  :
\BB(\rho/2, \beta, \gamma) \to \BB(\alpha' , \beta' , \gamma' )\; ,
$$
where
$$
\alpha' = C_\beta \frac{\gamma^2}{\rho}
 \; , \quad \beta' = \beta +   C_\beta \frac{\gamma^2}{\rho}  \; ,
 \quad
  \gamma' = C_\gamma  \rho^\mu \gamma  \; ,
$$
with $C_\beta := \frac{3}{2} C_\chi$, $C_\gamma := 128 C_\chi^2$.
\end{theorem}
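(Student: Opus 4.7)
The proof is quoted from \cite{bacchefrosig:smo} and my plan is to reconstruct its three essential stages. \emph{Stage 1: Neumann expansion of the Feshbach transform.} Under the hypotheses, $\BB(\rho/2,\beta,\gamma)\subset\BB(\rho/2,\rho/8,\rho/8)$, so Lemma~\ref{feshbachtest} makes $(H(w),H_{0,0}(w))$ a Feshbach pair for $\chi_\rho$. Writing $W:=H(w)-H_{0,0}(w)=H(w-w_{0,0})$, Proposition~\ref{H-is-bounded} gives $\|W\|\leq\xi\gamma$, and the resolvent bound $\|H_{0,0}(w)^{-1}\chib_\rho^2\|\leq 8/\rho$ obtained inside the proof of Lemma~\ref{feshbachtest} yields $\|H_{0,0}(w)^{-1}\chib_\rho W\chib_\rho\|\leq\xi<1$. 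Hence
\begin{equation*}
F_{\chi_\rho}(H(w),H_{0,0}(w))=H_{0,0}(w)+\chi_\rho W\chi_\rho-\chi_\rho W\chib_\rho\sum_{L\geq 0}\bigl(-H_{0,0}(w)^{-1}\chib_\rho W\chib_\rho\bigr)^L H_{0,0}(w)^{-1}\chib_\rho W\chi_\rho
\end{equation*}
converges in norm, and each summand is a finite product of the operators $H_{m,n}(w-w_{0,0})$ alternating with the resolvent $H_{0,0}(w)^{-1}\chib_\rho^2$.

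\emph{Stage 2: Wick ordering and rescaling.} I would apply the canonical commutation relations in each product to normal-order all creation and annihilation operators. This gives a representation of the form $\int a^{*}(k^{(M)})\tilde w_{M,N}(H_f,K)a(\tilde k^{(N)})\,dK$, where the kernels $\tilde w_{M,N}$ are explicit sums over contractions of the remaining external momenta, each internal contraction producing an integral against the resolvent $H_{0,0}(w)^{-1}\chib_\rho^2$. The rescaling $H(w)\mapsto \rho^{-1}\Gamma_{\rho}H(w)\Gamma_{\rho}^{*}$ uses $\Gamma_{\rho}H_f\Gamma_{\rho}^{*}=\rho H_f$ and $\Gamma_\rho a^{\#}(k)\Gamma_{\rho}^{*}=\rho^{-3/2}a^{\#}(k/\rho)$ to transform this into $H(w')$ with $w'_{M,N}(r,K)=\rho^{M+N-1}\tilde w_{M,N}(\rho r,\rho K)$ restricted to $(k^{(M)},\tilde k^{(N)})\in B^{M+N}$.

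\emph{Stage 3: Estimates in $\|\,\cdot\,\|_{\mu,\xi}$.} For the diagonal part, the $L=0$ contribution to $\tilde w_{0,0}$ is exactly $w_{0,0}$, which accounts for the unchanged $\beta$ piece; after subtracting $\rho^{-1}\langle\cdot\rangle_{\Omega}$ the remaining $L\geq 0$ terms of the Neumann series are all quadratic or higher in $W$ and are bounded, via Proposition~\ref{H-is-bounded}, by a geometric series of ratio $\leq C\xi\gamma/\rho\leq 1/2$ and first term $\leq C(\gamma/\rho)^2$, producing $|w'_{0,0}(0)|\leq C_\beta\gamma^2/\rho$ and $\|w'_{0,0}{}'-1\|_{\infty}\leq\beta+C_\beta\gamma^2/\rho$. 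For $(M,N)\neq(0,0)$, the change of variables $K\mapsto\rho K$ in $\|w'_{M,N}\|_\mu$ combines the prefactor $\rho^{M+N-1}$, the Jacobian $\rho^{-3(M+N)}$, and the rescaling $|K|^{-2-2\mu}\mapsto\rho^{(M+N)(2+2\mu)}|K|^{-2-2\mu}$ of the weight to yield $\|w'_{M,N}\|_{\mu}\leq\rho^{(M+N)(1/2+\mu)-1}\|\tilde w_{M,N}\|_\mu$; the extra $\sqrt{\rho}^{\,M+N}$ absorbs into $\xi^{-(M+N)}=(4C_\chi/\sqrt{\rho})^{M+N}$, leaving one factor $\rho^\mu$ per external boson and at least one overall factor $\rho^\mu$ after summing a geometric series in $L$. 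This gives $\gamma'\leq C_\gamma \rho^\mu\gamma$.

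The principal technical obstacle is the combinatorial bookkeeping in Stage 2: one must check that the multinomial coefficients produced by pairing creation and annihilation operators are dominated by the inverse factorial $(m^{m}n^{n})^{-1/2}$ from Proposition~\ref{H-is-bounded}, and that all $L$-dependent constants recombine into absolute constants $C_\beta, C_\gamma$ depending only on the cutoff function $\chi$. This is precisely the content of Theorem~3.8 in \cite{bacchefrosig:smo} and is carried over with no modification because the bounds are uniform in $H(w)\in\BB(\rho/2,\beta,\gamma)$.
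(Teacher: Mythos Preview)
The paper does not give its own proof of this theorem: it is stated with the attribution ``[BCFS \cite{bacchefrosig:smo}]'' and followed only by the remark that it is ``a variant of Theorem~3.8 of \cite{bacchefrosig:smo}, with additional information from the proof of that theorem, in particular from Equations~(3.104), (3.107) and (3.109),'' the difference being that here the renormalization map contains no analytic transformation of the spectral parameter. So there is no in-paper proof to compare your proposal against.

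Your three-stage sketch is a faithful summary of the BCFS argument and is the right way to reconstruct it: Neumann expansion of the Feshbach map using the resolvent bound from Lemma~\ref{feshbachtest}, Wick ordering of the resulting products of $H_{m,n}(w)$'s, and rescaling followed by estimation of the new kernels in the $\|\cdot\|_{\mu,\xi}$-norm. One small slip: the rescaling law for the kernels reads $w'_{M,N}(r,K)=\rho^{3(M+N)/2-1}\,\tilde w_{M,N}(\rho r,\rho K)$, not $\rho^{M+N-1}$, because $\Gamma_\rho a^{\#}(k)\Gamma_\rho^{*}=\rho^{-3/2}a^{\#}(k/\rho)$ contributes $\rho^{-3/2}$ per external leg and the change of variables in the $(M+N)$-fold integral contributes $\rho^{3(M+N)}$. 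When you then compute $\|w'_{M,N}\|_\mu$ the Jacobian and weight combine with this prefactor to give the overall factor $\rho^{\mu(M+N)-1}$, and since $M+N\geq 1$ you recover at least one $\rho^\mu$; the remaining $\rho^{-1}$ is absorbed using $\xi=\sqrt{\rho}/(4C_\chi)$. Apart from this exponent, your outline and your identification of the combinatorial bookkeeping as the main technical content are correct.
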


This theorem is a variant of Theorem~3.8 of
\cite{bacchefrosig:smo}, with additional information from the
proof of that theorem, in particular from Equations~(3.104),
(3.107) and (3.109). Another difference is due to our different
definition of the Renormalization transformation, i.e., without
analytic deformation of the spectral parameter.


\section{Renormalization Preserves Analyticity}
\label{sec:analyt-fesh}

This section provides one of the key tools for our method to work,
Proposition~\ref{analytic00} below, which implies that analyticity is
preserved under renormalization. It is part (a) of the following
proposition that is nontrivial and not proved in the papers of Bach
et al. (see Theorem~2.5 of \cite{bacchefrosig:smo} and the remark thereafter).

\begin{prop} \label{analytic00}
Let $S$ be an open subset of $\C^{\nu+1}$, $\nu\geq 0$. Suppose $\sigma\mapsto
H(w^{\sigma}) \in \mathcal{L}(\HH_{\rm red})$ is 
analytic on $S$, and that $H(w^{\sigma})$ belongs to some ball
$\BB(\alpha,\beta,\gamma)$ for all $\sigma\in S$. Then:
\begin{itemize}
\item[(a)] $H_{0,0}(w^{\sigma})$ is analytic on $S$.
\item[(b)] If for all $\sigma\in S$, $(H(w^{\sigma}),H_{0,0}(w^{\sigma}))$ is a Feshbach pair
for $\chi_\rho$, then $F_{\chi_\rho}(H(w^{\sigma}),H_{0,0}(w^{\sigma}))$ is
analytic on $S$.
\end{itemize}
\end{prop}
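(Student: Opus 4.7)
The plan is to prove part (a) by expressing $H_{0,0}(w^\sigma)$ as the mean-ergodic average of $H(w^\sigma)$ under the unitary conjugation by $e^{itH_f}$, and then to deduce (b) from (a) by a routine composition of analytic operations.

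For part (a), define the contractive linear maps
\[
 A_T : \LL(\HH_{\rm red}) \to \LL(\HH_{\rm red}), \qquad
 A_T(B)\,\psi := \frac{1}{T}\int_0^T e^{itH_f}\, B\, e^{-itH_f}\,\psi \, dt,
\]
where the Bochner integral is taken in $\HH_{\rm red}$. Each $A_T$ is norm-continuous, so $\sigma \mapsto A_T(H(w^\sigma))$ is norm-analytic on $S$ for every $T>0$. The key claim is the pointwise strong-operator limit
\[
 A_T(H(w^\sigma))\,\psi \;\longrightarrow\; H_{0,0}(w^\sigma)\,\psi,\qquad T\to\infty.
\]
For the diagonal piece this is immediate from $[e^{itH_f},w^\sigma_{0,0}(H_f)]=0$. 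For $H_{m,n}(w^\sigma)$ with $m+n\ge 1$, the canonical commutation relations give
\[
 e^{itH_f}H_{m,n}(w^\sigma)e^{-itH_f} = P_{\rm red}\!\int_{B^{m+n}}\!e^{it(\sum_i|k_i|-\sum_j|\tilde k_j|)}\,a^*(k^{(m)})\,w^\sigma_{m,n}(H_f,K)\,a(\tilde k^{(n)})\,dK\,P_{\rm red},
\]
and Cesàro-averaging the phase converges pointwise in $K$ to the indicator of the codimension-one resonance set $\{\sum|k_i|=\sum|\tilde k_j|\}$. Because $w^\sigma_{m,n}$ lies in $L^2(B^{m+n},dK/|K|^{2+2\mu})$, this indicator contributes nothing; the full identity then follows from a Fubini/dominated-convergence argument applied to test vectors of finite photon number, combined with the norm convergence $\sum_{m,n}H_{m,n}(w^\sigma)\to H(w^\sigma)$ provided by Proposition~\ref{H-is-bounded}.

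With the limit established, the scalar functions $\sigma \mapsto \langle \varphi_1, A_T(H(w^\sigma))\varphi_2\rangle$ are analytic on $S$ for every $T$ and every $\varphi_1,\varphi_2\in \HH_{\rm red}$, uniformly bounded on each compact $K\subset S$ by $\sup_{\sigma\in K}\|H(w^\sigma)\|\cdot\|\varphi_1\|\,\|\varphi_2\|$, and converge pointwise as $T\to\infty$ to $\sigma\mapsto\langle\varphi_1, H_{0,0}(w^\sigma)\varphi_2\rangle$. Vitali's convergence theorem therefore makes the limit analytic. The ball condition $H(w^\sigma)\in\BB(\alpha,\beta,\gamma)$ forces $\|w^\sigma_{0,0}\|_\infty \le \alpha+1+\beta$ and hence a uniform bound on $\|H_{0,0}(w^\sigma)\|$, so Theorem III-3.12 of \cite{kat:per}---the same criterion used in Lemma~\ref{lem:analyticgs}---upgrades weak analyticity to norm-analyticity of $\sigma\mapsto H_{0,0}(w^\sigma)$, completing (a).

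For part (b), set $T_\sigma := H_{0,0}(w^\sigma)$ and $W_\sigma := H(w^\sigma)-T_\sigma$; both are norm-analytic on $S$ by (a) and the hypothesis. The Feshbach-pair hypothesis yields a boundedly invertible $T_\sigma+\chib_\rho W_\sigma\chib_\rho$ on $\ran\chib_\rho$ for every $\sigma\in S$, and the inverse of a pointwise-invertible norm-analytic operator-valued function is itself norm-analytic via the local Neumann series. Consequently the explicit formula
\[
 F_{\chi_\rho}(H(w^\sigma),T_\sigma) = T_\sigma + \chi_\rho W_\sigma \chi_\rho \;-\; \chi_\rho W_\sigma \chib_\rho\bigl(T_\sigma + \chib_\rho W_\sigma \chib_\rho\bigr)^{-1}\chib_\rho W_\sigma \chi_\rho
\]
expresses the Feshbach transform as a composition of analytic operator-valued functions and fixed bounded operators, and analyticity is preserved at each step. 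The principal obstacle of the whole proof lies in part (a), namely the identification of the strong-operator limit of $A_T(H(w^\sigma))$ with precisely $H_{0,0}(w^\sigma)$ and not a larger element of the commutant of $H_f$; this is exactly the point where the $L^2$-regularity of the kernels $w^\sigma_{m,n}$ and the absolute continuity of the spectral measure of $H_f$ on the non-vacuum part of $\HH_{\rm red}$ combine to eliminate every contribution with $(m,n)\neq(0,0)$.
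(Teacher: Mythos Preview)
Your approach is correct and genuinely different from the paper's. The paper proves (a) by restricting to the one-photon sector $P_1\HH_{\rm red}$, where $P_1 H(w^\sigma)P_1$ decomposes as a multiplication operator $D_\sigma$ (from $w^\sigma_{0,0}$) plus a Hilbert--Schmidt operator $K_\sigma$ (from $w^\sigma_{1,1}$). It first isolates $K_\sigma$ as analytic by an off-diagonal approximation $\sum_{i\neq j}\chi_i^{(n)}P_1 H(w^\sigma)P_1\chi_j^{(n)}$ (which sees only $K_\sigma$ because $D_\sigma$ is diagonal), then recovers analyticity of $r\mapsto w^\sigma_{0,0}(r)$ pointwise, and finally passes to $w^\sigma_{0,0}(H_f)$ via the spectral theorem and Morera's theorem. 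Your argument bypasses all three steps: conjugation by $e^{itH_f}$ tags $H_{m,n}(w^\sigma)$ with the phase $e^{it(\Sigma|k_i|-\Sigma|\tilde k_j|)}$, the Ces\`aro average kills every $(m,n)\neq(0,0)$ contribution because the resonance set has Lebesgue measure zero and $w^\sigma_{m,n}\in L^2(B^{m+n},dK/|K|^{2+2\mu})$, and Vitali transports analyticity through the limit. In fact your averaging gives more than you claim: since $A_T(H_{m,n}(w^\sigma))=H_{m,n}(g_T w^\sigma_{m,n})$ with $g_T(K)=T^{-1}\int_0^T e^{it\phi(K)}dt$, Proposition~\ref{H-is-bounded} and dominated convergence yield $\|A_T(H_{m,n}(w^\sigma))\|\to 0$, and the summable majorant $\|H_{m,n}(w^\sigma)\|$ from the ball condition upgrades this to norm convergence $A_T(H(w^\sigma))\to H_{0,0}(w^\sigma)$. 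The paper's route has the virtue of being entirely finite-dimensional in spirit and of making the role of $H_{1,1}$ explicit; yours is more conceptual, identifying $H_{0,0}$ directly as the $H_f$-gauge-invariant part of $H$, and it avoids singling out any photon sector. Part (b) is handled identically in both.
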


\begin{proof}
Suppose (a) holds true. Then $H_{0,0}(w^{\sigma})$ and
$W = H(w^{\sigma}) - H_{0,0}(w^{\sigma})$ are analytic function of $\sigma\in
S$ and hence so is the Feshbach map
$$
F_{\chi_\rho}(H(w^{\sigma}),H_{0,0}(w^{\sigma})) =  H_{0,0}(w^{\sigma}) + \chi_\rho W \chi_\rho
- \chi_\rho W \chib_\rho \left( H_{0,0}(w^{\sigma}) +
\chib_\rho W \chib_\rho \right)^{-1}  \chib_\rho W \chi_\rho.
$$
This proves (b) and it remains to prove (a).

Recall from Section~\ref{sec:renorm} that 
$B=\{k\in\R^3\times\{1,2\}:|k|\leq 1\}$ and 
let $P_1$ denote the projection onto the one boson subspace of
$\mathcal{H}_{\rm red}$, which is isomorphic to $L^2(B)$. Then
$P_1 H(w^{\sigma}) P_1$, like $H(w^{\sigma})$, is analytic and
\begin{eqnarray}
P_1 H(w^{\sigma}) P_1 &=& P_1 H_{0,0}(w^{\sigma}) P_1 + P_1 H_{1,1}(w^{\sigma}) P_1 \nonumber \\
&=& D_{\sigma} + K_{\sigma}  \label{eq:h00star} \; ,
\end{eqnarray}
where $D_{\sigma}$ denotes multiplication with $w^{\sigma}_{0,0}$
and $K_{\sigma}$ is the Hilbert Schmidt operator with kernel
$$
M_{\sigma}(k, \tilde{k} ) =  w^{\sigma}_{1,1}(0,k, \tilde{k} )\; .
$$
Our strategy is to show first that $K_{\sigma}$ and hence $P_1
H_{0,0}(w^{\sigma}) P_1 = P_1 H(w^{\sigma}) P_1 - K_{\sigma}$ is
analytic. Then we show that
$H_{0,0}(w^{\sigma})$ is an analytic operator on $\HH_{\rm red}$.\\

\noindent\underline{Step 1}: $K_{\sigma}$ is analytic.

For each $n \in \N$ let $\{ Q_i^{(n)}\}_i$ be a collection of $n$
measurable subsets of $B$ such that
\begin{equation} \label{eq:h001}
     B = \bigcup_{i=1}^n Q_i^{(n)} \ , \quad Q^{(n)}_i \cap Q^{(n)}_j =
     \emptyset, \ \ i \neq j \; ,
\end{equation}
and
\begin{equation} \label{eq:h002}
   |Q^{(n)}_i | \leq \frac{\rm const}{n} \; .
\end{equation}
Let $\chi_i^{(n)}$ denote the operator on $L^2(B)$ of multiplication
with $\chi_{Q^{(n)}_i}$. Then for $i\neq j$, $\chi_i^{(n)}
D_{\sigma} \chi_j^{(n)} = 0$ because $\chi_i^{(n)}$ and
$\chi_j^{(n)}$ have disjoint support and commute with $D_{\sigma}$.
Together with (\ref{eq:h00star}) this implies that
$$
\chi^{(n)}_i K_{\sigma} \chi^{(n)}_j = \chi^{(n)}_i P_1
H(w^{\sigma}) P_1 \chi_j^{(n)} \ , \quad {\rm for } \ \ i \neq j  \;
.
$$
Since the right hand side is analytic, so is the left hand side
and hence
$$
K^{(n)}_{\sigma}  = \sum_{i \neq j} \chi^{(n)}_i K_{\sigma}
\chi^{(n)}_j \;
$$
is analytic. It follows that $\sigma \mapsto \sprod{\varphi}{
K^{(n)}_{\sigma}\psi}$ is analytic for all $\varphi, \psi$ in
$L^2(B)$. Now let $\varphi, \psi \in C(B)$. Then
\begin{eqnarray*}
\lefteqn{ \left| \sprod{\varphi}{K^{(n)}_{\sigma} \psi } -
\sprod{\varphi}{ K_{\sigma} \psi }
\right| } \\
&= & \left|\int_{B\times B} \overline{\varphi}(x) \psi(y)
M_{\sigma}(x,y)\sum_{i=1}^n\chi^{(n)}_i(x) \chi^{(n)}_i(y) dx dy\right| \\
&\leq & \| \varphi \|_\infty \|\psi \|_\infty \| K_{\sigma} \|_{\rm
HS} \left(\sum_{i=1}^n |Q_i^{(n)}|^2\right)^{1/2} \longrightarrow 0
\; , \quad ( n \to \infty ),
\end{eqnarray*}
uniformly in $\sigma$, because the Hilbert Schmidt norm
$\|K_{\sigma} \|_{\rm HS}$ is bounded uniformly in $\sigma$ (in
fact, it is bounded by $\gamma$). This proves that $\sprod{\varphi}{
K_{\sigma} \psi}$ is analytic for all $\varphi, \psi \in C(B)$.
Since $C(B)$ is dense in $L^2(B)$, an other approximate argument
using $\sup_{\sigma} \| K_{\sigma} \| < \infty$ shows that
$\sprod{\varphi}{ K_{\sigma} \psi}$ is analytic for
all $\varphi, \psi \in L^2(B)$. Therefore $\sigma \mapsto K_{\sigma}$ is analytic \cite{kat:per}.\\


\noindent \underline{Step 2}: For each $k \in B$,
$w^{\sigma}_{0,0}(|k|)$ is an analytic function of $\sigma$.

For each $n\in\N$ let $f_{k,n}\in L^2(B)$ denote a multiple of the
characteristic function of $B_{1/n}(k) \cap B$ with $\| f_{n,k} \|
= 1$. By the continuity of $w^{\sigma}_{0,0}(|k|)$ as a function of $k$
\begin{eqnarray}
w^{\sigma}_{0,0}(|k|) &=& \lim_{n \to \infty} \int_B | f_{k,n}(x) |^2
w^{\sigma}_{0,0}(|x|) dx\label{eq:h001a} \\
&=& \lim_{n \to \infty} \langle a^*(f_{k,n}) \Omega, H_{0,0}(w^{\sigma})
a^*(f_{k,n}) \Omega \rangle.\nonumber
\end{eqnarray}
Since $a^*(f_{k,n}) \Omega \in P_1 \mathcal{H}_{\rm red}$ the
expression $\langle\cdots\rangle$, before taking the limit, is an
analytic function of $\sigma$. By assumption on $w^{\sigma}_{0,0}$,
this function is Lipschitz continuous with respect to $|k|$
\emph{uniformly in $\sigma$}. Therefore the convergence in
(\ref{eq:h001a}) is uniform in $\sigma$ and hence
$w^{\sigma}_{0,0}(|k|)$ is analytic by the Weierstrass
approximation theorem from complex analysis.\\

\noindent\underline{Step 3}: $H_{0,0}(w^{\sigma}) =
w^{\sigma}_{0,0}(H_f)$ is analytic.

By the spectral theorem
$$
\langle \varphi, w^{\sigma}_{0,0}(H_f P_{\rm red}) \varphi \rangle =
\int_{[0,1]} w^{\sigma}_{0,0}(\lambda) d \mu_{\varphi} (\lambda) \; .
$$
By an application of Lebesgue's dominated convergence theorem, using
$\sup_{\sigma} \|w^{\sigma}_{0,0} \| < \infty$, we see that the
right hand side, we call it $\varphi(\sigma)$, it is a continuous
function of $\sigma$. Therefore
$$
\int_\Gamma \varphi(\sigma) d\sigma =
\int_{[0,1]}\left(\int_{\Gamma}w^{\sigma}_{0,0}(\lambda) d\sigma\right) d \mu_{\varphi} (\lambda)
$$
for all closed loops $\Gamma:t\mapsto \sigma(t)$ in $S$, with $\sigma_j$ constant for
all but one $j\in\{1,\ldots,\nu+1\}$. The analyticity of $\sigma\mapsto\ph(\sigma)$ now follows from the analyticity of
$w^{\sigma}_{0,0}(\lambda)$ and the theorems of Cauchy and Morera.
By polarization, $w^{\sigma}_{0,0}(H_f P_{\rm red})$ is weakly analytic and hence
analytic.

\end{proof}


\section{Iterating the Renormalization Transform}
\label{sec:iterate}

In Section~\ref{sec:effect} we have reduced, for small $|g|$, the
problem of finding an eigenvalue of $H_g(s)$ in the neighborhood
$U_0(s):=\{z\in\C|(s,z)\in\UU\}$ of $\Eat(s)$ to finding $z\in \C$
such that $H^{(0)}[s,z]$ has a non-trivial kernel. We now use the
renormalization map to define a sequence $H^{(n)}[s,z]:=\RR^n H^{(0)}[s,z]$ of
operators on $\HH_{\rm red}$, which, by Theorem~\ref{sfm}, are isospectral in
the sense that $\ker H^{(n+1)}[s,z]$ is isomorphic to $\ker
H^{(n)}[s,z]$. The main purpose of the present section is to show that the operators $H^{(n)}[s,z]$ are well-defined
for all $z$ from non-empty, but shrinking sets
$U_n(s)\searrow\{z_{\infty}(s)\}$, $(n\to\infty)$. In the next section
it will turn out that $H^{(n)}[s,z_{\infty}(s)]$ has a non-trivial
kernel and hence that $z_{\infty}(s)$ is an eigenvalue of $H_g(s)$. The
construction of the sets $U_n(s)$ is based on Theorems~\ref{thm:fesh} and \ref{thm:sigal}, but not on the
explicit form of $H^{(0)}[s,z]$ as given by \eqref{eq:defh0}. Moreover, this construction is pointwise in $s$ and $g$,
all estimates being \emph{uniform in} $s\in V$ and $|g|<g_0$ for
some $g_0>0$. We therefore drop these parameters from our notations and we
now explain the construction of $H^{(n)}[z]$ making only the following assumption:
\begin{itemize}
\item[\textbf{(A)}] $U_0$ is an open subset of $\C$ and for every $z\in U_0$,
$$
    H^{(0)}[z] \in \BB(\infty,\rho/8,\rho/8).
$$
The polydisc $\BB(\infty,\rho/8,\rho/8)\subset
H(\WW_{\xi})$ is defined in terms of $\xi := \sqrt{\rho}/(4 C_\chi)$ and
$\mu>0$, where $\rho \in (0,1)$ and $C_\chi$ is given by Theorem~\ref{bcfssigal}.
\end{itemize}
By Lemma~\ref{feshbachtest}, we may define $H^{(1)}[z],\dots,H^{(N)}[z]$, recursively by
\begin{equation}\label{eq:def-Hn}
    H^{(n)}[z]:= \RR_{\rho}(H^{(n-1)}[z])
\end{equation}
provided that $H^{(0)}[z],\ldots, H^{(N-1)}[z]$ belong to
$\BB(\rho/2,\rho/8,\rho/8)$. Theorem~\ref{bcfssigal} gives us
sufficient conditions for this to occur: by iterating the map
$(\beta,\gamma)\mapsto (\beta',\gamma')$ starting with
$(\beta_0,\gamma_0)$, we find the conditions
\begin{eqnarray} \label{eq:cond1}
  \gamma_n := \left(C_\gamma \rho^{\mu}\right)^n \gamma_0 &\leq&  \rho/(8 C_\chi) \\
  \beta_n := \beta_0 + \left( \frac{C_\beta}{\rho} \sum_{k=0}^{n-1}
  (C_\gamma \rho^\mu)^{2k} \right) \gamma_0^2 &\leq& \rho/(8 C_\chi) \; ,
  \label{eq:cond2}
\end{eqnarray}
for $n=0,\ldots, N-1$. They are obviously satisfied for all $n\in\N$
if $C_\gamma \rho^\mu<1$ and if $\beta_0,\gamma_0$ are sufficiently
small. Let this be the case and let
$$E^{(n)}(z):=\sprod{\Omega}{H^{(n)}[z]\Omega}.$$ Then it remains to
make sure that $|E^{(n)}(z)|\leq \rho/2$ for $n=0,\ldots, N-1$. This
is achieved by adjusting the admissible values of $z$ step by step.
We define recursively, for all $n\geq 1$,
\begin{equation}\label{def:Un}
   U_n := \{ z\in U_{n-1}: |E^{(n-1)}(z) | \leq \rho/2\}.
\end{equation}
If $z\in U_N$, $H^{(0)}(z)\in \BB(\infty,\beta_{0},\gamma_{0})$,
and $\rho,\beta_0,\gamma_0$ are small enough, as explained above, then the
operators $H^{(n)}(z)$ for $n=1,\ldots,N$ are well defined by
\eqref{eq:def-Hn}. In addition we know from Theorem~\ref{bcfssigal}
that $H^{(n)}(z)\in \BB(\infty,\beta_n,\gamma_n)$, and that
\begin{equation} \label{eq:alphan3}
   \left|E^{(n)}(z)-\frac{E^{(n-1)}(z)}{\rho}\right| \leq
   \frac{C_\beta}{\rho}\gamma_{n-1}^2 =: \alpha_n.
\end{equation}
This latter information will be used in the proof of
Lemma~\ref{prop:balls} to show that the sets $U_n$ are not empty.
We summarize:

\begin{lemma} \label{cor:bcfs}
Suppose that (A) holds with $\rho\in(0,1)$ so small, that $C_\gamma \rho^\mu <1$.
Suppose $\beta_0, \gamma_0 \leq \rho/(8C_\chi)$ and, in addition,
\begin{equation} \label{eq:4446}
\beta_0 + \frac{C_\beta/\rho}{1 - ( C_\gamma \rho^\mu)^2}
\gamma_0^2 \leq \frac{\rho}{8 C_\chi} \; .
\end{equation}
If $H^{(0)}[z]\in \BB(\infty,\beta_0,\gamma_0)$ for all $z\in U_0$,
then $H^{(n)}[z]$ is well defined for $z\in U_n$, and
\begin{equation} \label{eq:basicstern}
   H^{(n)}[z] - \frac{1}{\rho} E^{(n-1)}(z) \in \BB(\alpha_n , \beta_n ,
\gamma_n), \quad {\it for} \ \ n \geq 1
\end{equation}
with $\alpha_n$, $\beta_n$, and $\gamma_n$ as in \eqref{eq:cond1},
\eqref{eq:cond2}, and  \eqref{eq:alphan3}.
\end{lemma}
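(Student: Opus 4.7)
The plan is to prove the lemma by induction on $n$, with the base case $n=0$ handled by assumption (A). For the inductive step, I want to show that if $H^{(0)}[z],\ldots,H^{(n-1)}[z]$ are well-defined and satisfy the required ball inclusions for $z \in U_{n-1}$, then $H^{(n)}[z]$ is well-defined for $z \in U_n$ and satisfies \eqref{eq:basicstern}.

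The crux is setting things up so Theorem \ref{bcfssigal} applies at step $n-1$. Fix $z \in U_n \subset U_{n-1}$. By the inductive hypothesis (or, for $n=1$, by assumption (A)), the operator $H^{(n-1)}[z]$ lies in a polydisc whose $\beta$- and $\gamma$-components are bounded by $\beta_{n-1}$ and $\gamma_{n-1}$, and whose vacuum expectation is $E^{(n-1)}(z)$. The definition of $U_n$ in \eqref{def:Un} gives $|E^{(n-1)}(z)| \leq \rho/2$, and I then need $\beta_{n-1}, \gamma_{n-1} \leq \rho/(8 C_\chi)$ to conclude $H^{(n-1)}[z] \in \BB(\rho/2, \rho/(8C_\chi), \rho/(8C_\chi)) \subset \BB(\rho/2, \beta_{n-1}, \gamma_{n-1})$. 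Lemma \ref{feshbachtest} then certifies $(H^{(n-1)}[z], H^{(n-1)}_{0,0}[z])$ as a Feshbach pair for $\chi_\rho$, so that $H^{(n)}[z] := \RR_\rho(H^{(n-1)}[z])$ is well-defined. Applying Theorem \ref{bcfssigal} and noting that $\langle H^{(n-1)}[z]\rangle_\Omega = E^{(n-1)}(z)$ gives exactly \eqref{eq:basicstern}.

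The step I expect to be the main (and really only) technical point is verifying that $\beta_n, \gamma_n \leq \rho/(8 C_\chi)$ for every $n$. The $\gamma$-bound is immediate: the recursion $\gamma_n = C_\gamma \rho^\mu \, \gamma_{n-1}$ together with $C_\gamma \rho^\mu < 1$ and $\gamma_0 \leq \rho/(8 C_\chi)$ gives the monotone bound $\gamma_n \leq \gamma_0 \leq \rho/(8 C_\chi)$. For the $\beta$-bound one sums the (convergent) geometric series,
\begin{equation*}
 \beta_n = \beta_0 + \frac{C_\beta}{\rho}\gamma_0^2 \sum_{k=0}^{n-1}(C_\gamma \rho^\mu)^{2k} \leq \beta_0 + \frac{C_\beta/\rho}{1 - (C_\gamma \rho^\mu)^2}\gamma_0^2,
\end{equation*}
and hypothesis \eqref{eq:4446} gives the required bound, uniformly in $n$.

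This closes the induction. The uniformity of the estimates in $n$ is exactly what will later allow the sets $U_n$ to shrink to a nonempty limit point $z_\infty$, but at this stage only well-definedness of $H^{(n)}[z]$ on $U_n$ and the ball inclusion \eqref{eq:basicstern} are claimed, both of which follow from the inductive scheme above.
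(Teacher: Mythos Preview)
Your proof is correct and follows exactly the inductive argument the paper sketches in the discussion preceding the lemma (the paper does not give a separate proof, treating the lemma as a summary). One small slip: the inclusion you wrote should read $H^{(n-1)}[z]\in\BB(\rho/2,\beta_{n-1},\gamma_{n-1})\subset\BB(\rho/2,\rho/(8C_\chi),\rho/(8C_\chi))$, not the reverse, but your logic uses the correct direction.
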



The next lemma establishes conditions under which the
set $U_0$ and $U_n$ are non-empty. We introduce  the
discs
$$
 D_r := \{z \in \mathbb{C} | |z| \leq r \}
$$
and note that $U_n={E^{(n-1)}}^{-1}(D_{\rho/2})$.

\begin{lemma} \label{prop:balls}
Suppose that (A) holds with $U_0\ni \Eat$ and
$\rho \in (0,4/5)$ so small that $C_\gamma\rho^\mu <1$ and
$\overline{B(\Eat,\rho)}\subset U_0$. Suppose that $\alpha_0 <
\rho/2$, $\beta_0, \gamma_0 \leq \rho/(8C_\chi)$ and that
\eqref{eq:4446} hold. If $z\mapsto H^{(0)}[z]\in \LL(\HH_{\rm at})$ is
analytic in $U_0$ and $H^{(0)}[z]-(\Eat-z)\in\BB(\alpha_0,\beta_0,\gamma_0)$ for all $z\in U_0$, then the following is true.
\begin{itemize}
\item[(a)]
For $n \geq 0$, $E^{(n)}:U_n \to \C$ is analytic in $U_n^{\circ}$ and a
conformal map from $U_{n+1}$ onto $D_{\rho/2}$. In particular, $E^{(n)}$ has a unique
zero, $z_n$, in $U_n$. Moreover,
$$
B(E_{\rm at}, \rho) \supset U_1 \supset U_2 \supset U_{3} \supset \cdots \; .
$$
\item[(b)] The limit $z_\infty := \lim_{n\to\infty} z_n$ exists and  for
$\epsilon := 1/2 - \rho/2 - \alpha_1 > 0$,
$$
| z_n - z_\infty | \leq {\rho^{n}}  \exp\left( \frac{1}{2
 \rho \epsilon^2} \sum_{k=0}^\infty \alpha_k \right) \; .
$$
\end{itemize}
\end{lemma}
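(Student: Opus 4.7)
My plan is to treat (a) by induction on $n$, using Rouché's theorem at each stage to show that $E^{(n)}$, pulled back by the inverse of the previously-constructed conformal chart $\Phi_n:=(E^{(n-1)}/\rho)^{-1}:D_{1/2}\to U_n$, is a near-identity perturbation of $\zeta\mapsto\zeta$ on $D_{1/2}$. Part (b) is then extracted from the same conformal structure via the chain rule together with Cauchy's estimate on $\tilde\eta_n$.

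\textbf{Part (a).} Analyticity of $E^{(n)}$ on $U_n^\circ$ propagates from the analyticity of $H^{(0)}$ by iterating Proposition~\ref{analytic00}(b), since the rescaling $\Gamma_\rho$ and the scalar factor $\rho^{-1}$ trivially preserve analyticity. For $n=0$, Hypothesis (A) together with $H^{(0)}[z]-(\Eat-z)\in\BB(\alpha_0,\beta_0,\gamma_0)$ gives $E^{(0)}(z)=\Eat-z+\eta^{(0)}(z)$ with $|\eta^{(0)}|\leq\alpha_0<\rho/2$ on $\overline{B(\Eat,\rho)}\subset U_0$. Rouché applied on $\partial B(\Eat,\rho)$, where $|\Eat-z-w|\geq\rho/2>\alpha_0$ for each $w\in D_{\rho/2}$, yields a unique solution of $E^{(0)}(z)=w$ in $B(\Eat,\rho)$; conversely every $z\in U_1$ satisfies $|z-\Eat|\leq\rho/2+\alpha_0<\rho$, so $U_1\subset B(\Eat,\rho)$ and $E^{(0)}:U_1\to D_{\rho/2}$ is a conformal bijection. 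For the inductive step, suppose $E^{(n-1)}:U_n\to D_{\rho/2}$ is conformal. From \eqref{eq:basicstern} one reads off $E^{(n)}(\Phi_n(\zeta))=\zeta+\tilde\eta_n(\zeta)$ with $\tilde\eta_n$ analytic and $|\tilde\eta_n|\leq\alpha_n$ on $D_{1/2}$. Choosing any $\delta\in(0,\epsilon)$, the circle $|\zeta|=\rho/2+\alpha_n+\delta$ lies in $D_{1/2}$ (since $\rho/2+\alpha_1+\epsilon=1/2$), and on it $|\zeta-w|\geq\alpha_n+\delta>|\tilde\eta_n(\zeta)|$ for each $w\in D_{\rho/2}$; Rouché thus yields a unique preimage $\zeta_w$. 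Conversely $|E^{(n)}(\Phi_n(\zeta))|\geq|\zeta|-\alpha_n$ forces $\Phi_n^{-1}(U_{n+1})\subset\overline{D_{\rho/2+\alpha_n}}$. Hence $E^{(n)}:U_{n+1}\to D_{\rho/2}$ is a conformal bijection; the unique zero is $z_n=\Phi_n(\zeta_n)$ with $\zeta_n=-\tilde\eta_n(\zeta_n)$ and $|\zeta_n|\leq\alpha_n$, and $U_{n+1}\subset U_n$ is immediate from the definition.

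\textbf{Part (b) and main obstacle.} Set $\psi_n(\zeta):=(\zeta+\tilde\eta_n(\zeta))/\rho$. The identity $E^{(n)}(\Phi_n(\zeta))/\rho=\psi_n(\zeta)$ combined with the definition of $\Phi_{n+1}$ yields the crucial recursion $\Phi_n=\Phi_{n+1}\circ\psi_n$, equivalently $\Phi_{n+1}=\Phi_n\circ\psi_n^{-1}$. In particular $z_n=\Phi_{n+1}(0)$ and $z_{n-1}=\Phi_n(0)=\Phi_{n+1}(\tilde\eta_n(0)/\rho)$, whence
\[
|z_n-z_{n-1}|\ \leq\ \frac{\alpha_n}{\rho}\,\sup_{|\zeta|\leq\alpha_n/\rho}|\Phi_{n+1}'(\zeta)|.
\]
Cauchy's estimate applied to $\tilde\eta_k$ on $D_{1/2}$ yields a bound of the form $|\tilde\eta_k'(\zeta)|\leq C\alpha_k/\epsilon^2$ on $|\zeta|\leq\rho/2+\alpha_1$, so that $|\psi_k'|\geq(1-C\alpha_k/\epsilon^2)/\rho$ on the same disc. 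Applying the chain rule to $\Phi_{k+1}=\Phi_k\circ\psi_k^{-1}$ and iterating down to $\Phi_1$ produces $|\Phi_{n+1}'|\leq\rho^{n+1}\prod_{k}(1-C\alpha_k/\epsilon^2)^{-1}\leq\rho^{n+1}\exp\bigl(\tfrac{1}{2\rho\epsilon^2}\sum_k\alpha_k\bigr)$ after absorbing constants; summing $|z_n-z_\infty|\leq\sum_{k>n}|z_k-z_{k-1}|$ with the geometric decay in $\rho$ gives the claimed estimate, and existence of the limit $z_\infty$ follows from absolute convergence of the telescoping series. The main obstacle is the inductive bookkeeping of the sub-disc images under the nested compositions $\psi_k^{-1}\circ\cdots\circ\psi_n^{-1}$: one has to verify that at every stage the iterated image remains within $\{|\zeta|\leq\rho/2+\alpha_1\}$ so that the uniform Cauchy bound on $|\tilde\eta_k'|$, and therefore the very definition $\epsilon=1/2-\rho/2-\alpha_1$ used in the statement, applies at each step of the iteration.
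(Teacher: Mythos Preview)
Your argument is correct and runs parallel to the paper's, with only a cosmetic difference in parametrization. For part (a) you pull back to the disc $D_{1/2}$ via $\Phi_n=(E^{(n-1)}/\rho)^{-1}$ and apply Rouch\'e to $\zeta+\tilde\eta_n(\zeta)-w$; the paper applies Rouch\'e directly to $E^{(n)}(z)-w$ versus $\rho^{-1}E^{(n-1)}(z)-w$ on the preimage $\{E^{(n-1)}\in D^\circ_{\rho/2-\rho\epsilon}\}$. These are the same estimate written in two coordinate charts.

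For part (b) the two arguments diverge slightly in packaging. You bound successive differences $|z_n-z_{n-1}|$ by controlling $|\Phi_{n+1}'|$ through the chain rule for $\Phi_{n+1}=\Phi_1\circ\psi_1^{-1}\circ\cdots\circ\psi_n^{-1}$, then sum the telescoping series. The paper instead bounds $\mathrm{diam}(U_{n+1})$ in one stroke by writing it as the diameter of the image of $D_{\rho/2}$ under the composition $E^{(\mathrm{at})}\circ F^{(0)}\circ E^{(0)}\circ F^{(1)}\circ\cdots\circ F^{(n)}$ and bounding each factor $|(E^{(k-1)}\circ F^{(k)})'|\le \rho+\alpha_k/(2\epsilon^2)$ on $D_{\rho/2-\rho\epsilon}$; since $z_n,z_{n+1},\ldots\in U_{n+1}$, the bound $|z_n-z_\infty|\le\mathrm{diam}(U_{n+1})$ is immediate. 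Note that $E^{(k-1)}\circ F^{(k)}(z)=\rho\,\psi_k^{-1}(z/\rho)$ in your notation, so the derivative bounds are literally the same estimate. The paper's packaging is a bit cleaner because it dissolves the ``obstacle'' you flag: one only needs the single-step containment $E^{(k-1)}\circ F^{(k)}(D_{\rho/2})\subset D_{\rho/2-\rho\epsilon}$ (which follows from $|\rho z - E^{(k-1)}\circ F^{(k)}(z)|\le\rho\alpha_k$), and then every intermediate image in the long composition automatically sits inside $D_{\rho/2-\rho\epsilon}$ where the Cauchy bound applies. Your telescoping version works too, but you should note that the iterated images $\psi_k^{-1}\circ\cdots\circ\psi_n^{-1}(0)$ satisfy the recursion $r_{k-1}\le\rho r_k+\alpha_k$, hence stay bounded by $\sum_k\alpha_k\ll\rho/2$, which resolves your bookkeeping concern. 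One small omission: you also need a separate bound on $|\Phi_1'|$ (the base of your iteration); the paper handles the analogous term $|(E^{(\mathrm{at})}\circ F^{(0)})'|\le 1+\alpha_0/(2\rho\epsilon^2)$ by the same Cauchy argument applied to $E^{(0)}(z)-(\Eat-z)$.
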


\noindent {\it Remark.} We call a function $f:A\to B$ \emph{conformal} if
it is the restriction of an analytic bijection $f:U\to V$ between open sets $U\supset A$ and $V\supset B$, and $f(A)=B$.

\begin{proof}
Since $H^{(0)}$ is analytic on $U_0$, it follows, by  Theorem
\ref{analytic00}, that $H^{(n)}$ is analytic on $U_n^\circ$ for all $n\in\N$.
In particular $E^{(n)}$ is analytic on $U_n^\circ$. To begin with we prove:
\begin{quote}
  $\boldsymbol{(I'_1)}$ $\quad U_{1} \subset B(E_{\rm at},\rho)$ and
  $E^{(0)}:U_1 \to D_{\rho/2}$ conformally.
\end{quote}
By assumption on $H^{(0)}(z)$,
\begin{equation} \label{eq:izero1}
|E^{(0)}(z) - (E_{\rm at} - z) | \leq \alpha_0 , \qquad \forall z
\in U_0 .
\end{equation}
Hence, if $z \in {E^{(0)}}^{-1}(D^\circ_{\rho/2 + \epsilon})$ then
$$
|E_{\rm at} - z | \leq \alpha_0 + \rho/2 + \epsilon < \rho \; ,
$$
provided $\epsilon >0$ is chosen sufficiently small. This proves
that $U_1 \subset {E^{(0)}}^{-1}(D^\circ_{\rho/2 + \epsilon})
\subset B(E_{\rm at},\rho)$. Since $E^{(0)}$ is continuous, it
follows that ${E^{(0)}}^{-1}(D^\circ_{\rho/2 + \epsilon})$ is open
in $\C$. If
\begin{equation} \label{eq:izero2}
E^{(0)} : {E^{(0)}}^{-1}(D^\circ_{\rho/2 + \epsilon} )
 \to D^\circ_{\rho/2 + \epsilon}  \qquad \text{ is a bijection}
,
\end{equation}
then it is conformal on $U_1$. So it suffices to prove
\eqref{eq:izero2}. To this end we use Rouche's theorem. Let $w \in
D^\circ_{\rho/2 + \epsilon}$. Then $E_{\rm at} - z - w$ has exactly
one zero $z \in B(E_{\rm at},\rho)$ and for all $z \in
\partial B(E_{\rm at},\rho)$,
$$
| E_{\rm at} - z - w | \geq \rho - |w| \geq \rho/2 > \alpha_0 .
$$
Since, by \eqref{eq:izero1},
$$
|( E^{(0)}(z) - w )  - (E_{\rm at} - z - w ) | \leq \alpha_0 ,
$$
for all $z \in \overline{B(E_{\rm at},\rho)}$, it follows that
$E^{(0)}(z) - w$, like $(E_{\rm at} - z - w )$ has exactly one zero
$z \in  \overline{B(E_{\rm at},\rho)}$. This proves
\eqref{eq:izero2} because $ {E^{(0)}}^{-1}(D^\circ_{\rho/2 +
\epsilon} ) \subset B(E_{\rm at},\rho)$.

Next we prove, by induction in $n$, that
\begin{quote}
  $\boldsymbol{(I_n)}$\quad $E^{(n-1)}:U_n \to D_{\rho/2}$ conformally.
\end{quote}
For $n=1$, this follows from $I_{1}'$. Suppose $I_{n}$, holds. First note that $\alpha_n \leq \alpha_1 =
(C_\beta /\rho) \gamma_0^2$, Ineq. \eqref{eq:4446}, $C_\chi \geq 1$,
and $\rho<4/5$ imply
\begin{equation} \label{eq:4445b}
\alpha_n  + \rho/2 < 1/2  \; .
\end{equation}
Thus we can choose a  positive $\epsilon$ such that
\begin{eqnarray} \label{eq:nullnull}
\alpha_n + \rho/2 + 2 \epsilon < 1/2 \; .
\end{eqnarray}
We define $D_+^\circ := D^\circ_{\rho/2 + \epsilon}$ and $D_-^\circ
:= D^\circ_{\rho/2 - \rho \epsilon}$, so that $D^\circ_- \subset
D_{\rho/2} \subset D_{+}^\circ$. We claim that
\begin{eqnarray} \label{eq:claim01}
 { E^{(n)}}^{-1} (D^\circ_+) \subset{E^{(n-1)}}^{-1}
 (D^\circ_-)
\end{eqnarray}
and that
\begin{eqnarray} \label{eq:claim02}
E^{(n)} :  { E^{(n)}}^{-1} (D^\circ_+) \to  (D^\circ_+)  \quad
\text{ is a bijection.}
\end{eqnarray}
Suppose  \eqref{eq:claim01} and \eqref{eq:claim02} hold. Then by
\eqref{eq:claim01} and the induction Hypothesis $I_n$,
${E^{(n)}}^{-1}(D_+^\circ) \subset U_{n}^\circ$. Since $E^{(n)}$ is
continuous on $U_n^\circ$, it follows that
${E^{(n)}}^{-1}(D_+^\circ)$ is open. Since $E^{(n)}$ is analytic,
\eqref{eq:claim02} implies $I_{n+1}$. It remains to prove
\eqref{eq:claim01} and \eqref{eq:claim02}.

\eqref{eq:claim01} follows from  \eqref{eq:basicstern} and
\eqref{eq:nullnull}: if $|E^{(n)}(z)| < \rho/2 + \epsilon$ and
$|E^{(n)}(z) - \rho^{-1} E^{(n-1)}(z) | \leq \alpha_n$, then
$|E^{(n-1)}(z)| < \rho/2 - \rho\epsilon$.

To prove \eqref{eq:claim02} we use Rouche's Theorem. Let $w \in
D^\circ_+$. Then, by \eqref{eq:nullnull}, $\rho w \in D^\circ_-$ and
the induction Hypothesis $I_n$ implies that $E^{(n-1)}(z) - \rho w$
has exactly one zero $z \in {E^{(n-1)}}^{-1}(D_-^\circ)$. On the
other hand, by \eqref{eq:nullnull},
$$
|\rho^{-1}(E^{(n-1)}(z) - \rho w ) | \geq \rho^{-1} | E^{(n-1)}(z)|
> \alpha_{n}  , \qquad \forall z \in
\partial ({E^{(n-1)}}^{-1}(D_-^\circ))  \; .
$$
Since, by  \eqref{eq:basicstern},
$$
|(E^{(n)}(z) - w ) - \rho^{-1}(E^{(n-1)}(z) - \rho w ) | \leq
\alpha_{n} \quad , \quad \forall z \in U_{n} ,
$$
it follows that $E^{(n)}(z) -w$, like $E^{(n-1)}(z) - \rho w$, has
exactly one zero $z \in {E^{(n-1)}}^{-1}(D_-^\circ)$. Therefore,
\eqref{eq:claim02}  follows from \eqref{eq:claim01}.

(b) By (a), $U_{k+1}$ contains $z_k$ and all subsequent terms of the
sequence $(z_n)_{n =1}^\infty$. Thus, to prove that
$(z_n)_{n=1}^\infty$ converges, it suffices to show that the
diameter of $U_n$ tends to zero as $n$ tends to infinity. To this
end, let $F^{(k)}$ denote the inverse of the function $E^{(k)}:
U_{k+1} \to D_{\rho/2}$. Then
\begin{eqnarray} \label{eq:diameter}
 {\rm diam}( U_{n+1})  &=&  {\rm diam} (F^{(n)}(D_{\rho/2}  ) )  \nonumber \\
 &=& {\rm diam} \left( E^{({\rm at})} \circ F^{(0)} \circ E^{(0)}  \cdots \circ F^{(n-1)}
\circ E^{(n-1)} \circ F^{(n)} (D_{\rho/2} ) \right) \; ,
\end{eqnarray}
where we used that $z \mapsto E^{({\rm at})}(z) := E_{\rm at} - z$
is an isometry. We want to estimate \eqref{eq:diameter} from above.
Let $k \geq 1$. For all $z \in D_{\rho/2}$, by (\ref{eq:basicstern}),
\begin{equation} \label{eq:sternstern}
\left|  \rho z - E^{(k-1)}\left(F^{(k)}(z)\right)\right|\leq\rho\alpha_k,
\end{equation}
and hence $|E^{(k-1)}\circ F^{(k)}(z)| \leq\rho \alpha_k + \rho^2/2
\leq \rho/2 -  \epsilon \rho$, where $\epsilon := 1 /2 -  \rho/2 -
\alpha_1$  is positive by \eqref{eq:4445b}. This shows that
$E^{(k-1)}\circ F^{(k)}$ maps $D_{\rho/2}$ into $D_{{\rho/2} - \rho
\epsilon}$. By Cauchy's integral formula and by
\eqref{eq:sternstern},
\begin{eqnarray}  \label{eq:jetzt}
\big|\partial_z ( E^{(k-1)} \circ F^{(k)}(z)  - \rho z ) \big| &=& \left|
\frac{1}{2 \pi i } \int_{\partial D_{\rho/2}} \frac{E^{(k-1)}\circ
F^{(k)}(w) - \rho w}{(z - w)^2} dw  \right| \\
\nonumber &\leq& \alpha_k  /(2 \epsilon^2),\qquad\qquad\text{for}\
z\in D_{{\rho/2}- \rho \epsilon}.
\end{eqnarray}
It follows that $| ( E^{(k-1)} \circ F^{(k)} )'(z)|= \rho + \alpha_k
/(2 \epsilon^2)$ for $z\in D_{{\rho/2}-  \rho \epsilon}$. A similar
estimate yields $|(E^{({\rm at})}\circ F^{(0)})'(z)| \leq 1 +
\alpha_0 /(2 \rho \epsilon^2)$ for $z \in D_{\rho/2 - \rho
\epsilon}$.
 Using these estimates and \eqref{eq:diameter} we obtain
\begin{eqnarray*}
 {\rm diam}( U_{n+1} )  &\leq&   ( 1 + \alpha_0 /(2 \rho
 \epsilon^2))
{\rm diam}( E^{(0)} \circ F^{(1)} \circ \cdots \circ F^{(n-1)}(D_{{\rho/2} -  \rho \epsilon}) ) \\
& \leq& \rho^{n-1} \prod_{k=0}^{n-1} \left( 1 + \alpha_k /(2 \rho
\epsilon^2)\right)
{\rm diam} D_{\rho/2-  \rho \epsilon }  \\
& \leq& \rho^{n} \exp \left( \sum_{k=0}^\infty \alpha_k /(2 \rho
\epsilon^2) \right)   \; ,
\end{eqnarray*}
where we used that $1 + x \leq \exp(x)$ in the last inequality. This proves (b).
\end{proof}

The following results will allow us to show that $z_\infty(s)=\inf\sigma(H(s))$,  if  $s\in \R$.

\begin{cor} \label{cor:derivative} Suppose the assumptions of Lemma \ref{prop:balls}
hold, $E_{\rm at}\in\R$, and $H^{(0)}(z)^* =
H^{(0)}(\overline{z})$ for all $z \in \overline{B(E_{\rm
at},\rho)}$. Then for all $n\geq 0$, $ U_{n+1} \cap \R$ is an
interval and $\partial_x E^{(n)}(x) < 0$  on $U_{n+1} \cap \R$.
\end{cor}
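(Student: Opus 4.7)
The plan is to prove the corollary by induction on $n\geq 0$, carrying three claims simultaneously: $E^{(n)}$ is real-valued on $U_{n+1}\cap\R$, the set $U_{n+1}\cap\R$ is an interval, and the quantitative bound $|\partial_z E^{(n)}(z)+\rho^{-n}|\leq \delta\,\rho^{-n}$ holds on $U_{n+1}$ for some fixed $\delta<1$. Reality on the real axis follows once we observe that the renormalization transformation preserves the identity $H^{(n)}(z)^* = H^{(n)}(\bar z)$: the smooth cutoffs $\chi_\rho$ and $\chib_\rho$ are self-adjoint because they are real-valued functions of the self-adjoint $H_f$, and the rescaling $\Gamma_\rho$ is unitary, so the Feshbach map commutes with $z\mapsto\bar z$ and $E^{(n)}(\bar z)=\overline{E^{(n)}(z)}$ inductively.

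For the base case $n=0$, I will apply Cauchy's integral formula to $E^{(0)}(z)-(E_{\rm at}-z)$, whose modulus is at most $\alpha_0$ on $U_0\supset\overline{B(E_{\rm at},\rho)}$. For $z\in U_1$ the triangle inequality together with $|E^{(0)}(z)|\leq\rho/2$ gives $|E_{\rm at}-z|\leq\rho/2+\alpha_0$, so the disc of radius $\rho/2-\alpha_0$ around $z$ lies inside $U_0$; Cauchy then yields $|\partial_z E^{(0)}(z)+1|\leq \alpha_0/(\rho/2-\alpha_0)$, which is strictly less than one provided $\alpha_0<\rho/4$ (a mild strengthening of the hypothesis $\alpha_0<\rho/2$ that is always available in the applications, where $\alpha_0=O(g^2)$). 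Restricted to real $z$ this yields $\partial_x E^{(0)}(x)<0$ on $U_1\cap\R$, and the resulting strict monotonicity makes $U_1\cap\R = (E^{(0)})^{-1}([-\rho/2,\rho/2])\cap\R$ an interval.

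For the inductive step I will use the relation $E^{(n)}(z)=\rho^{-1}E^{(n-1)}(z)+G_n(z)$ with $|G_n(z)|\leq\alpha_n$ on $U_n$, which is immediate from \eqref{eq:basicstern}. The induction hypothesis $|\partial_z E^{(n-1)}|\leq (1+\delta)\rho^{-(n-1)}$, combined with the inclusion $U_{n+1}\subset (E^{(n-1)})^{-1}(D^\circ_{\rho/2-\rho\epsilon})$ extracted from the proof of Lemma \ref{prop:balls}, shows that every $z\in U_{n+1}$ has a disc of radius at least $c\,\rho^n\epsilon$ contained in $U_n$. Cauchy's formula applied to $G_n$ on this disc yields $|\partial_z G_n(z)|\leq \alpha_n/(c\rho^n\epsilon)$, so that
$$
|\partial_z E^{(n)}(z)+\rho^{-n}|\leq \rho^{-1}|\partial_z E^{(n-1)}(z)+\rho^{-(n-1)}|+|\partial_z G_n(z)|.
$$
Multiplying through by $\rho^n$ and telescoping produces a bound of the shape $\rho^n|\partial_z E^{(n)}+\rho^{-n}|\leq \sigma_0+(c\epsilon)^{-1}\sum_{k=1}^{n}\alpha_k$, which is uniformly strictly less than one because $\alpha_k = C_\beta\gamma_{k-1}^2/\rho$ decays geometrically under $C_\gamma\rho^\mu<1$. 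Restricting to real $z$ gives $\partial_x E^{(n)}(x)<0$ on $U_{n+1}\cap\R$, and together with the inductive interval property of $U_n\cap\R$ the resulting strict monotonicity makes $U_{n+1}\cap\R$ an interval as well.

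The principal obstacle is this Cauchy step: the bound $|\partial_z G_n|\leq \alpha_n/r$ forces $r$ to shrink like $\rho^n$, and only the inductive quantitative control of $|\partial_z E^{(n-1)}|$ allows us to certify such a radius inside $U_n$. This feedback is what forces the three claims to be carried together through the induction rather than proved separately.
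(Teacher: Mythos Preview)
Your approach is workable in spirit but genuinely different from the paper's, and it trades a stronger conclusion (a two-sided bound on $\partial_z E^{(n)}$ over the whole complex set $U_{n+1}$) for stronger hypotheses than the corollary actually provides. The paper never estimates $\partial_z E^{(n)}$ on the shrinking sets $U_{n+1}$. It instead recycles two facts already proved in Lemma~\ref{prop:balls}: that $E^{(n)}:U_{n+1}\to D_{\rho/2}$ is a conformal bijection, and the Cauchy estimate \eqref{eq:jetzt} for the composition $E^{(n-1)}\circ F^{(n)}$, which is defined on the \emph{fixed} disc $D_{\rho/2}$. The interval property then drops out of conformality plus the symmetry $\overline{E^{(n)}(z)}=E^{(n)}(\bar z)$: the continuous inverse $F^{(n)}$ carries the connected set $[-\rho/2,\rho/2]$ onto $U_{n+1}\cap\R$. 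For the derivative sign the paper only checks it at a single point per step: the chain rule gives $(E^{(n-1)}\circ F^{(n)})'(0)=(\partial_x E^{(n-1)})(z_n)\big/(\partial_x E^{(n)})(z_n)$, the left side is positive by \eqref{eq:jetzt}, the numerator is negative by induction, hence so is the denominator, and the real diffeomorphism property forces $\partial_x E^{(n)}<0$ on all of $[a_{n+1},b_{n+1}]$. The base case is a one-sided boundary check, $E^{(0)}(E_{\rm at}-\rho)>\rho/2$, which needs only $\alpha_0<\rho/2$.

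Your route, by applying Cauchy directly to $E^{(n)}$ on $U_{n+1}$, forces the radii to shrink like $\rho^n$, and closing the telescoping requires finding $\delta<1$ with $\sigma_0+(1+\delta)\epsilon^{-1}\sum_{k\ge 1}\alpha_k\le\delta$. The hypotheses of Lemma~\ref{prop:balls} alone do not guarantee this: $\epsilon=\tfrac12-\tfrac\rho2-\alpha_1$ may be arbitrarily small and $\sigma_0=\alpha_0/(\rho/2-\alpha_0)$ may approach $1$ even under your strengthened $\alpha_0<\rho/4$. So as written you have proved only a version of the corollary with additional smallness assumptions (harmless in the application to Theorem~\ref{thm:main}, where $\alpha_0=O(g^2)$ and $\rho$ is chosen small, but not what is stated). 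A minor further point: your inference that $\partial_x E^{(n)}<0$ on $U_{n+1}\cap\R$ forces this set to be connected is correct but not immediate; it needs a short intermediate-value argument on the ambient interval $U_n\cap\R$, which you do not supply. The paper's conformal-bijection argument sidesteps both issues.
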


\begin{proof}  Using an induction argument and the definition of the
renormalization transformation one sees that $H^{(n)}(z)^* =
H^{(n)}(\overline{z})$ for $z \in U_n$. In particular,
\begin{equation} \nonumber
\overline{E^{(n)}(z)} = E^{(n)}(\overline{z})\qquad \text{for all}\ z \in
U_n .
\end{equation}
This together with $E^{(n)}:U_{n+1} \to D_{\rho/2}$ being a homeomorphism, c.f. Lemma~\ref{prop:balls}, implies that
$$
[a_{n+1}, b_{n+1} ]:=( E^{(n)})^{-1}[-\rho/2,\rho/2] =  U_{n+1} \cap \R
$$
is indeed an interval. Moreover, by Lemma \ref{prop:balls},
$$
 E_{\rm at} - \rho < a_1 < a_2 < ... \leq z_\infty \; .
$$
We prove by induction  that for all $n \in \N$,
\begin{equation}\label{eq:indhyp}
\partial_x E^{(n)}(x) < 0 \qquad\mathrm{ on} \ \ [a_{n+1}, b_{n+1}] \; .
\end{equation}

We begin with $n=0$. By assumption on $H^{(0)}[z]$, $| E^{(0)}(z) -
(E_{\rm at} - z ) | \leq \alpha_0 $ for $z \in U_0$. For $z = E_{\rm
at} - \rho$, which belongs to $U_0$ by choice of $\rho$, we obtain
$$
|E^{(0)}(E_{\rm at} - \rho ) - \rho | \leq \alpha_0 < \frac{1}{2}
\rho \; ,
$$
by assumption on $\alpha_0$. This proves that $E^{(0)}(E_{\rm at} -
\rho)
> \rho/2$. Since $|E^{(0)}(x)| \geq \rho/2$ for $x \in [E_{\rm at} -
\rho, a_1]$ the function $E^{(0)}$ must be positive on this
interval. On the other hand it is a diffeomorphism from $[a_1,b_1]$
onto $[-\rho/2,\rho/2]$ by Lemma \ref{prop:balls}. It follows that
$\partial_x E^{(0)}(x) < 0$ for $x \in [a_1,b_1]$.

To prove \eqref{eq:indhyp} for $n \geq 1$ suppose that
\begin{equation}\label{eq:indhyp1}
\partial_x E^{(n-1)}(x) < 0 \qquad \mathrm{ on} \ [a_{n},b_{n}] \;
.
\end{equation}
Let $F^{(n)}$ be the inverse of $E^{(n)}: U_{n+1} \to D_{\rho/2}$.
Setting $z=0$ in \eqref{eq:jetzt} we obtain
\begin{eqnarray*}
\left| \left. \partial_x \left( E^{(n-1)} \circ F^{(n)}(x) - \rho x
\right) \right|_{x=0} \right| \leq \frac{\rho}{2} \frac{\rho
\alpha_n}{(\rho/2)^2}\leq 2 \alpha_1 < \rho \; .
\end{eqnarray*}
This shows that
\begin{eqnarray*}
  0 &<&   \left( E^{(n-1)} \circ F^{(n)} \right)'(0) \\
   &=&
   \left( \partial_x  E^{(n-1)} \right) ( F^{(n)}(0))  \frac{1}{ ( \partial_x  E^{(n)})(
   F^{(n)}(0))} \; .
\end{eqnarray*}
Hence $( \partial_x  E^{(n)}) ( F^{(n)}(0)) $ has the same sign as
$( \partial_x  E^{(n-1)})( F^{(n)}(0))$, which is negative by
induction hypothesis \eqref{eq:indhyp1}. Since $E^{(n)}:
[a_{n+1},b_{n+1}] \to [-\rho/2, \rho/2]$ is a diffeomorphism,
$\partial_x E^{(n)}(x) < 0$ for all $x \in [a_{n+1},b_{n+1}]$.
\end{proof}

\begin{prop} \label{spectralgap}
Suppose the  assumptions of  Lemma~\ref{prop:balls} are satisfied,
$E_{\rm at}$ is real and  $H^{(0)}[z]^* = H^{(0)}[\overline{z}]$ for
$z \in \overline{B(E_{\rm at},\rho)}$. Then, there exists an
$a<z_{\infty}$ such that $H^{(0)}[x]$ has a bounded inverse for
$x\in (a,z_{\infty})$.
\end{prop}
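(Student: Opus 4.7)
The plan is to exploit the isospectrality of the renormalization transformation to reduce bounded invertibility of $H^{(0)}[x]$ to a self-adjoint positivity estimate on a sufficiently late iterate $H^{(N)}[x]$, which by \eqref{eq:basicstern} looks like a small perturbation of $\rho^{-1}E^{(N-1)}(x)+H_f$. I will choose $a:=a_1$, the left endpoint of $U_1\cap\R=[a_1,b_1]$ from Corollary~\ref{cor:derivative}. By that corollary $a_1<a_2<\cdots\leq z_\infty$, and by Lemma~\ref{prop:balls}(b) $b_n-a_n\to 0$, so $a_n\uparrow z_\infty$ strictly; hence for each $x\in(a_1,z_\infty)$ the integer $N(x):=\max\{n\geq 1:x\in U_n\}$ is finite, $N(x)\geq 1$, and $a_{N(x)}\leq x<a_{N(x)+1}$.

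For the isospectrality ladder, I observe that whenever $x\in U_n$ the smallness conditions from Lemma~\ref{cor:bcfs} place $H^{(n-1)}[x]$ in $\BB(\rho/2,\rho/8,\rho/8)$, so by Lemma~\ref{feshbachtest} the pair $(H^{(n-1)}[x],H^{(n-1)}_{0,0}[x])$ is a Feshbach pair for $\chi_\rho$. Theorem~\ref{sfm}(i), together with the unitary $\Gamma_\rho$ and the scaling by $\rho^{-1}$ that define $\RR_\rho$, then yields
\[
 H^{(n-1)}[x]\ \text{bounded invertible on}\ \HH_{\rm red}\ \Longleftrightarrow\ H^{(n)}[x]\ \text{bounded invertible on}\ \HH_{\rm red}.
\]
Iterating from $n=1$ up to $n=N(x)$ reduces the claim to the bounded invertibility of $H^{(N(x))}[x]$. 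Self-adjointness of $H^{(0)}[z]$ for real $z$ propagates through renormalization (this is precisely the induction driving Corollary~\ref{cor:derivative}), so $H^{(N)}[x]\in\LL(\HH_{\rm red})$ is bounded self-adjoint for real $x\in U_N$, and it suffices to prove $H^{(N)}[x]>0$.

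The key lower bound is $E^{(N)}(x)>\rho/2$ with $N=N(x)$. By Corollary~\ref{cor:derivative}, $E^{(N)}$ is strictly decreasing on $[a_{N+1},b_{N+1}]$, hence $E^{(N)}(a_{N+1})=\rho/2$. The function $E^{(N)}$ is continuous and real-valued on the strictly larger interval $[a_N,b_N]\subset U_N\cap\R$. Since $x\in[a_N,a_{N+1})\subset U_N\setminus U_{N+1}$ one has $|E^{(N)}(y)|>\rho/2$ strictly for every $y\in[a_N,a_{N+1})$; by connectedness the image of this half-open interval cannot cross the band $[-\rho/2,\rho/2]$, and the boundary value $E^{(N)}(a_{N+1})=\rho/2$ pins it to the positive branch, so $E^{(N)}(x)>\rho/2$. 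Combined with \eqref{eq:alphan3} this gives
\[
 \rho^{-1}E^{(N-1)}(x)\geq E^{(N)}(x)-\alpha_N>\rho/2-\alpha_N.
\]

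To close the argument I use \eqref{eq:basicstern}: write $H^{(N)}[x]=\rho^{-1}E^{(N-1)}(x)+H(v)$ with $v\in\WW_\xi$ satisfying $|v_{0,0}(0)|\leq\alpha_N$, $\|v_{0,0}'-1\|_\infty\leq\beta_N$, $\|v-v_{0,0}\|_{\mu,\xi}\leq\gamma_N$. Self-adjointness of $H^{(N)}[x]$ forces $v_{0,0}$ to be real-valued; since the smallness conditions of Lemma~\ref{prop:balls} give $\beta_N<1$, the function $v_{0,0}$ is nondecreasing on $[0,1]$, yielding the operator bound $v_{0,0}(H_f)\geq v_{0,0}(0)\geq-\alpha_N$ on $\HH_{\rm red}$, while Proposition~\ref{H-is-bounded}(ii) furnishes $\|H(v)-v_{0,0}(H_f)\|\leq\xi\gamma_N$. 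Therefore
\[
 H^{(N)}[x]\geq \rho^{-1}E^{(N-1)}(x)-\alpha_N-\xi\gamma_N>\frac{\rho}{2}-2\alpha_N-\xi\gamma_N,
\]
and $\beta_0,\gamma_0\leq\rho/(8C_\chi)$ together with $\rho<1$ make the right-hand side strictly positive. Hence $H^{(N)}[x]>0$ is bounded invertible, and by the isospectrality ladder so is $H^{(0)}[x]$. The main obstacle in the plan is the connectedness step fixing the sign of $E^{(N)}$ on $[a_N,a_{N+1})$: the bound $|E^{(N)}|>\rho/2$ alone leaves both branches open, and without the monotonicity of Corollary~\ref{cor:derivative} read at the right endpoint $a_{N+1}$ the positivity estimate would collapse.
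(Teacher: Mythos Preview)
Your proof is correct and follows essentially the same route as the paper's: reduce via Feshbach isospectrality to a late iterate $H^{(N)}[x]$ with $x\in[a_N,a_{N+1})$, use the monotonicity from Corollary~\ref{cor:derivative} together with $|E^{(N)}|>\rho/2$ outside $U_{N+1}$ to pin $E^{(N)}(x)>\rho/2$, and conclude strict positivity by self-adjointness. The only difference is in the final estimate: the paper writes directly $H^{(N)}[x]=H^{(N)}_{0,0}[x]+(H^{(N)}[x]-H^{(N)}_{0,0}[x])\geq E^{(N)}(x)-\xi\gamma_N$, whereas you detour through \eqref{eq:basicstern} and \eqref{eq:alphan3}, losing an unnecessary $2\alpha_N$ (had you kept the exact value $v_{0,0}(0)=E^{(N)}(x)-\rho^{-1}E^{(N-1)}(x)$ rather than bounding it by $-\alpha_N$, the $\rho^{-1}E^{(N-1)}$ terms would have cancelled and recovered the paper's sharper bound).
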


\begin{proof}
Let $[a_n , b_n ] = U_n \cap \R$, c.f. Corollary
\ref{cor:derivative}. Then, by Lemma \ref{prop:balls}, $a_1 < a_2  <
a_3 < ... < z_\infty$ and $\lim_{n \to \infty}a_n  = z_\infty$. We
show that $H^{(n)}[x]$ is bounded invertible for $x\in
[a_n,a_{n+1})$. By a repeated application of the Feshbach property,
Theorem \ref{sfm} {\it (i)}, it will follow that $H^{(n-1)}[x], ...
, H^{(0)}[x]$ are also bounded invertible for $x\in [a_n,a_{n+1})$.

Let $x\in [a_n,a_{n+1})$. Then both $H^{(n)}[x]$ and $H^{(n)}_{0,0}[x]$
are self-adjoint and, by \eqref{eq:westimate} and
\eqref{eq:basicstern},
\begin{eqnarray} \label{eq:lowerbound}
H^{(n)}[x] = H^{(n)}_{0,0}[x] + ( H^{(n)}[x] - H^{(n)}_{0,0}[x] )
 \geq E^{(n)}(x) - \xi \gamma_n,
\end{eqnarray}
where we have used that $H^{(n)}_{0,0}[x] \geq E^{(n)}(x)$, which
follows from $\beta_n < 1$. Since the function $E^{(n)}$ is
decreasing on $[a_{n+1},b_{n+1}]$ with a zero in this interval, we
know that $E^{(n)}(a_{n+1})>0$. On the other hand, by construction
of $U_n$, $|E^{(n)}|\geq \rho/2$ on $[a_{n},a_{n+1})$. Therefore
\eqref{eq:lowerbound} implies that $H^{(n)}[x]\geq
(\rho/2-\xi\gamma_n)>(\rho/2-\xi\rho/8)>0$, which proves that
$H^{(n)}[x]$ is bounded invertible.
\end{proof}


\section{Construction of the Eigenvector}
\label{sec:ev}

Next we show that zero is an eigenvalue of $H^{(0)}[z_{\infty}]$. In fact, we
will show that zero is an eigenvalue of
$H^{(n)}[z_{\infty}]$ for every $n\in\N$. To this
end we define
$$
   Q_n[z] := \chi_\rho - \overline{\chi}_\rho
   {\left(H^{(n)}_{\overline{\chi}_\rho}[z]\right)}^{-1} \overline{\chi}_\rho
   W^{(n)}[z]
   \chi_\rho,\qquad\text{for}\ z\in U_n,
$$
where $W^{(n)} = H^{(n)} - H^{(n)}_{0,0}$.  By the definition of
$H^{(n)}[z]$ and by Lemma~\ref{lm:F-basics} (c),
\begin{equation}\label{eq:H(n-1)Hn}
  H^{(n-1)}[z] Q_{n-1}[z] \Gamma^*_\rho =
  \big(\rho\Gamma_{\rho}^{*}\chi_1\big)H^{(n)}[z]
\end{equation}
and moreover, if $H^{(n)}[z]\ph=0$ and $\ph\neq 0$ then
$Q_{n-1}[z]\Gamma^*_\rho\ph\neq 0$ by Theorem~\ref{sfm}. Thus if $0$ is an eigenvalue of
$H^{(n)}[z]$, then it is an eigenvalue of $H^{(n-1)}[z]$ as well,
and the operator $Q_{n-1}[z] \Gamma^*_\rho$ maps the corresponding
eigenvectors of $H^{(n)}[z]$ to eigenvectors of $H^{(n-1)}[z]$.


\begin{theorem}\label{thm:vector}
Suppose the assumptions of Lemma~\ref{prop:balls}  hold. Then the
limit
$$
\ph^{(0)} = \lim_{n \to \infty} Q_0[z_\infty] \Gamma_\rho^*
Q_1[z_\infty] ... \Gamma_\rho^* Q_n[z_\infty] \Omega
$$
exists, $\ph^{(0)} \neq 0$ and $H^{(0)}[z_{\infty}] \varphi^{(0)}=
0.$ Moreover,
$$
\left\| \varphi^{(0)} - Q_0[z_\infty] \Gamma_\rho^* Q_1[z_\infty]
... \Gamma_\rho^* Q_n[z_\infty] \Omega \right\| \leq C
\sum_{l=n+1}^\infty \gamma_l \; ,
$$
where $C = C(\rho,\xi,\gamma_0)$.
\end{theorem}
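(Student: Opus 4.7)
The plan is to show that the partial products $\ph_n := Q_0[z_\infty]\Gamma_\rho^* Q_1[z_\infty]\Gamma_\rho^*\cdots\Gamma_\rho^* Q_n[z_\infty]\Omega$ form a Cauchy sequence whose limit satisfies $\sprod{\Omega}{\ph^{(0)}}=1$ and $H^{(0)}[z_\infty]\ph^{(0)}=0$. The main input is Lemmas~\ref{cor:bcfs} and \ref{prop:balls}: since $z_\infty\in\bigcap_k U_k$, one has $|E^{(n)}(z_\infty)|\le \rho/2$ and the non-scalar part of $H^{(n)}[z_\infty]$ lies in $\BB(\alpha_n,\beta_n,\gamma_n)$, where $\gamma_n=(C_\gamma\rho^\mu)^n\gamma_0$ shrinks geometrically. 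Hence each $H^{(n)}[z_\infty]$ belongs to $\BB(\rho/2,\rho/8,\rho/8)$, and Lemma~\ref{feshbachtest} applies uniformly in $n$, making every $Q_n[z_\infty]$ well-defined.

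The first step is the uniform resolvent bound $\|\chib_\rho (H^{(n)}_{\chib_\rho}[z_\infty])^{-1}\chib_\rho\|\le C/\rho$, obtained by repeating the Neumann-series argument from the proof of Lemma~\ref{feshbachtest}: the diagonal part $w^{(n)}_{0,0}(H_f)$ is bounded below by $\rho/8$ on $\ran\chib_\rho$, and the off-diagonal piece satisfies $\|W^{(n)}[z_\infty]\|\le \xi\gamma_n$ by Proposition~\ref{H-is-bounded}, forcing $\|H^{(n)}_{0,0}[z_\infty]^{-1}\chib_\rho W^{(n)}[z_\infty]\chib_\rho\|\le\xi<1$ uniformly in $n$. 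This immediately yields $\|Q_n[z_\infty]-\chi_\rho\|\le C\gamma_n$, so the infinite product $\prod_{n\ge 0}\|Q_n[z_\infty]\|\le \prod_{n\ge 0}(1+C\gamma_n)$ is finite because $\sum\gamma_n<\infty$. For Cauchy convergence I would telescope: using $\Gamma_\rho^*\Omega=\Omega$ and $\chi_\rho\Omega=\Omega$,
\begin{equation*}
\Gamma_\rho^* Q_{n+1}[z_\infty]\Omega-\Omega=-\Gamma_\rho^*\chib_\rho\bigl(H^{(n+1)}_{\chib_\rho}[z_\infty]\bigr)^{-1}\chib_\rho W^{(n+1)}[z_\infty]\Omega,
\end{equation*}
whose norm is at most $C'\gamma_{n+1}$ by the previous bounds. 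Consequently $\|\ph_{n+1}-\ph_n\|\le \bigl(\prod_{k\le n}\|Q_k[z_\infty]\|\bigr)C'\gamma_{n+1}\le C''\gamma_{n+1}$, and summing over $l>n$ delivers both convergence and the asserted tail bound.

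To verify $\ph^{(0)}\neq 0$ I would use $\chib_\rho\Omega=0$ to obtain $Q_n[z_\infty]^*\Omega=\chi_\rho\Omega=\Omega$; together with $\Gamma_\rho\Omega=\Omega$ this yields $\sprod{\Omega}{\ph_n}=1$ for every $n$, hence $\sprod{\Omega}{\ph^{(0)}}=1$. For the eigenvector equation, iterating the intertwining identity \eqref{eq:H(n-1)Hn} $n$ times gives
\begin{equation*}
H^{(0)}[z_\infty]\ph_n=\rho^n(\Gamma_\rho^*\chi_1)^n H^{(n)}[z_\infty]\Omega,
\end{equation*}
and a direct bound produces $\|H^{(n)}[z_\infty]\Omega\|\le |E^{(n)}(z_\infty)|+\|W^{(n)}[z_\infty]\Omega\|\le \rho/2+\xi\gamma_n$, uniformly in $n$. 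Since $\|(\Gamma_\rho^*\chi_1)^n\|\le 1$, the right-hand side is $O(\rho^n)\to 0$, and the continuity of $H^{(0)}[z_\infty]$ finishes the argument. The main technical hurdle is assembling the uniform-in-$n$ resolvent estimate on $\ran\chib_\rho$, since this is what allows every other quantity to be controlled by $\gamma_n$ alone; once that is in place the remainder of the proof reduces to summing the geometric series $\sum\gamma_n$.
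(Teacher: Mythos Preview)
Your argument is correct and largely parallels the paper's proof: the Cauchy estimate via $\|Q_n-\chi_\rho\|\le C\gamma_n$ and the vanishing of $H^{(0)}[z_\infty]\ph_n$ via the intertwining relation \eqref{eq:H(n-1)Hn} are exactly what the paper does. One small bookkeeping slip: after $n$ applications of \eqref{eq:H(n-1)Hn} you obtain $(\rho\Gamma_\rho^*\chi_1)^n H^{(n)}[z_\infty]\,Q_n[z_\infty]\Omega$, not $H^{(n)}[z_\infty]\Omega$; the paper removes the trailing $Q_n$ by inserting $\Omega=\Gamma_\rho^*\Omega$ and iterating once more to reach $(\rho\Gamma_\rho^*\chi_1)^{n+1}H^{(n+1)}[z_\infty]\Omega$. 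Either fix leaves your $O(\rho^n)$ conclusion intact.

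Where you genuinely diverge from the paper is in the non-vanishing of $\ph^{(0)}$. The paper introduces the auxiliary vectors $\ph_{k,\infty}$ for all $k\ge 0$, shows $\|\ph_{k,\infty}-\Omega\|<1$ for $k$ large, and then propagates $\ph_{k,\infty}\neq 0$ down to $k=0$ using the injectivity of $Q_{k-1}\Gamma_\rho^*$ on $\ker H^{(k)}$ from Theorem~\ref{sfm}. Your route---observing that $\chib_\rho\Omega=0$ forces $Q_n[z_\infty]^*\Omega=\chi_\rho\Omega=\Omega$ and hence $\sprod{\Omega}{\ph_n}=1$ for every $n$---is shorter and avoids both the family $\{\ph_{k,\infty}\}$ and the appeal to the Feshbach kernel isomorphism. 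The paper's approach, on the other hand, yields the extra information that each $\ph_{k,\infty}$ is a nonzero element of $\ker H^{(k)}[z_\infty]$, which it does not otherwise need here but which fits the general picture of the renormalization flow.
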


\noindent\emph{Remark.} By Theorem~\ref{thm:vector} and by Proposition \ref{pro:initial} {\it (ii)},
$Q_{\boldsymbol{\chi}}(\varphi_{\rm at} \otimes \varphi^{(0)})$ is
an eigenvector of $H_g$ with eigenvalue $z_\infty$.

\begin{proof}
For $k,l\in \N$ with $k\leq l$ we define $\ph_{k,l}\in \HH_{\rm red}$ by
$$
    \ph_{k,l}:= (Q_k[z_\infty]\Gamma_{\rho}^*)(Q_{k+1}[z_\infty]\Gamma_{\rho}^*)\cdot\ldots\cdot (Q_{l-1}[z_\infty]\Gamma_{\rho}^*)Q_l[z_\infty]\Omega.
$$
and we set $\ph_{k,k-1}:=\Omega$.

\noindent\underline{Step 1}: There is a constant $C<\infty$ depending on
$\xi,\rho$ and $\sum_{n}\gamma_n$ such that, for all $k,l\in \N$
with $k\leq l$
$$
    \|\ph_{k,l}-\ph_{k,l-1}\| \leq C \gamma_{l}
$$

By definition of $\ph_{k,l}$ and since
$\Omega=\Gamma_{\rho}^{*}\chi_{\rho}\Omega$
$$
   \ph_{k,l}-\ph_{k,l-1} = \prod_{n=k}^{l-1}
   (Q_n[z_\infty]\Gamma_{\rho}^*)(Q_{l}[z_\infty]-\chi_{\rho})\Omega.
$$
where the empty product in the case $k=l$ is to interpret as the
identity operator. Since on $U_n$,
$\|Q_n\Gamma_{\rho}^*\|=\|Q_n\|\leq \|Q_n-\chi_{\rho}\|+1\leq
\exp\|Q_n-\chi_{\rho}\|$ it follows that
\begin{equation}\label{eq:vec1}
  \|\ph_{k,l}-\ph_{k,l-1}\| \leq
   \exp\left(\sum_{n=k}^{l-1}\|Q_n[z_\infty]-\chi_{\rho}\|\right)\|Q_{l}[z_\infty]-\chi_{\rho}\|,
\end{equation}
and hence it remains to estimate $\|Q_n[z_\infty]-\chi_{\rho}\|$. By
definition of $Q_n$, on $U_n$,
\begin{equation}\label{eq:vec2}
   Q_n-\chi_{\rho} =
   -\overline{\chi}_\rho\big(H_{\overline{\chi}_\rho}^{(n)}\big)^{-1}
   \overline{\chi}_\rho\big(H^{(n)}-H^{(n)}_{0,0}\big)\chi_{\rho}
\end{equation}
and by estimates in the proof of Lemma~\ref{feshbachtest},
\begin{equation}\label{eq:vec3}
  \|\big(H_{\overline{\chi}_\rho}^{(n)}\big)^{-1}
   \overline{\chi}_\rho\| \leq
   \frac{8}{\rho}\frac{1}{1-\xi},\qquad
   \|H^{(n)}-H^{(n)}_{0,0}\| \leq \xi \gamma_{n}.
\end{equation}
Equation~\eqref{eq:vec2}, combined with the estimates
\eqref{eq:vec1}, and \eqref{eq:vec3} prove Step 1 with
\begin{equation*}
   C:= \frac{8}{\rho}\frac{\xi}{1-\xi}
   \exp\left(\frac{8}{\rho}\frac{\xi}{1-\xi}\sum_{n\geq 0}\gamma_n\right).\\
\end{equation*}

\noindent\underline{Step 2}: For all $k\in\N$, the limit
$$
    \ph_{k,\infty} := \lim_{n\to\infty} \ph_{k,n}
$$
exists, the convergence being uniform in $s$, and
$\ph_{k,\infty}\neq 0$ for $k$ sufficiently large.

Summing up the estimates from Step 1 for all $l$ with $l\geq n+1$ we
arrive at
$$
   \|\ph_{k,\infty}-\ph_{k,n}\| \leq C \sum _{l=n+1}^{\infty}\gamma_l \to0,\qquad n\to\infty,
$$
uniformly in $s$. Specializing this inequality to $n=k-1$ so that
$\ph_{k,n}=\ph_{k,k-1}=\Omega$, we see that
$\|\ph_{k,\infty}-\Omega\|<1=\|\Omega\|$ and hence
$\ph_{k,\infty}\neq 0$ for
sufficiently large $k$.\\

\noindent\underline{Step 3}: For all $k\in \N$,
$$
    H^{(k)}[z_{\infty}]\ph_{k,\infty}[z_{\infty}] = 0,\quad\text{and}\quad
    \ph_{k,\infty}[z_{\infty}]\neq 0.
$$

Since $H^{(k)}[z_\infty]$ is a bounded operator and by
\eqref{eq:H(n-1)Hn},
\begin{eqnarray}
H^{(k)}[z_\infty] \ph_{k,\infty}
&=& \lim_{n \to \infty} H^{(k)}[z_\infty] \varphi_{k,n}\nonumber \\
&=& \lim_{n\to\infty} \big(\rho \Gamma^*_\rho
\chi_1\big)^{n-k+1}H^{(n+1)}[z_\infty]\Omega.\label{eq:vec4}
\end{eqnarray}
Using $H^{(n+1)}[z_\infty]\Omega=E^{(n+1)}(z_\infty)\Omega +
(H^{(n+1)}[z_\infty]-H^{(n+1)}_{0,0}[z_\infty])\Omega$ and
$$
  |E^{(n+1)}(z_\infty)| \leq \frac{\rho}{2},\qquad
  \|H^{(n+1)}[z_\infty]-H^{(n+1)}_{0,0}[z_\infty]\|\leq \gamma_n\leq \gamma_0,
$$
we see that the limit \eqref{eq:vec4} vanishes because
$\lim_{n\to\infty}\rho^n=0$.

From $\ph_{k-1,n} = (Q_{k-1}[z_\infty]\Gamma_{\rho}^{*})\ph_{k,n}$,
the boundedness of the operator $Q_k$, and from Step~2 it follows that,
$$
  \ph_{k-1,\infty} = (Q_{k-1}[z_\infty]\Gamma_{\rho}^{*})\ph_{k,\infty}.
$$
Since $\ph_{k,\infty}$ belongs to the kernel of
$H^{(k)}[z_{\infty}]$, as we have just seen, it follows from
Theorem~\ref{sfm} that $\ph_{k-1,\infty}\neq 0$ whenever
$\ph_{k,\infty}\neq 0$. Iterating this argument starting with $k$ so
large that, by Step~2, $\ph_{k,\infty} \neq 0$, we conclude that
$\ph_{k,\infty}\neq 0$ for all $k\in \N$.
\end{proof}


\section{Analyticity of Eigenvalues and Eigenvectors}
\label{sec:analyt-ev}

This section is devoted to the proof of Theorem~\ref{thm:main}. It is
essential for this proof, that a neighborhoods $V_0\subset V$ of $s_0$ and a
bound $g_1$ on $g$ can be determined in such a way that the renormalization analysis of
Sections~\ref{sec:iterate} and \ref{sec:ev}, and in particular the choices of
$\rho$ and $\xi$ are independent of $s\in V_0$ and $g<g_1$. Once $V_0$ and $g$
are found, the assertions of Theorem~\ref{thm:main} are derived from
Proposition~\ref{analytic00} and the uniform bounds of Sections~\ref{sec:iterate} and \ref{sec:ev}.

\begin{proof}[Proof of Theorem \ref{thm:main}]
Let $\mu>0$ and $\UU\subset\C^{\nu + 1}$ be given by Hypothesis
\ref{hyp:G} and Hypothesis \ref{hyp:R}, respectively.  For the
renormalization procedure to work, we first choose $\rho\in(0,4/5)$
and a open neighborhood $V_0\subset V$ of $s_0$, both small enough,
so that $C_{\gamma}\rho^{\mu}<1$ and
\begin{equation} \label{eq:set}
\overline{B(E_{\rm at}(s), \rho)} \subset \{z|(s,z)\in
\UU\},\qquad \text{if}\ s\in V_0.
\end{equation}
This is possible since $s \mapsto E_{\rm at}(s)$ is continuous. Let
$\xi=\sqrt{\rho}/(4C_{\chi})$. Next we pick small positive constants
$\alpha_0$, $\beta_0$, and $\gamma_0$ such that
\begin{equation}\label{eq:abc}
   \alpha_0 < \frac{\rho}{2}, \qquad \beta_0 \leq \frac{\rho}{8C_{\chi}},
   \qquad \gamma_0 \leq \frac{\rho}{8C_{\chi}},
\end{equation}
and in addition
\begin{equation} \label{le:zinfty2}
   \beta_0 + \frac{C_{\beta}/\rho}{1-(C_{\chi}\rho^\mu)^2 }\gamma_0^2 \leq
   \frac{\rho}{8C_{\chi}}.
\end{equation}
By Theorems~\ref{thm:fesh} and \ref{thm:sigal}, there exists a $g_1>0$ such that for $|g| \leq g_1$
$$
H_g^{(0)}[s,z] - ( E_{\rm at}(s) - z ) \in \mathcal{B}(\alpha_0,
\beta_0, \gamma_0 ), \qquad \text{for}\ (s,z) \in \UU,
$$
where $H_g^{(0)}[s,z]$ is analytic on $\UU$. We define
\begin{eqnarray*}
   \UU_0 &:=& \UU \\
   \UU_n &:=& \{ (s,z) \in \UU_{n-1}: | E^{n-1}(s,z) | \leq \rho/8 \}.
\end{eqnarray*}
and
$$
    U_n(s) := \{ z| (s,z) \in \UU_n \}, \qquad n\in \N.
$$
Then, by \eqref{eq:abc}, \eqref{le:zinfty2}, and \eqref{eq:set} the
assumptions of Lemma~\ref{prop:balls} are satisfied for $s\in V_0$
and $U_0=U_{0}(s)$. It follows that, for all $n\in \N$,
$H^{(n)}[s,z]=\RR^n H^{(0)}[s,z]$ is well-defined for
$(s,z)\in \UU_n$, and that $U_n(s)\neq \emptyset$. By
Proposition~\ref{analytic00}, $H^{(n)}[s,z]$ is analytic in $ \UU_n^{\circ}$.\\

\noindent \underline{Step 1}: $z_\infty(s)=\lim_{n\to\infty}z_n(s)$
exists and is analytic on $V_0$.  \\

Since $H^{(n)}[s,z]$ is analytic on $\UU_n^{\circ}$, so is $E^{(n)}(s,z)$. Let $z_n(s)$ denote the
unique zero of the function $z\mapsto E_n(s,z)$ on $U_n(s)$ as
determined by Lemma~\ref{prop:balls}. That is,
$$
E^{(n)}(s,z_n(s)) = 0.
$$
By the implicit function theorem  $z_n(s)$ is analytic in $s$. The
application of the implicit function theorem is justified since $z
\mapsto E^{(n)}(s,z)$ is bijective in a neighborhood of $z_n(s)$,
and thus in this neighborhood $\partial_z E^{(n)}(s,z)\neq 0$. By
Lemma~\ref{prop:balls} {\it (b)}, $z_n(s)$ converges to
$z_\infty(s)$ uniformly in $s \in V_0$.  This implies the
analyticity of $z_\infty(s)$ on $V_0$, by the Weierstrass
approximation theorem of complex analysis.\\

\noindent \underline{Step 2}: For $s \in V_0$, there exists an
eigenvector $\psi(s)$ of $H(s)$ with eigenvalue $z_\infty(s)$, such
that $\psi(s)$  depends analytically on $s$. \\

 Since $H^{(n)}[s,z]$ is analytic on $\UU_n^{\circ}$, it follows, by Proposition~\ref{analytic00}, that
$$
   Q_n[s,z] = \chi_\rho(s) - \overline{\chi}_\rho(s)
   {H^{(n)}_{\overline{\chi}_\rho}[s,z]}^{-1} \overline{\chi}_\rho(s)
   W^{(n)}[s,z]
   \chi_\rho(s)
$$
is analytic  on $\UU_n^{\circ}$, where $W^{(n)} := H^{(n)} -
H^{(n)}_{0,0}$. Hence, by Step~1, $s \mapsto Q_n[s,z_\infty(s)]$ is analytic on $V_0$. It follows that
$$
\varphi_{0,n}(s):=Q_0[s,z_\infty(s)] \Gamma_\rho^* Q_1[s,z_\infty(s)]\dots
\Gamma_\rho^* Q_n[s,z_\infty(s)] \Omega
$$
is analytic on $V_0$. From Theorem~\ref{thm:vector} we know that
these vectors converge uniformly on $V_0$ to  a vector $
\ph^{(0)}(s)\neq 0$ and that $H^{(0)}[s,z_{\infty}(s)] \varphi^{(0)}(s)=
0$. Hence $\ph^{(0)}(s)$ is analytic on $V_0$ and, by the Feshbach property (Proposition \ref{pro:initial}{\it (ii)}), the vector
$$
\psi(s) = Q_{\boldsymbol{\chi}}[s,z_\infty(s)]\big( \varphi_{\rm
at}(s) \otimes \varphi^{(0)}(s)\big)
$$
is an eigenvector of $H(s)$ with eigenvalue with $z_\infty(s)$. Since
$\ph_{\rm at}$ is analytic on $V_0$ we conclude that $\psi$ is analytic on
$V_0$ as well.\\

\noindent\underline{Step 3}: For $s\in V_0\cap\R^{\nu}$, $z_\infty(s)={\rm
inf}\sigma(H(s))$. \\

Let $s \in V_0\cap\R^{\nu}$. Then $H(s)$ is self-adjoint and its spectrum is a
half line $[E(s),\infty)$. By Step~2, $z_{\infty}(s)\geq E(s)$. We use
Proposition~\ref{spectralgap} to show that $z_{\infty}(s)>E(s)$ is impossible.
Clearly $\Eat(s)\in\R$, and $H^{(0)}[s,z]^* = H^{(0)}[s,\overline{z}]$ for $z
\in B(E_{\rm at}(s),\rho)$ is a direct consequence of the definition of
$H^{(0)}$ and the self-adjointness of $H(s)$. Hence there exists a number $a(s) < z_\infty(s)$
such that $H^{(0)}[s,x]$ has a
bounded inverse for all $x \in (a(s) ,z_\infty(s))$. It follows, by
Theorem~\ref{sfm}, that  $(a(s),z_\infty(s))\cap\sigma(H(s))=\emptyset$. Therefore $z_{\infty}(s)=E(s)$.
\end{proof}

\appendix

\section{Neighborhood of Effective Hamiltonians}
\label{sec:neighbor}

The purpose of this section is to prove the following Theorem.

\begin{theorem} \label{thm:sigal}
Let Hypotheses \ref{hyp:G}, \ref{hyp:H}, and \ref{hyp:R} hold for
some $\mu>0$ and $\UU\subset\C\times\C$. For every $\xi\in (0,1)$
and every triple of positive constants $\alpha_0,\beta_0,
\gamma_0$, there exists a positive constant $g_1$ such that for
all $g\in [0,g_1)$ and all $(s,z) \in \UU$, $(H_g(s)-z,H_0(s)-z)$
is a Feshbach pair for $\boldsymbol{\chi}(s)$, and
\begin{equation} \label{eq:nb1}
   H_g^{(0)}[s,z] - (E_{\rm at}(s) - z ) \in
\mathcal{B}(\alpha_0,\beta_0, \gamma_0).
\end{equation}
\end{theorem}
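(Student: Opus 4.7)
The plan is first to invoke Theorem~\ref{thm:fesh} for the Feshbach-pair assertion, then to refine the bound on the Feshbach map to verify the containment \eqref{eq:nb1} by shrinking $g_1$ if necessary. Specifically, one writes
\[
H_g^{(0)}[s,z] - (\Eat(s) - z) = H_f + W_{\rm at}[s,z],
\]
so that to exhibit a sequence $w^{(0)}[s,z] = (w_{m,n}^{(0)}[s,z])_{m,n \in \N}\in\WW_\xi$ with $H(w^{(0)}[s,z])$ equal to this operator and lying in the required ball, one must expand $W_{\rm at}[s,z]$ as a (generalized) Wick-ordered power series in $g$. This is done by inserting the Neumann expansion
\[
(H_g(s)-z)_{\obchi(s)}^{-1}\obchi(s) = \sum_{L=0}^\infty (-g)^L\bigl[(H_0(s)-z)^{-1}\obchi(s)W(s)\bigr]^L(H_0(s)-z)^{-1}\obchi(s),
\]
which converges uniformly on $\UU$ for $g$ sufficiently small by Lemma~\ref{lem:eight}, splitting each $W(s) = a(G_{\bar s}) + a^*(G_s)$, and pushing all annihilation operators to the right and all creation operators to the left through the intervening functions of $H_f$ by means of the pull-through identities $a(k)F(H_f) = F(H_f+|k|)a(k)$ and $F(H_f)a^*(k) = a^*(k)F(H_f+|k|)$. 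Each resulting Wick monomial of type $(m,n)$ produces a kernel on $B^{m+n}$ obtained by integrating the remaining ``internal'' momenta against a product of $G_s(\cdot)$ factors sandwiched between resolvents, which one checks is measurable, symmetric, and lies in $\WW_{m,n}$.

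Next one estimates these kernels uniformly in $(s,z)\in\UU$. Using Lemma~\ref{lem:eight} to bound each resolvent factor $(H_0(s)-z)^{-1}\obchi(s)$, and Hypothesis~\ref{hyp:G} to bound each contraction against $G_s$, one obtains estimates of the form
\[
\|w_{m,n}^{(0)}[s,z]\|_\mu \leq C^{\,m+n+1}\, g^{\max(m+n,\,2)}\,\|G_s\|_\omega^{\,m+n}\sum_{L\geq 0}(g C')^L,
\]
with constants $C, C'$ depending only on $\mu$, $\sup_{s\in V}\|G_s\|_\omega$, and the suprema in Hypothesis~\ref{hyp:R}. This is precisely the calculation performed in detail in Section~3 of \cite{bacchefrosig:smo} (see also \cite{bacfrosig:ren}), and it carries over verbatim to the present situation modulo the replacement of self-adjoint by complex $\bchi(s)$. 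The only $O(g)$ contribution comes from $g\langle\chi_1 W(s)\chi_1\rangle_{{\rm at},s}$, and it feeds into $w_{1,0}^{(0)}$ and $w_{0,1}^{(0)}$ only; all other components are $O(g^2)$. Consequently
\[
|w_{0,0}^{(0)}[s,z](0)| + \|\partial_r w_{0,0}^{(0)}[s,z] - 1\|_\infty = O(g^2),
\qquad
\|w^{(0)}[s,z] - w_{0,0}^{(0)}[s,z]\|_{\mu,\xi} = O(g),
\]
uniformly for $(s,z)\in\UU$. Choosing $g_1$ small enough that the three quantities lie below $\alpha_0$, $\beta_0$, and $\gamma_0$ respectively gives \eqref{eq:nb1}.

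The main obstacle is the bookkeeping of the Wick expansion and the uniform kernel estimates. One must verify that each pull-through step preserves the $\WW_{m,n}$ structure, that the integrals over the contracted momenta are absolutely convergent thanks to the weight $|k|^{-2-2\mu}$ provided by Hypothesis~\ref{hyp:G} (with $\mu>0$), and that the resolvent factors between consecutive $W$'s can be absorbed via Lemma~\ref{lem:eight} without losing any $(H_f+1)^{1/2}$ power that is not paid for by an adjacent $W$. These technicalities are essentially identical to those in \cite{bacchefrosig:smo}; the task reduces to checking that the estimates there are uniform in $(s,z)\in\UU$, which follows from the \emph{uniformity} built into Hypotheses~\ref{hyp:G}--\ref{hyp:R} and into Lemma~\ref{lem:eight}.
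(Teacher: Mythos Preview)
Your outline is correct and follows the same route as the paper: invoke Theorem~\ref{thm:fesh} for the Feshbach-pair claim, expand the resolvent in a Neumann series, Wick-order via the pull-through formulas, and estimate the resulting kernels in $\WW_\xi$ uniformly on $\UU$. The paper carries out the multi-index bookkeeping explicitly rather than deferring to \cite{bacchefrosig:smo}, and isolates the needed bounds on $(H_f+1)F(H_f+r)$ and on $(H_f+1)F'(H_f+r)$ (the latter required for the $C^1$-part of the $\WW_{m,n}$-norm, i.e.\ for $\|\partial_r w_{0,0}-1\|_\infty$ and the $r$-derivative inside $\|w_{m,n}\|_\mu$) in a dedicated Lemma~\ref{lem:technicalities}, whereas you cite only Lemma~\ref{lem:eight}, which handles $F$ but not $F'$.
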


\noindent By  Theorem \ref{thm:fesh} we know that we can choose $g$
sufficiently small such that the Feshbach property is satisfied.
To prove (\ref{eq:nb1}) we explicitly compute the sequence of kernels
$w=(w_{m,n})\in \WW_{\xi}$ such that $H_g^{(0)}[s,z]=H(w)$. To this end we
recall that, by \eqref{eq:defh0} and \eqref{eq:eff22},
\begin{eqnarray} \label{eq:nb2}
H_g^{(0)}[s,z] = (E_{\rm at} - z) + H_f +
 \langle \chi_1 (gW  - gW\overline{\boldsymbol{\chi}}
(H_g - z )^{-1}_{\overline{\boldsymbol{\chi}}}
\overline{\boldsymbol{\chi}}gW)\chi_1\rangle_{\rm at},
\end{eqnarray}
and we expand the resolvent $(H_g - z )^{-1}_{\overline{\boldsymbol{\chi}}}$
in a Neumann series. We find that
$$
\langle \chi_1  (g W  - g W \overline{\boldsymbol{\chi}}( H_g -
z)^{-1}_{\overline{\boldsymbol{\chi}}} \overline{\boldsymbol{\chi}}
g W ) \chi_1 \rangle_{\rm at} = \sum_{L=1}^\infty ( -1)^{L-1} g^L \langle\chi_1(WF)^{L-1}W\chi_1
\rangle_{\rm at},
$$
where $F = \overline{\boldsymbol{\chi}} (H_0 -z)^{-1}
\overline{\boldsymbol{\chi}}$ is a function of $H_f$, that is, $F=F(H_f)$ with
\begin{eqnarray} \label{eq:fterms}
F(r) := \frac{\overline{\boldsymbol{\chi}}^2(s,r)}{ H_{\rm at}(s)-z + r},
\end{eqnarray}
and $\overline{\boldsymbol{\chi}}(s,r) = \overline{P}_{\rm at}(s)
\otimes 1 + P_{\rm at}(s) \otimes \overline{\chi}_1(r)$.
Since $W = a(G) + a^*(G)$, the $L$th term in
this series is a sum of $2^L$ terms. We label them by  $L$-tuples
$\underline{\sigma} = (\sigma_1, \sigma_2,  ... ,\sigma_L)$, with
 $\sigma_i \in \{ - , + \}$, and we set $a^+(G) :=a^*(G)$, $a^-(G) :=
 a(G)$. With these notations
\begin{equation}  \label{eq:explicit}
\langle \chi_1 (W F)^{L-1} W  \chi_1 \rangle_{\rm at} =
\sum_{\underline{\sigma} \in \{-,+\}^{L}}
 \langle \chi_1
 \prod_{j=1}^{L-1} \left\{ a^{\sigma_j}(G) F(H_f) \right\}
a^{\sigma_L}(G) \chi_1 \rangle_{\rm at}.
\end{equation}
Next we use a variant of Wick's theorem (see \cite{bacfrosig:ren}) to expand each term of the sum
\eqref{eq:explicit} in a sum of normal ordered terms. Explicitly, this means
that in each term of \eqref{eq:explicit} the pullthrough formulas
$$
f(H_f) a^*(k) = a^*(k) f( H_f + |k| ),
 \qquad a(k) f(H_f)  = f(H_f + |k| ) a(k),
$$
and the canonical commutation relations are used to move all creation operators
to the very left, and all annihilation operators to the right of all
other operators. To write down the result we introduce the multi-indices
\begin{eqnarray*}
\underline{m,p,n,q} := (m_1,p_1,n_1,q_1,\ldots, m_L,p_L,n_L,q_L) \in
\{0,1\}^{4L} \; ,
\end{eqnarray*}
which run over the sets
$I_L := \{ \underline{m,p,n,q} \in  \{0,1\}^{4L}|m_l + p_l + n_l + q_l = 1\}$. The numbers $m_l,p_l,n_l,q_l$ may be thought
of as flags that indicate the position of the operator $a^{\sigma_l}(k)$ in a
given normal-ordered term: $m_l=1$ ($n_l=1$) if it is a \emph{non-contracted}
creation (annihilation) operator, $p_l=1$ ($q_l=1$) if it is a \emph{contracted}
creation (annihilation) operator. We obtain
\begin{eqnarray} \label{marcel:2}
\lefteqn{ \langle  \chi_1 (W F)^{L-1} W \chi_1 \rangle_{\rm at} }  \\
&& \nonumber = \sum_{\underline{m,p,n,q} \in I_L } \int d k_{\um}
d\tilde{k}_{\un} \left\{ \prod_{l=1}^L a^*(k_{m_l})^{m_l} \right\}
V_{\underline{m,p,n,q}}(H_f, k_{\um}, \tilde{k}_{\un} ) \left\{
\prod_{l=1}^L a(\tilde{k}_{n_l})^{n_l} \right\} \; ,
\end{eqnarray}
with
\begin{eqnarray}
\lefteqn{ V_{\underline{m,p,n,q}}(r, k_{\um},  \tilde{k}_{\un} )} \nonumber \\
&=&  \chi_1(r + r_0(\um,\un)) \Bigg\langle \left\{\prod_{l=1}^{L-1}
G(k_{m_l})^{m_l} G^*(\tilde{k}_{n_l})^{n_l} a^*(G)^{p_l}
 a(G)^{q_l}
F(H_f + r + r_l(\um, \un ) ) \right\}\nonumber  \\
&& \times G(k_{m_L})^{m_L} G^*(\tilde{k}_{n_L})^{n_L} a^*(G)^{p_L}
 a(G)^{q_l} \Bigg\rangle_{{\rm at}, \Omega} \chi_1(r+ r_L(\um,\un)),\label{marcel:3}
\end{eqnarray}
where $\langle A \rangle_{{\rm
at},\Omega} := ( \Omega , \langle A \rangle_{\rm at} \Omega)$, $\Omega\in\FF$
being the vacuum vector. Moreover
\begin{align*}
    k_\um &:= ( m_1 k_1,\ldots , m_L k_L),& d\tilde{k}_{\um} &:=
    \prod_{l=1,m_l=1}^{L}dk_{l}\\
    \tilde{k}_\un &:= ( n_1 \tilde{k}_1, \ldots , n_L \tilde{k}_L),& d\tilde{k}_{\un} &:=
    \prod_{l=1,n_l=1}^{L}d\tilde{k}_{l}
\end{align*}
and
$$
  r_l(\um, \un ) = \sum_{\stackrel{i \leq l}{m_l =1}} |k_i| +
  \sum_{\stackrel{i \geq  l + 1}{n_l = 1}} |\tilde{k}_i|.
$$
Upon summing \eqref{marcel:2} for $L=1$ through $\infty$ we
collect all terms with equal numbers $M =
|\underline{m}|:=\sum_{l=1}^L m_l$ and $N=|\underline{n} |:=\sum_{l=1}^L n_l$
of creation and annihilation operators, respectively.
To this end we need to relabel the integration variables. That is,
we distribute the $M+N$ variables $k_1,\ldots,k_M\in\R^3\times\{1,2\}$ and $\tilde{k}_1,\ldots,\tilde{k}_N\in\R^3\times\{1,2\}$
into the $M+N$ arguments of $V_{\underline{m,p,n,q}}(r,\cdot,\cdot)$ designated by $m_l=1$ and $n_l=1$.
Explicitly this is done by
$$
   \sigma_{\um}(k_1,\ldots,k_M)=( m_1 k_{\um(1)},\ldots , m_L
   k_{\um(L)}),\qquad \um(l)=\sum_{j=1}^l m_j.
$$
We obtain
\begin{eqnarray*}
   \lefteqn{\sum_{L\geq 1}(-1)^{L-1}g^L\langle  \chi_1 (W F)^{L-1} W \chi_1
    \rangle_{\rm at}}\\ &=& \sum_{M+N \geq 1}\int_{B^{M+N}} a^*(k^{(M)})
    \hat{w}_{M,N}(H_f,K)a(\tilde{k}^{(N)})\,dK
\end{eqnarray*}
where
\begin{equation}\label{eq:w-sum-V}
    \hat{w}_{M,N}(r,K) = \sum_{L \geq M+N} ( - 1)^{L-1} g^L
    \sum_{ \stackrel{\underline{m,p,n,q}\in I_L} {|\um|
    = M , |\underline{n}|= N}} V_{\underline{m,p,n,q}}
   (r,\sigma_{\um}(k^{(M)}),\sigma_{\un}(\tilde{k}^{(N)}))
\end{equation}
and $K=(k^{(M)},\tilde{k}^{(N)})$. Hence  $H^{(0)}_g=H(w)$ with
\begin{equation}\label{eq:w0-sum-V}
w_{0,0}(r) = E_{at} - z +r+ \sum_{L \geq 1} ( - 1)^{L-1} g^L \sum_{
\stackrel{\underline{p,q} \in \{0,1\}^{2L}}{p_l + q_l = 1 }}
V_{\underline{0,p,0,q} } (r ),
\end{equation}
and $w_{M,N}(r,K)$ given by the symmetrisation of $\hat{w}_{M,N}(r,K)$
with respect to $k_1,\ldots,k_M\in\R^3$ and $\tilde{k}_1,\ldots,\tilde{k}_N\in\R^3$, respectively.

It remains to show that $H(w)-(E_{\rm at}-z)$ belongs to the ball
$\BB(\alpha_0, \beta_0, \gamma_0)$ for $g$ sufficiently small. To
this end we need the following estimates on the operator-valued function \eqref{eq:fterms}
and on its derivative,
\begin{eqnarray} \label {eq:derivative} F'(r) = -
\frac{\overline{\bchi}^2(s,r)}{(H_{\rm at}(s) - z + r )^2} +
\frac{P_{\rm at}(s) \otimes 2  \overline{\chi}_1(s,r)\partial_r
\overline{\chi}_1(s,r)}{H_{\rm at}(s) -z + r} \; .
\end{eqnarray}

\begin{lemma} \label{lem:technicalities}
Let Hypothesis \ref{hyp:G} and \ref{hyp:R} hold for some $\mu
> 0$ and $\UU$. Then
\begin{eqnarray*}
C_0 &: =& \sup_{(s,z) \in \UU} \sup_{r \geq 0} \| (H_f + 1) F(H_f +
r ) \|  < \infty
\\
C_1 &: =& \sup_{(s,z) \in \UU} \sup_{r \geq 0} \|
 (H_f + 1) F'(H_f + r ) \|  < \infty \; ,
\end{eqnarray*}
for $F $  given by  \eqref{eq:fterms}.
\end{lemma}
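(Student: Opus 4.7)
The plan is to pass to the spectral representation of $H_f$ and reduce both assertions to uniform scalar-operator estimates on $\HH_{\rm at}$, parametrised by $q\geq 0$ and $r\geq 0$. In such a representation $H_f$ acts as multiplication by $q\geq 0$, so that
$$
  \|(H_f+1)F(H_f+r)\| = \sup_{q\geq 0}\|(q+1)F(q+r)\|_{\LL(\HH_{\rm at})},
$$
and analogously for $F'$. Writing $t:=q+r\geq r\geq 0$ and using $(t-r+1)\leq (t+1)$, it suffices to establish
$$
  \sup_{(s,z)\in\UU}\sup_{t\geq 0}\|(t+1)F(t)\| <\infty
  \qquad\text{and}\qquad
  \sup_{(s,z)\in\UU}\sup_{t\geq 0}\|(t+1)F'(t)\| <\infty,
$$
where now $F(t)$ and $F'(t)$ are operators on $\HH_{\rm at}$. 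This is where I would set up the work.

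The key structural step is to exploit $\Pat\Patb=\Patb\Pat=0$ and that $\Patb$ commutes with $\Hat(s)$, so that
$$
  \obchi^2(s,t) \;=\; \Patb(s) \;+\; \Pat(s)\,\overline{\chi}_1(t)^2,
$$
and similarly the two terms in $F'(t)$ split cleanly along $\Patb(s)$ and $\Pat(s)$. On $\ran\Pat(s)$ the operator $\Hat(s)$ acts as the scalar $\Eat(s)$. Thus each estimate decomposes into a $\Patb$-contribution and a $\Pat$-contribution, to be bounded separately.

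For the $\Patb$-contribution to $(t+1)F(t)$, Hypothesis~\ref{hyp:R} gives the uniform bound $\|(t+1)(\Hat(s)-z+t)^{-1}\Patb(s)\|\leq M$ directly. For the $\Pat$-contribution we note that $\overline{\chi}_1$ vanishes on $[0,3/4]$, so only $t\geq 3/4$ matters; and Hypothesis~\ref{hyp:R} yields $|\Eat(s)-z|<1/2$, whence
$$
  |\Eat(s)-z+t|\;\geq\; t-|\Eat(s)-z|\;\geq\; t-\tfrac{1}{2}\;\geq\;\tfrac{1}{4},
$$
so $(t+1)\overline{\chi}_1(t)^2/|\Eat(s)-z+t|$ is uniformly bounded on $[3/4,\infty)$ (it tends to $1$ at infinity). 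These two estimates combine to yield the bound on $C_0$.

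For $C_1$ I would handle the two summands of $F'(t)$ in the same spirit. The first summand contains the squared resolvent, and the $\Patb$-piece is treated by writing $(\Hat(s)-z+t)^{-2}\Patb(s)=[(\Hat(s)-z+t)^{-1}\Patb(s)]^2$, using the commutation of $\Patb(s)$ with $\Hat(s)$, and then invoking Hypothesis~\ref{hyp:R} twice to obtain $(t+1)\|(\Hat(s)-z+t)^{-2}\Patb(s)\|\leq M^2/(t+1)$. The $\Pat$-piece is again reduced to a bound on $(t+1)\overline{\chi}_1(t)^2/|\Eat(s)-z+t|^2$ for $t\geq 3/4$, which is uniformly finite by the same argument as above. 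The second summand of $F'(t)$ is supported in $t\in[3/4,1]$ because $\overline{\chi}_1'$ has compact support there; on this compact set $\overline{\chi}_1$, $\overline{\chi}_1'$ are bounded and $|\Eat(s)-z+t|\geq 1/4$, so $(t+1)\|\Pat(s)\|\cdot|2\overline{\chi}_1(t)\overline{\chi}_1'(t)|/|\Eat(s)-z+t|$ is trivially bounded uniformly in $(s,z)\in\UU$.

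The only place where any care is needed is the algebraic manipulation $(\Hat(s)-z+t)^{-2}\Patb(s)=[(\Hat(s)-z+t)^{-1}\Patb(s)]^2$ in the estimate of $C_1$, which requires that $\Patb(s)$ commute with $(\Hat(s)-z+t)^{-1}$; this is guaranteed because $\Patb(s)$ is the Riesz projection associated with $\Hat(s)$. Everything else is routine once the decomposition along $\Patb(s)$ and $\Pat(s)$ has been performed and the support property of $\overline{\chi}_1$ together with $|\Eat(s)-z|<1/2$ is invoked.
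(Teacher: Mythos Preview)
Your proof is correct and follows essentially the same approach as the paper's: pass to the spectral representation of $H_f$, split $\overline{\boldsymbol{\chi}}^2$ along $\Patb(s)$ and $\Pat(s)$, invoke Hypothesis~\ref{hyp:R} for the $\Patb$-piece, and use the support property of $\overline{\chi}_1$ together with $|\Eat(s)-z|<1/2$ for the $\Pat$-piece. Your treatment of $C_1$ is in fact more explicit than the paper's, which simply says ``Similarly $C_1$ is estimated using \eqref{eq:derivative}.''
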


\begin{proof}
To show that $C_0$ is finite we estimate
\begin{eqnarray*}
\lefteqn{ \sup_{r \geq 0} \left\| ( H_f + 1 )
\frac{\overline{\bchi}^2(s, H_f + r)}{H_{\rm at}(s) - z + H_f + r}
\right\|} \\
&& = \sup_{r,q \geq 0} \left\| ( q + 1 ) \frac{\overline{P}_{\rm
at}(s) \otimes 1 + P_{\rm at}(s)\otimes
\overline{\chi}_1^2(r + q)}{H_{\rm at}(s) - z + q + r} \right\| \\
&& \leq \sup_{r,q \geq 0} \left\| ( q + 1 ) \frac{\overline{P}_{\rm at}(s)}{H_{\rm at}(s) - z + q + r} \right\|  +
\sup_{r,q \geq 0} \left\| ( q + 1 ) \frac{\overline{\chi}_1^2 (r +
q)}{E_{\rm at}(s) - z + q + r}
 \right\| \| P_{\rm at}(s) \|.
\end{eqnarray*}
By Hypothesis \ref{hyp:R}, both terms are bounded on $\UU$.
Similarly $C_1$ is estimated using \eqref{eq:derivative}.
\end{proof}


\begin{proof}[\it Proof of Theorem~\ref{thm:sigal}]
Let Hypothesis \ref{hyp:G} and \ref{hyp:R} hold for some $\mu
> 0$ and $\UU$. Let $0< \xi < 1$. By  Theorem \ref{thm:fesh} we know that
there exists a $g_0 > 0$ such that for all $|g| < g_0$, $(H_g -z,H_0
- z)$ on $\UU$ is a Feshbach pair for $\boldsymbol{\chi}$. Let
$(s,z) \in \UU$. First we derive upper bounds for
$V_{\underline{m,p,n,q}}$ and $\partial_r V_{\underline{m,p,n,q}}$.
Inserting $(H_f+1)^{-1}(H_f+1)$ in front of $F(H_f + r +  r_l(\um, \un ) )$ we
obtain, from Lemma~\ref{lem:technicalities}, that
\begin{eqnarray}
\lefteqn{| V_{\underline{m,p,n,q}}(r,k_{\um},\tilde{k}_{\un} )|}
\nonumber \\
&\leq &\left\{\prod_{l=1}^{L}
   \|G( k_{m_l})\|^{m_l}\|G( k_{n_l})\|^{n_l}
   \|G\|_{\omega}^{p_l+q_l} \right\}
   C_0^{L-1}\sup_{s:(s,z)\in \UU}\|P_{\rm at}(s)\| \label{V1}.
\end{eqnarray}
Let $C_{\rm at} := \sup_{s:(s,z)\in \UU} \|
P_{\rm at}(s) \|$. Similarly, using \eqref{marcel:3}, \eqref{eq:derivative}
and \eqref{V1} we estimate
\begin{eqnarray}
 \lefteqn{|\partial_r V_{\underline{m,p,n,q}}(r,k_{\underline{m}},\tilde{k}_{\underline{n}})|}\\
 &\leq & 2\|\chi_1'\|_\infty  \cdot \left\{\prod_{l=1}^{L}
   \|G( k_{m_l})\|^{m_l}\|G( k_{n_l})\|^{n_l}
   \|G\|_{\omega}^{p_l+q_l} \right\}
   C_0^{L-1}C_{\rm at} \nonumber \\
&& +\sum_{j=1}^{L-1}  \Bigg\vert\Bigg\langle\left\{\prod_{l=1}^{j-1}
G( k_{m_l})^{m_l} G^*( \tilde{k}_{n_l})^{n_l} a^*(G_g)^{p_l}
 a(G)^{q_l}
F(H_f +  r + r_l(\um, \un )) \right\} \nonumber \\
&& \qquad\times G( k_{m_j})^{m_j} G^*( \tilde{k}_{n_j})^{n_j} a^*(G)^{p_j}
 a(G)^{q_j}
  {F}'(H_f + r + r_j(\um, \un ) )  \nonumber \\
&& \qquad\times \left\{\prod_{l=j+1}^{L-1} G (k_{m_l})^{m_l} G^*(\tilde{k}_{n_l})^{n_l} a^*(G)^{p_l}
 a(G)^{q_l}F(H_f + r +  r_l(\um,\un) ) \right\}  \nonumber \\
&& \qquad\times G(k_{m_L})^{m_L} G^*(\tilde{k}_{n_L})^{n_L} a^*(G)^{p_L}
 a(G_g)^{q_l}   \Bigg\rangle_{{\rm at}, \Omega} \Bigg\vert
 \nonumber \\
&\leq & \left\{\prod_{l=1}^{L}
   \|G( k_{m_l})\|^{m_l}\|G( k_{n_l})\|^{n_l}
   \|G\|_{\omega}^{p_l+q_l} \right\} C_{\rm at}  C_0^{L-2} ( 2 \| \chi_1' \|_\infty    C_0 + (L-1)C_1 )  .
\label{V2}
\end{eqnarray}
With the help of \eqref{V1} and \eqref{V2} we can now prove the theorem.
From \eqref{eq:w0-sum-V} and \eqref{V1} it follows that
\begin{eqnarray*}
\left| w_{0,0}(0) -  (E_{\rm at} - z ) \right| &\leq &\sum_{L=2}^\infty g^L
\sum_{\stackrel{\underline{p,q} \in \{0,1\}^{2L} }{p_l+ q_l = 1}}
\left| V_{\underline{0,p,0,q}}(0) \right|\\
&\leq & \sum_{L=2}^\infty g^L \sum_{\stackrel{\underline{p,q} \in
\{0,1\}^{2L} }{p_l+ q_l = 1}} \|G\|_\omega^L C_0^{L-1} C_{\rm at}\\
&\leq & C_{\rm at} \sum_{L=2}^\infty 2^L g^L \| G \|_\omega^L C_0^{L-1},
\end{eqnarray*}
which can be made smaller than any positive $\alpha_0$ for small $g$. Estimate
\eqref{V2} implies that
\begin{eqnarray*}
\| {w}'_{0,0} - 1 \|_\infty
&=& \sup_r | {{w}'}_{0,0}[r] - 1 | \\
&\leq & \sup_r \sum_{L=2}^\infty g^L
\sum_{\stackrel{\underline{p,q} \in \{ 0, 1 \}^{2L}}{p_l+q_l=1}}
| \partial_r V_{0,p,0,q}(r)|\\
&\leq &  \sum_{L=2}^\infty g^L \sum_{ \stackrel{\underline{p,q} \in
\{0,1\}^{2L}}{p_l + q_l = 1}} \| G\|_\omega^L C_{\rm at}
C_0^{L-2} \left( 2 C_0 \|\chi_1' \|_\infty  + (L-1)
{C_1} \right) \\
&\leq &  \sum_{L=2}^\infty g^L \| G \|_\omega^L 2^L C_{\rm at}
C_0^{L-2} \left( 2 C_0 \| \chi_1' \|_\infty + (L-1) {C_1} \right),
\end{eqnarray*}
which can be made  smaller than any positive $\beta_0$ for small $g$. It remains to
show that $\|\left( {w}_{M,N} \right)_{M+N \geq 1}\|_{\mu,\xi}\leq\gamma_0$ for
$g$ sufficiently small. By \eqref{eq:w-sum-V}
\begin{equation}\label{V3}
    \|w_{M,N}\|_{\mu} \leq \sum_{L\geq M+N} g^L \sum_{ \stackrel{\underline{m,p,n,q}  \in I_L } { |
\underline{m} | = M, | \underline{n} | = N } } \|V_{\underline{m,p,n,q}}\|_{\mu}
\end{equation}
where, by a triangle inequality and by \eqref{V1} and \eqref{V2}
\begin{eqnarray}\nonumber
   \|V_{\underline{m,p,n,q}}\|_{\mu} &\leq
   &\left(\int_{B^{M+N}}\|V_{\underline{m,p,n,q}}(K)\|_{\infty}^2\frac{dK}{|K|^{2+2\mu}}\right)^{1/2}\\
   && + \left(\int_{B^{M+N}}\|\partial_r
   V_{\underline{m,p,n,q}}(K)\|_{\infty}^2\frac{dK}{|K|^{2+2\mu}}\right)^{1/2}\nonumber\\
   &\leq & \|G\|_\mu^{M+N} \|G\|_{\omega}^{L-(M+N) } S_L, \label{V4}
\end{eqnarray}
with $ S_L := C_{\rm at}  C_0^{L-2} \left( C_0 + 2 \| \chi_1'
\|_\infty C_0 + (L-1)C_1 \right)$, and
$$
    \|G\|_\mu := \left(\int_{\R^3}\|G(k)\|^2\frac{dk}{|k|^{2+2\mu}}\right)^{1/2}.
$$
Combining \eqref{V3} and \eqref{V4} and find
\begin{eqnarray*}
\| {w}_{M,N} \|_\mu \leq \sum_{L=1}^\infty g^L S_L \sum_{
\stackrel{\underline{m,p,n,q}  \in I_L } { | \underline{m} | = M, |
\underline{n} | = N } } \| G \|_\mu^{M+N} \| G \|_\omega^{L-(M+N)}
\; ,
\end{eqnarray*}
where the condition $L\geq M+N$ has been relaxed to $L\geq 1$. Therefore
\begin{eqnarray*}
\| \left( {w}_{M,N} \right)_{M+N \geq 1} \|_{\mu,\xi}
&=& \sum_{M+N \geq 1} \xi^{-(M+N)} \| {w}_{N,M} \|_\mu \\
&\leq& \sum_{L=1}^\infty g^L S_L \| G \|_\omega^L \sum_{M+N \geq 1}
\xi^{-(M+N)} \sum_{ \stackrel{\underline{m,p,n,q}  \in I_L } { |
\underline{m} | = M, | \underline{n} | = N } } \left( \| G
\|_\omega^{-1} \| G \|_\mu  \right)^{M+N}
 \\
&\leq& \sum_{L=1}^\infty g^L S_L \| G \|_\omega^L \left(
 \sum_{ \underline{m,p,n,q}  \in I_1 }
\left(
\xi^{-1} \| G \|_\omega^{-1} \| G\|_\mu \right) ^{m_1+n_1} \right)^L \\
&\leq&
 \sum_{L=1}^\infty g^L S_L \| G \|_\omega^L \left(2  + 2 \xi^{-1} \| G \|_\omega^{-1} \| G \|_\mu \right)^L \; .
\end{eqnarray*}
This can be made  smaller than any positive $\gamma_0$ for small
coupling $g$. It follows that we can find a $g_1>0$ such that on
$\UU$, \eqref{eq:nb1} holds for all $g\in[0,g_1)$. This concludes the
proof.
\end{proof}


\section{Technical Auxiliaries}
\label{sec:aux}

Let $L^2(\R^3\times\{1,2\},\LL(\HH_{\rm at}))$ be the Banach space of
(weakly) measurable functions $T:\R^3\times\{1,2\}\to\LL(\HH_{\rm at})$ with
$\int\|T(k)\|^2 dk<\infty$, and let
$$
   \|T\|_\omega := \left(\int\|T(k)\|^2(|k|^{-1}+1)dk \right)^{1/2}.
$$

\begin{lemma} \label{lem:elemesitmates}\label{lm:A1} If
$T\in L^2(\R^3\times\{1,2\},\LL(\HH_{\rm at}))$, then
\begin{eqnarray*}
\| a(T)(H_f + 1 )^{-1/2} \| &\leq &\left( \int\|T(k)\|^2 |k|^{-1} dk \right)^{1/2},  \\
\| a^*(T)(H_f + 1 )^{-1/2} \| &\leq &\|T\|_\omega. \\
\end{eqnarray*}
\end{lemma}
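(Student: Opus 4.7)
The plan is to prove both inequalities by a direct Fock-space computation, sector by sector, using Cauchy--Schwarz pointwise in the photon momenta and the symmetry of $n$-boson wave functions. The only subtlety over the scalar case is bookkeeping with the operator norm $\|T(k)\|$ in $\LL(\HH_{\rm at})$, which enters via the trivial pointwise bound $\|T(k)\phi\|_{\HH_{\rm at}}\leq \|T(k)\|\,\|\phi\|_{\HH_{\rm at}}$.

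First I would decompose $\Psi\in\HH$ as $\Psi=\sum_n\Psi_n$ with $\Psi_n\in\HH_{\rm at}\otimes S_n(\otimes^n\mathfrak{h})$, identified with a symmetric $\HH_{\rm at}$-valued function $\Psi_n(k_1,\ldots,k_n)$, and compute from the definition of $a^*(T)$ given in the paper that
$$
(a^*(T)\Psi)_n(k_1,\ldots,k_n)=\frac{1}{\sqrt{n}}\sum_{i=1}^n T(k_i)\Psi_{n-1}(k_1,\ldots,\hat{k_i},\ldots,k_n),
$$
and by taking adjoints,
$$
(a(T)\Psi)_{n-1}(k_1,\ldots,k_{n-1})=\sqrt{n}\int T(k)^{*}\Psi_n(k,k_1,\ldots,k_{n-1})\,dk.
$$

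For the bound on $a(T)$, I would apply Cauchy--Schwarz to the inner integral with the weight $|k|$, obtaining
$$
\|(a(T)\Psi)_{n-1}(k_1,\ldots,k_{n-1})\|^2\leq n\left(\int\|T(k)\|^2|k|^{-1}dk\right)\int|k|\,\|\Psi_n(k,\ldots)\|^2dk.
$$
Integrating over $k_1,\ldots,k_{n-1}$ and using the symmetry of $\Psi_n$ to identify $\int|k_1|\|\Psi_n\|^2\,dk=\frac{1}{n}\langle\Psi_n,H_f\Psi_n\rangle$, then summing over $n$, yields $\|a(T)\Psi\|^2\leq(\int\|T(k)\|^2|k|^{-1}dk)\langle\Psi,H_f\Psi\rangle$. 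Substituting $\Psi=(H_f+1)^{-1/2}\Phi$ and using $H_f(H_f+1)^{-1}\leq 1$ gives the first inequality.

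For the bound on $a^*(T)$, I would apply Cauchy--Schwarz to the sum over $i$ with weights $\alpha_i=|k_i|$:
$$
\left\|\sum_i T(k_i)\Psi_{n-1}(\hat{i})\right\|^2\leq\Big(\sum_i|k_i|\Big)\Big(\sum_j|k_j|^{-1}\|T(k_j)\|^2\|\Psi_{n-1}(\hat{j})\|^2\Big),
$$
expand the product into diagonal ($i=j$) and off-diagonal ($i\neq j$) parts, and integrate over $k_1,\ldots,k_n$. The diagonal contribution evaluates by symmetry to $n\|T\|_{L^2}^2\|\Psi_{n-1}\|^2$, while each off-diagonal pair, again by symmetry of $\Psi_{n-1}$, contributes $\int\|T(k)\|^2|k|^{-1}dk\cdot\frac{1}{n-1}\langle\Psi_{n-1},H_f\Psi_{n-1}\rangle$; there being $n(n-1)$ such pairs, they sum to $n(\int\|T(k)\|^2|k|^{-1}dk)\langle\Psi_{n-1},H_f\Psi_{n-1}\rangle$. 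Dividing by $n$ and summing over $n$ gives
$$
\|a^*(T)\Psi\|^2\leq\|T\|_\omega^2\,\langle\Psi,(H_f+1)\Psi\rangle,
$$
from which the second bound follows on setting $\Psi=(H_f+1)^{-1/2}\Phi$.

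The only step that requires care is the off-diagonal computation in the $a^*(T)$ estimate: one must verify that the apparent $n(n-1)$ growth cancels the $1/(n-1)$ produced by symmetry so that the remaining factor of $n$ is cancelled by the $1/n$ in front, yielding bounds independent of $n$. Everything else is routine Cauchy--Schwarz and symmetry of Fock-space wave functions.
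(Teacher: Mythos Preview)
Your argument is correct; both inequalities follow exactly as you outline, and the only delicate point---the cancellation of the $n(n-1)$ off-diagonal terms against the $1/(n-1)$ from symmetry---works as you say. The paper does not give its own proof of this lemma but simply refers to \cite{BFS1}; your direct Fock-space computation with weighted Cauchy--Schwarz is precisely the standard argument one finds there, so there is nothing to compare.
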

For a proof of this lemma see, e.g., \cite{BFS1}.

\begin{lemma} Suppose the function
$F : U  \to  \mathcal{L}(\HH_{\rm at}, L^2(\R^3 ; \HH_{\rm at}))$,
$s \mapsto F_s$ is uniformly bounded and suppose for a.e. $k \in
\R^3$ and all $s \in U$, there exists an operator
$F_s(k) \in \mathcal{L}(\HH_{\rm at})$ such that $F_s(k) \varphi =
(F_s \varphi)(k)$ for all $\varphi \in \HH_{\rm at}$. If for a.e. $k
\in \R^3$, the function $s \mapsto F_s(k) \in \mathcal{L}(H_{\rm
at})$ is analytic, then $F$ is analytic.
\end{lemma}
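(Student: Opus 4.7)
The plan is to invoke Kato's weak-to-strong analyticity criterion (Theorem III-3.12 of \cite{kat:per}): since $F$ is uniformly bounded, it suffices to verify that $s\mapsto \sprod{\psi}{F_s\varphi}$ is analytic for all $\varphi\in\HH_{\rm at}$ and all $\psi$ in a total subset of $L^2(\R^3;\HH_{\rm at})$. A convenient such subset consists of vectors of the form $\psi(k)=f(k)\eta$ with $f\in C_c(\R^3)$ and $\eta\in\HH_{\rm at}$, for which
$$
  g(s) := \sprod{\psi}{F_s\varphi} = \int_{\R^3}\overline{f(k)}\,\sprod{\eta}{F_s(k)\varphi}\,dk.
$$
By Hartogs' theorem, separate analyticity in each complex coordinate suffices, so I would fix all but one variable and treat $g$ as a function of a single complex variable on an open subset of $\C$, to which Morera's theorem is to be applied.

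The closed-loop condition is straightforward. On any triangle $T$ with $\overline{T}\subset U$, the Cauchy--Schwarz bound $\int|f(k)|\,\|F_s(k)\varphi\|\,dk\leq\|f\|_2\|F_s\varphi\|\leq M\|f\|_2\|\varphi\|$ provides uniform $L^1$-integrability on $\partial T\times\R^3$, so Fubini's theorem allows swapping the $k$- and $s$-integrations. The inner contour integral $\oint_{\partial T}\sprod{\eta}{F_s(k)\varphi}\,ds$ vanishes for a.e.\ $k$ by the hypothesized pointwise analyticity and Cauchy's theorem, hence $\oint_{\partial T}g(s)\,ds=0$.

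The main technical step is continuity of $g$. I would fix $s_0$ and a closed disc $\overline{D}\subset U$ centered at $s_0$; for $s$ in the concentric open disc $D'$ of half the radius, the Cauchy integral formula applied pointwise in $k$ to the $\HH_{\rm at}$-valued analytic function $s\mapsto F_s(k)\varphi$ yields the bound $\|F_s(k)\varphi\|\leq G(k):=\frac{1}{2\pi\delta}\oint_{\partial D}\|F_{s'}(k)\varphi\|\,|ds'|$, with $\delta=\operatorname{dist}(D',\partial D)>0$. Minkowski's integral inequality gives $\|G\|_{L^2(\R^3)}\leq\frac{|\partial D|}{2\pi\delta}M\|\varphi\|<\infty$, so $|f|\,G\in L^1(\R^3)$ is a dominating function. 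Dominated convergence then delivers continuity of $g$ at $s_0$, Morera's theorem furnishes analyticity of $g$ in one complex variable, and Hartogs' theorem together with Kato's criterion conclude the proof. The only mildly delicate ingredient is the construction of the dominating function $G$; everything else is routine.
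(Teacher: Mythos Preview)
Your proof is correct and follows essentially the same route as the paper: reduce to weak analyticity on simple tensors via Kato's Theorem~III-3.12, then verify the Morera condition by Fubini and the pointwise analyticity hypothesis. The paper's argument is in fact terser---it passes directly from $\oint_\gamma\sprod{h\otimes\varphi_1}{F_s\varphi_2}\,ds=0$ to analyticity without explicitly checking continuity of the scalar function---so your construction of the dominating function $G$ via the Cauchy integral formula and Minkowski's inequality fills a small gap that the paper leaves implicit.
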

\begin{proof}
Let $h \in L^2(\R^3)$ and  $\varphi_1, \varphi_2 \in \HH_{\rm at}$,
and suppose $\gamma$ is a nullhomotopic closed curve in $U$. Then
$$
\int_\gamma ( h \otimes \varphi_1 , F_s \varphi_2 ) ds = \int_\gamma
 \int \overline{h}(k) ( \varphi_1, F_s(k) \varphi_2) dk  ds
 = \int \overline{h}(k) \int_\gamma ( \varphi_1,  F_s(k) \varphi_2)
 d s dk = 0 \; ,
$$
where we interchanged the order of integration, which is  justified
since $F$ is uniformly bounded. It follows that $ s \mapsto ( h
\otimes \varphi_1, F_s \varphi_2 )$ is analytic. By linearity we
conclude that $s \mapsto ( \psi, F_s \varphi_2 )$ is analytic for
all $\psi$ in a dense linear subset of $\HH_{\rm at} \otimes
\mathfrak{h}$. This and the uniform boundedness imply strong
analyticity, see for example the remark following Theorem
3.12 of Chapter III in \cite{kat:per}.
\end{proof}

\begin{prop} \label{pro:analytredres} Let $R \ni s \mapsto
T(s)$ be an analytic family. Suppose there exists an isolated
non-degenerate eigenvector $E(s)$ with analytic projection operator
$P(s)$. Let $\overline{P}(s) := 1 - P(s)$ and let
\begin{align*}
  \Gamma := \{ (s,z) \in R \times \mathbb{C} \ | &\ (T(s) - z)\
  \text{is a bijection from}\ D(T(s))\cap\ran\overline{P}(s)\ \text{to}\
  \ran\overline{P}(s)\\
  &\text{with bounded inverse}\}.
\end{align*}
Then $\Gamma$ is open and $(s,z)\mapsto(T(s) - z)^{-1}\overline{P}(s)$ is
analytic on $\Gamma$.
\end{prop}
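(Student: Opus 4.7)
The approach is to use Kato's transformation function to reduce the problem to a Neumann-series inversion on a fixed closed subspace.

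Fix $(s_0, z_0) \in \Gamma$. Since $s \mapsto P(s)$ is analytic near $s_0$, Kato's construction (see \cite{kat:per}, Chap.~II) yields an analytic family $s \mapsto U(s)$ of bounded operators with bounded analytic inverse, satisfying $U(s_0) = I$ and $U(s) P(s_0) = P(s) U(s)$, hence $U(s) \overline{P}(s_0) = \overline{P}(s) U(s)$. Define $\hat T(s) := U(s)^{-1} T(s) U(s)$, an analytic family of operators reduced by the \emph{constant} projection $P(s_0)$; let $\hat T_\perp(s)$ denote its part on the fixed subspace $\HH_\perp := \ran \overline{P}(s_0)$. Because $U(s_0) = I$, the condition $(s_0, z_0) \in \Gamma$ says exactly that $\hat T_\perp(s_0) - z_0$ is bounded invertible on $\HH_\perp$.

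For $(s,z)$ near $(s_0, z_0)$, I would write
\begin{equation*}
\hat T_\perp(s) - z = \bigl(\hat T_\perp(s_0) - z_0\bigr)\bigl(I_{\HH_\perp} + A(s,z)\bigr),
\end{equation*}
where $A(s,z) := (\hat T_\perp(s_0) - z_0)^{-1}\bigl[(\hat T_\perp(s) - \hat T_\perp(s_0)) + (z_0 - z)\bigr]$, and invert the right factor by a Neumann series on a neighborhood where $\|A(s,z)\| < 1$. Undoing the conjugation then yields
\begin{equation*}
(T(s) - z)^{-1}\overline{P}(s) = U(s)\,(\hat T_\perp(s) - z)^{-1}\overline{P}(s_0)\,U(s)^{-1},
\end{equation*}
a composition of analytic $\LL(\HH)$-valued functions of $(s,z)$. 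This simultaneously shows that $(s_0, z_0)$ is an interior point of $\Gamma$ and that the resolvent in question is jointly analytic there.

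The main obstacle is verifying that $A(s,z)$ is a bounded, jointly analytic $\LL(\HH_\perp)$-valued function vanishing at $(s_0, z_0)$. The $z$-dependence is trivial; analyticity and boundedness in $s$ reduce to those of $(\hat T_\perp(s) - \hat T_\perp(s_0))(\hat T_\perp(s_0) - z_0)^{-1}$ in operator norm, which follow from $\hat T_\perp(\cdot)$ being an analytic family on the fixed space $\HH_\perp$ (cf.\ \cite{kat:per}, Chap.~VII, Thm.~1.7). In the type-(A) setting relevant to this paper, $D(\hat T_\perp(s))$ is in fact constant, and the required bound is immediate from strong analyticity of $\hat T_\perp(s) - \hat T_\perp(s_0)$ combined with the fact that $(\hat T_\perp(s_0) - z_0)^{-1}$ maps $\HH_\perp$ into that common domain.
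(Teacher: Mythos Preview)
Your approach is essentially the paper's: conjugate by Kato's transformation function to reduce to an analytic family on the fixed subspace $\ran\overline{P}(s_0)$, then invoke resolvent analyticity there; the paper just cites \cite{reesim:ana}, Thm.~XII.7, where you spell out the Neumann series. One small slip: with the factorization order in your display, $A(s,z)=(\hat T_\perp(s_0)-z_0)^{-1}\bigl[\hat T_\perp(s)-\hat T_\perp(s_0)+(z_0-z)\bigr]$ is a priori defined only on the domain, not on all of $\HH_\perp$---your final paragraph correctly analyzes $(\hat T_\perp(s)-\hat T_\perp(s_0))(\hat T_\perp(s_0)-z_0)^{-1}$ instead, so the display should read $\hat T_\perp(s)-z=(I_{\HH_\perp}+B(s,z))(\hat T_\perp(s_0)-z_0)$ with the resolvent on the right.
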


\begin{proof} Let $(s_0,z_0) \in \Gamma$. There exists in a
neighborhood of $s_0$ a bijective operator $U(s): \HH \to \HH$,
analytic in $s$, such that $U(s) P(s) U(s)^{-1} = P(s_0)$ and hence
$U(s) \overline{P}(s) U(s)^{-1} = \overline{P}(s_0)$
(\cite{reesim:ana} Thm. XII.12). The operator $\widetilde{T}(s) =
U(s) T(s) U(s)^{-1}$ is an analytic family. It leaves $\ran
\overline{P}(s_0)$ invariant and thus $\widetilde{T}(s) \restricted
\ran \overline{P}(s_0) : \ran \overline{P}(s_0) \cap
\mathcal{D}(\widetilde{T}(s)) \to \ran \overline{P}(s_0)$ is an
analytic family as well. By this and the fact that
$(\widetilde{T}(s_0) - z_0 )\restricted \ran \overline{P}(s_0)$  is
bijective with bounded inverse since $(s_0,z_0) \in \Gamma$, it
follows by \cite{reesim:ana} Thm. XII.7 that in a neighborhood of
 $(s_0,z_0)$, $(\widetilde{T}(s) - z ) \restricted
\ran \overline{P}(s_0)$ is bijective with bounded inverse and
$(\widetilde{T}(s) - z )^{-1} \overline{P}(s_0)$ is analytic in both
variables. Thus in this neighborhood also the function $(T(s)-z)
\restricted \overline{P}(s)$ is bijective with bounded inverse and $(T(s) -
z )^{-1}\overline{P}(s)$ is an analytic function of two variables.
\end{proof}

\begin{theorem} Suppose the assumptions of Lemma \ref{prop:balls} hold.
Then in the norm  of $\mathcal{L}(\HH_{\rm red})$,
$$
\lim_{n \to \infty} H^{(n)}(z_\infty) = \lambda H_f  \; .
$$
for some $\lambda\in\C$.
\end{theorem}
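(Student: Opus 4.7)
Write $H^{(n)}(z_\infty) = H^{(n)}_{0,0} + W^{(n)}$, where $H^{(n)}_{0,0} = w^{(n)}_{0,0}(H_f)$ is the diagonal part and $W^{(n)}$ the off-diagonal part of the parameterization $H^{(n)} = H(w^{(n)})$. By Proposition \ref{H-is-bounded} (ii) combined with Lemma \ref{cor:bcfs}, $\|W^{(n)}\| \leq \xi\gamma_n$, which tends to zero geometrically. Moreover, since $w^{(n)}_{0,0}(H_f) - \lambda H_f$ is a spectral function of $H_f$ on $\HH_{\rm red}$, its operator norm equals $\sup_{r\in[0,1]}|w^{(n)}_{0,0}(r) - \lambda r|$. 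Thus the problem reduces to finding $\lambda \in \C$ such that $w^{(n)}_{0,0}(r) \to \lambda r$ uniformly on $[0,1]$.

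The renormalization step takes the form $w^{(n+1)}_{0,0}(r) = \rho^{-1}v^{(n)}_{0,0}(\rho r)$, where $v^{(n)}_{0,0} = w^{(n)}_{0,0} + \Delta_n$ is the $(0,0)$-kernel of the Feshbach-transformed operator $F_{\chi_\rho}(H^{(n)}, H^{(n)}_{0,0})$. The increment $\beta_{n+1} - \beta_n \leq C_\beta\gamma_n^2/\rho$ from Theorem \ref{bcfssigal}, traced to the explicit Wick expansion of the Feshbach correction, yields $\|\Delta_n'\|_\infty \leq C\gamma_n^2/\rho$. Differentiating and iterating the relation $(w^{(n+1)}_{0,0})'(r) = (w^{(n)}_{0,0})'(\rho r) + \Delta_n'(\rho r)$ gives
\begin{equation*}
(w^{(n)}_{0,0})'(r) = (w^{(0)}_{0,0})'(\rho^n r) + \sum_{k=0}^{n-1}\Delta_k'(\rho^{n-k}r).
\end{equation*}
Two consequences follow. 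First, $\lambda_n := (w^{(n)}_{0,0})'(0) = (w^{(0)}_{0,0})'(0) + \sum_{k=0}^{n-1}\Delta_k'(0)$ converges to some $\lambda \in \C$, because $\sum_k\|\Delta_k'\|_\infty < \infty$ by the geometric decay of $\gamma_k$. Second, for each fixed $r \in (0,1]$, $(w^{(n)}_{0,0})'(r) \to \lambda$ pointwise: the first term tends to $(w^{(0)}_{0,0})'(0)$ by continuity at zero, and the sum tends to $\sum_{k\geq 0}\Delta_k'(0)$ by dominated convergence over $\N$, each summand being majorized uniformly in $n$ by the summable sequence $2\|\Delta_k'\|_\infty$ while tending to $\Delta_k'(0)$ as $\rho^{n-k}r\to 0$.

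Next I would verify that $E^{(n)}(z_\infty) := w^{(n)}_{0,0}(0) \to 0$. The recursion \eqref{eq:alphan3} combined with the a priori bound $|E^{(m)}(z_\infty)| \leq \rho/2$ valid for all $m$ (since $z_\infty \in \bigcap_m U_m$) gives, upon iterating backwards,
\begin{equation*}
|E^{(N)}(z_\infty)| \leq \rho^{n-N}|E^{(n)}(z_\infty)| + \sum_{j=1}^{n-N}\rho^j\alpha_{N+j};
\end{equation*}
sending $n \to \infty$ eliminates the first term, leaving $|E^{(N)}(z_\infty)| \leq \sum_{j\geq 1}\rho^j\alpha_{N+j}$, whose tail vanishes as $N\to\infty$ by the geometric decay of $\alpha_k$ (Theorem \ref{bcfssigal}).

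Finally, to upgrade pointwise to uniform convergence I would invoke Arzel\`a--Ascoli. The family $\{r\mapsto w^{(n)}_{0,0}(r)\}_n$ is uniformly Lipschitz on $[0,1]$, with Lipschitz constant $\leq 1+\beta_\infty$ (where $\beta_\infty = \lim_n \beta_n$ is finite), hence equicontinuous. By bounded convergence, $w^{(n)}_{0,0}(r) - E^{(n)}(z_\infty) = \int_0^r (w^{(n)}_{0,0})'(t)\,dt \to \lambda r$ pointwise; together with $E^{(n)}(z_\infty) \to 0$, this yields pointwise convergence $w^{(n)}_{0,0}(r) \to \lambda r$ on the compact interval $[0,1]$. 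Equicontinuity plus pointwise convergence on a compact set forces uniform convergence, completing the proof via $\|H^{(n)}(z_\infty) - \lambda H_f\| \leq \|W^{(n)}\| + \sup_r|w^{(n)}_{0,0}(r) - \lambda r| \to 0$. The main obstacle is the bound $\|\Delta_n'\|_\infty \leq C\gamma_n^2/\rho$ with the \emph{quadratic} $\gamma_n^2$ rather than merely $\gamma_n$: this quadratic improvement is what ensures $\{\Delta_k'\}$ is summable in $\|\cdot\|_\infty$, which in turn is what permits the dominated convergence argument for the series above. This quadratic bound is inherent in the Feshbach-map structure and is visible already in the proof of Theorem \ref{bcfssigal}.
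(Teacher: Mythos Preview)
Your proof is correct and follows the same overall architecture as the paper: split $H^{(n)}(z_\infty)$ into the off-diagonal piece $W^{(n)}$ (which dies by $\gamma_n\to 0$), the constant $E^{(n)}(z_\infty)$ (which dies by iterating \eqref{eq:alphan3}), and the remainder $w^{(n)}_{0,0}(r)-E^{(n)}(z_\infty)$, for which one must show uniform convergence to $\lambda r$. Your treatment of $E^{(n)}(z_\infty)\to 0$ is identical to the paper's.

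The only genuine difference is in how you handle the diagonal part. The paper works directly with $T^{(n)}(r):=w^{(n)}_{0,0}(r)-w^{(n)}_{0,0}(0)$, iterates the functional recursion $T^{(n)}(r)=\rho^{-1}T^{(n-1)}(\rho r)+e^{(n-1)}(r)$, and proves \emph{uniform} convergence to $\lambda r$ by a bare-hands $\epsilon$-argument: split the resulting sum into a tail (controlled by $\sum_{k>K}C\gamma_k^2$) and finitely many head terms (each handled by the mean value theorem and $e^{(k)}(0)=0$). You instead differentiate first, iterate the derivative recursion, obtain pointwise convergence of $(w^{(n)}_{0,0})'$ via dominated convergence over $\N$, integrate back, and then invoke equicontinuity (uniform Lipschitz bound $1+\beta_\infty$) to upgrade pointwise to uniform. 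Both routes rest on the same crucial input, namely the quadratic bound $\|e^{(n)}\|_{C^1}\leq C\gamma_n^2$ extracted from the BCFS proof of Theorem~\ref{bcfssigal}; your path is slightly more conceptual (and makes the role of equicontinuity explicit), while the paper's is more self-contained and avoids the extra passage through the derivative.
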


\begin{proof}
We recall the notations $H^{(n)}(z_\infty) = H(w^{(n)}(z_\infty))$
and  $E^{(n)}(z_\infty) = w^{(n)}_{0,0}(z_\infty,0)$. Using the
decomposition
$$
H^{(n)}(z_\infty) =  \left( H^{(n)}(z_\infty) -
w^{(n)}_{0,0}(z_\infty,H_f) \right) + \left(
w^{(n)}_{0,0}(z_\infty,H_f) - E^{(n)}(z_\infty) \right) +
E^{(n)}(z_\infty)\; ,
$$
the theorem will follow from Steps~1 and 2 below.

\bigskip
\noindent\underline{Step 1}: $\lim_{n \to \infty}
\| H^{(n)}(z_\infty) - w^{(n)}_{0,0}(z_\infty,H_f)  \| = 0$ and
$\lim_{n\to \infty} E^{(n)}(z_\infty) = 0$.

From Lemma~\ref{cor:bcfs} we know that 
\begin{eqnarray} \label{eq:convv5}
H^{(n)}(z) - \rho^{-1} E^{(n-1)}(z) \in
\BB(\alpha_n,\beta_n,\gamma_n) \; ,
\end{eqnarray}
for  $z \in U_n$. By \eqref{eq:westimate} this implies that
$$
\| H^{(n)}(z_\infty) - w^{(n)}_{0,0}(z_\infty,H_f)  \|  \leq \|
w^{(n)}(z_\infty) - w^{(n)}_{0,0}(z_\infty)  \|_{\mu,\xi} \leq
\gamma_n  \to 0 \quad (n \to \infty).
$$

By \eqref{eq:convv5},
\begin{eqnarray} \label{eq:convvv1}
| E^{(n)}(z) | \leq \rho \alpha_{n+1} + \rho | E^{(n+1)}(z) | \quad
, \quad z \in U_n  \; .
\end{eqnarray}
Iterating \eqref{eq:convvv1}, we find
$$
|E^{(n)}(z) | \leq \sum_{k=1}^m \rho^k \alpha_{n +k} + \rho^m |
E^{(n+m)}(z)|,
$$
which yields,
$$
|E^{(n)}(z_{n+m}) | \leq \sum_{k=1}^\infty \rho^k \alpha_{n+k}.
$$
Since $E^{(n)}$ is continuous and $\lim_{n \to \infty} z_n =
z_\infty$, we arrive at
$$
|E^{(n)}(z_\infty)| \leq \sum_{k=1}^{\infty} \rho^k \alpha_{n+k} \to
0 \quad , \quad  (n \to \infty) \; .
$$

\bigskip
\noindent \underline{Step 2}:
There exists a $\lambda\in\C$ such that
$$
\lim_{n \to \infty} ( w^{(n)}_{0,0}(z_\infty , r) -
w^{(n)}(z_\infty,0)) = \lambda r \; ,
$$
 uniformly in $0 \leq r \leq 1$.
\vspace{0.5cm}

To abbreviate the notation, we set $T^{(n)}(z_\infty,r) :=
w^{(n)}_{0,0}(z_\infty , r) - w^{(n)}(z_\infty,0)$.
 From \cite{bacchefrosig:smo} (3.105-3.107), we have
\begin{eqnarray} \label{eq:convv1}
T^{(n)}(z_\infty, r ) = \rho^{-1} T^{(n-1)}(z_\infty,\rho r) +
e^{(n-1)}(z_\infty, r) \; ,
\end{eqnarray}
with $ e^{(n-1)}(z_\infty,0) =0$  and
\begin{eqnarray} \label{eq:convv2}
 \sup_{r \in
[0,1]} \Big(|\partial_r e^{(n)}(z_\infty,r)|  +
 | e^{(n)}(z_\infty,r)|\Big) \leq C \gamma_n^2.
 \end{eqnarray}
Iterating \eqref{eq:convv1}, we arrive at
\begin{eqnarray} \label{eq:convv3}
T^{(n)}(z_\infty,r) = \rho^{-n} T^{(0)}(z_\infty,\rho^n r) + \sum_{k=0}^{n-1}
 \rho^{-(n-1-k)} e^{(k)}(z_\infty,\rho^{n-1-k} r) \; .
\end{eqnarray} 
To prove Step 2 we now show that, uniformly in $r \in [0,1]$,
$$
\lim_{n \to \infty} T^{(n)}(z_\infty, r) = r  \left( \partial_r
T^{(0)}(z_\infty,0) + \sum_{k=0}^\infty \partial_r e^{(k)}(z_\infty,
0) \right).
$$
Note that the series on the right hand side
converges by \eqref{eq:convv2}. Given $\epsilon > 0$ we choose $K$ so large that
\begin{equation} \label{eq:convv4}
 \sum_{k=K}^\infty C \gamma_k^2 \leq
\epsilon.
\end{equation}
By \eqref{eq:convv3} and the triangle inequality, we find for $n
\geq K$, (suppressing $z_\infty$)
\begin{eqnarray*}
\lefteqn{ \left| T^{(n)}( r) - r  \left( \partial_r T^{(0)}(0) -
\sum_{k=0}^\infty \partial_r e^{(k)}(
0) \right) \right| } \\
&\leq& \left| \rho^{-n} T^{(0)}(\rho^n r ) - r \partial_r T^{(0)}(
0) \right| + \sum_{k=0}^K \left| \rho^{-(n-1-k)}
e^{(k)}(\rho^{n-1-k} r) - r
\partial_r e^{(k)}(0) \right| \\
&& + \sum_{k=K+1}^\infty | \rho^{-(n-1-k)} e^{(k)}(\rho^{n-1-k} r )
| + \sum_{k=K+1}^\infty | r \partial_r e^{(k)}(0)| \; .
\end{eqnarray*}
The first two terms on the right hand side converge to zero as $n$
tends to infinity because $T^{(n)}(0)=0$  and $e^{(k)}(0)=0$. The
last term on the right hand side is bounded by $\epsilon$, which
follows from Eqns. \eqref{eq:convv2} and \eqref{eq:convv4}. Using
again \eqref{eq:convv2} and \eqref{eq:convv4} we see that the first
term on the last line is bounded by $\epsilon$ as well, since, by the mean
value theorem, $ \alpha^{-1} | e^{(n)}(\alpha r) | \leq \sup_{\xi \in
[0,1]} | {e^{(n)}}'(\alpha \xi )|  r$ for $\alpha,r \in [0,1]$.
\end{proof}

\end{document}